\documentclass[11pt,letterpaper]{article}

\usepackage[margin=1in]{geometry}
\usepackage{amsmath}
\usepackage{amsthm}
\usepackage{amsfonts}
\usepackage{amssymb}
\usepackage{mathrsfs}
\usepackage{mathtools} 
\usepackage{xfrac}
\usepackage{booktabs}
\usepackage{stmaryrd} 
\usepackage{authblk}
\newcommand*{\email}[1]{\href{mailto:#1}{\nolinkurl{#1}}}

\usepackage{newtxtext}
\usepackage{lipsum} 
\usepackage[dvipsnames,svgnames]{xcolor}
\usepackage[strict]{changepage} 
\usepackage{framed} 
\usepackage{tcolorbox}
\usepackage{todonotes}

\usepackage[backref=page]{hyperref}
\usepackage[nameinlink]{cleveref}
\usepackage[normalem]{ulem}
\usepackage{braket}
\usepackage{caption}
\usepackage{subcaption}
\usepackage[inline]{enumitem}
\usepackage{algorithm}
\usepackage{algpseudocode}
\usepackage{amsthm}
\newtheorem{definition}{Definition}[section]
\newtheorem{theorem}{Theorem}[section]
\newtheorem{lemma}{Lemma}[section]
\newtheorem{corollary}{Corollary}[section]

\newtheorem{claim}{Claim}[section]
\usepackage{enumitem}
\setlist[itemize,1]{leftmargin=1em}
\setlist[itemize,2]{leftmargin=1em}
\setlist[itemize,3]{leftmargin=1em}
\tcbuselibrary{most}

\def\PPAD{\textup{\textsf{PPAD}}}
\def\FIXP{\textup{\textsf{FIXP}}}
\def\SDFP/{\textup{\textsf{SDFP}}}

\newcommand{\PureCircuit}{\textup{\textsc{Pure-Circuit}}}

\newcommand{\NOT}{\ensuremath{\mathsf{NOT}}}
\newcommand{\PURIFY}{\ensuremath{\mathsf{PURIFY}}}

\newcommand{\NOR}{\ensuremath{\mathsf{NOR}}}
\newcommand{\NPURIFY}{\ensuremath{\mathsf{NPURIFY}}}

\begin{document}

\title{Complexity of Autobidding Equilibria in GSP Sponsored Shopping}
\author[*]{Kristoffer Arnsfelt Hansen}
\affil[*]{Aarhus University\\ \email{arnsfelt@cs.au.dk}}
\author[$\ddagger$]{Yuhao Li}
\affil[$\ddagger$]{Columbia University\\ \email{yuhaoli@cs.columbia.edu} }
\author[$\dagger$]{Xinhao Nie}
\affil[$\dagger$]{Aarhus University\\ \email{nie@cs.au.dk}}

\date{}

\maketitle
\thispagestyle{empty}

\begin{abstract}
We revisit autobidding equilibria in GSP sponsored shopping auctions, recently proposed and studied 
in~\cite{dutting2026autobidding}. On the existence side, we show that autobidding equilibria under both 
budget constraints (ABE-B) and return-on-investment constraints (ABE-R) always exist and, moreover, always admit 
equilibria with polynomial-size support over the allocation space. On the computational side, we show that computing 
an exact equilibrium is in $\FIXP$, and that computing an approximate equilibrium is in $\PPAD$. We also prove that 
computing an approximate equilibrium is $\PPAD$-hard for both ABE-B and ABE-R. Together, these results provide a 
fairly comprehensive picture of the computational complexity of autobidding equilibria in GSP sponsored shopping auctions.
\end{abstract}

\clearpage

\section{Introduction}
As commerce continues to move to digital marketplaces, online shopping platforms have increasingly turned to sponsored product listings to monetize user traffic. Such listings appear on major commerce platforms, including Amazon, Google Shopping, Walmart, Alibaba, and Target. When a user searches for a product, the platform may display a small number of sponsored items in prominent slots alongside organic results. The auctions that determine these allocations, referred to as sponsored shopping auctions, have become an important component of digital commerce.

Sponsored shopping is closely related to the classical sponsored search model, in which advertisers compete for multiple advertising slots and ads are displayed in decreasing order of their bids. The key distinction is that sponsored shopping is inherently a multi-item setting. Whereas in sponsored search an advertiser typically bids for a single link or landing page, in sponsored shopping a retailer may have a large catalog of related products and may wish to promote several of them for the same query. For example, a phone seller may wish to promote several different models for the query ``phone.'' Consequently, a single bidder may win multiple slots in the same auction, each with a different item. The allocation is no longer described solely by matching bidders to positions; rather, it assigns bidder-item pairs to positions, with values depending on both the bidder and the item being displayed.

A canonical auction format in advertising markets is the generalized second-price auction (GSP). In a GSP sponsored shopping auction, the platform ranks all items by their bids and assigns the highest-ranked items to the available slots. The principal departure from the classical sponsored search setting lies in the payment rule. Because a bidder may submit multiple items, the next lower-ranked bid may come from another item owned by the same bidder. GSP \emph{without self-pricing} avoids this artificial self-competition: an item is priced by the next relevant bid in the ranking that belongs to a different bidder. Thus, a bidder's own items cannot set one another's prices. This payment rule is consistent with the auction design described by major advertising platforms such as Google Ads, where eligible keywords from the same account do not compete with one another in the auction~\cite{googleads_auction}.

This practical setting gives rise to a challenging theoretical model: the GSP payment rule becomes substantially more combinatorial when multiple items may belong to the same bidder. Motivated by the emergence of sponsored shopping, D\"utting, Li, Paes Leme, Spendlove, and Teng~\cite{dutting2026autobidding} formally introduced this auction model and studied it through the lens of autobidding, which reflects the structure of modern advertising markets~\cite{ABBBDFGLLMMMMLPP24}. In particular, they focused on the canonical \emph{uniform bidding strategy}: rather than submitting a separate bid for each item, each bidder chooses a single multiplier and bids on each item by scaling its value by that multiplier. They adopted autobidding equilibrium~\cite{Li2024} as their solution concept and showed that an equilibrium always exists and that every equilibrium achieves at least half of the optimal welfare. These results provide theoretical foundations for the model, showing that its additional combinatorial structure does not preclude meaningful equilibrium and welfare guarantees.



While~\cite{dutting2026autobidding} served as a first step toward understanding sponsored shopping, many questions remain open and several directions call for further exploration. Their key technical contribution is a new perturbation technique that yields a clean proof of equilibrium existence even in the more complicated setting of GSP sponsored shopping auctions. However, this technique is inherently nonconstructive and reveals little about the computational structure of the resulting equilibria. In particular, it neither yields an efficient algorithm for computing an exact or approximate autobidding equilibrium nor establishes a meaningful complexity upper bound for the corresponding search problem. Indeed, the gap is more basic: the existence proof does not imply that an exact equilibrium can be supported on only polynomially many pure allocations, nor does it even place the problem of computing an approximate equilibrium in $\textsf{TFNP}$. This leaves open the central question we study in this paper: what is the computational complexity of autobidding equilibria in GSP sponsored shopping auctions?

Equilibrium computation in autobidding has been extensively studied over the past few years~\cite{conitzer2022pacing,chen2024complexity,chen2021throttling,conitzer2022multiplicative,chen2025constant,Li2024,FilosRatsikas2024,yan2026pacing,huang2026pacing}, all in settings with simple first-price or second-price auctions. A key distinction is that existing hardness results for pacing/autobidding equilibria in second-price auctions~\cite{chen2024complexity,Li2024,chen2025constant} rely on markets with many parallel auctions, where computational difficulty arises from the interaction across auctions. In contrast, exact pacing equilibria for second-price auctions are tractable when the number of auctions is constant~\cite{huang2026pacing}. Our setting therefore asks whether computational intractability can already arise within a \emph{single} GSP sponsored shopping auction. In this regime, the source of difficulty cannot come from interactions across multiple auctions; any computational hardness must instead arise from the richer combinatorial structure within a single sponsored shopping auction.







\subsection{Our results}


In this paper, we provide an arguably comprehensive resolution of these questions by characterizing the computational complexity of autobidding equilibria in GSP-sponsored shopping auctions in both the budget-constrained and return-on-investment (ROI)-constrained settings, for both exact and approximate equilibria.\footnote{We note that \cite{dutting2026autobidding} studies sponsored shopping auctions only in the ROI-constrained setting, whereas we address the complexity questions for both the budget-constrained and ROI-constrained settings.}

Our first result is a complexity upper bound for computing an \emph{exact} autobidding equilibrium. Here the relevant complexity class is $\FIXP$, introduced by Etessami and Yannakakis~\cite{etessami2010complexity} to capture the complexity of
exact fixed point computation for functions represented by algebraic circuits. A canonical $\FIXP$-complete problem is computing an exact Nash equilibrium in a three-player game~\cite{etessami2010complexity}.
\begin{theorem}
\label{thm:fixp-membership}
\label{thm:Budget-fixp-membership}
\label{thm:ROI-fixp-membership}
For GSP sponsored shopping auctions, computing an exact autobidding equilibrium is in \FIXP\ for each of the two constraint models: budget constraints (ABE-B) and ROI constraints (ABE-R).
\end{theorem}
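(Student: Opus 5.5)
We reduce to a Brouwer fixed point of a function computed by an algebraic circuit over $\{+,-,\times,\div,\max,\min\}$ with rational constants, which is exactly what membership in $\FIXP$ asks for. The model, following~\cite{dutting2026autobidding}, has each bidder $i$ choose a uniform multiplier $\alpha_i$ and bid $\alpha_i v_{ij}$ on each of its items $j$. The GSP outcome is discontinuous at ties, so an autobidding equilibrium is a pair consisting of multipliers and a \emph{randomized} tie-breaking (a distribution over allocations consistent with the ranking). The equilibrium conditions are then:
\begin{itemize}
\item each bidder's constraint (budget, respectively ROI) holds in expectation;
\item each bidder's multiplier is maximal, subject to a complementarity condition: either $\alpha_i$ is at its upper bound, or the constraint binds.
\end{itemize}

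\textbf{Step 1 (compact allocation representation).} The first step replaces the distribution over exponentially many pure allocations with a polynomial-size description. The natural candidate is a marginal "position-assignment" vector $x_{(i,j),k}\in[0,1]$, giving the probability that item $(i,j)$ sits in slot $k$, together with the expected payments. The key fact to prove is that in GSP without self-pricing, the expected payment of item $(i,j)$ in slot $k$ can be written as a function of these marginals. The price is the next bid from a different bidder, so it depends only on which other bidders' items rank below. Within a tie class, I would restrict to a structured family of tie-breaks: for instance, a random priority order over bidders within each tied bid level, parameterised by polynomially many variables. The expected allocation and payment are then piecewise-rational functions of $(\alpha,\text{tie parameters})$ computable by a circuit. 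This step is where I expect the main obstacle to lie. One must show that restricting to such a parametric family of tie-breaks loses no equilibria; this is presumably what the polynomial-support existence result earlier in the abstract provides. One must also show that the payment rule, which has the combinatorial "next other bidder" structure, admits a circuit with $\max/\min$ gates. Sorting networks built from $\max/\min$ gates compute the ranked bids, and the "next lower bid from a different bidder" can be obtained by taking, for each bidder $i$, the sorted list of bids of all other bidders.

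\textbf{Step 2 (fixed-point map).} Define $F$ on a compact convex domain $D=[\underline{\alpha},\overline{\alpha}]^n\times\Delta$, where $\Delta$ is the tie-parameter polytope. The upper bound on multipliers can be taken as $1$ for budgets and large for ROI; a positive lower bound is chosen small. For each bidder, the multiplier update is
\[
\alpha_i' = \min\bigl\{\overline{\alpha},\max\{\underline{\alpha},\ \alpha_i + \text{slack}_i\}\bigr\},
\]
where $\text{slack}_i$ is $B_i-\mathrm{pay}_i$ for budgets, or $\mathrm{value}_i-\mathrm{pay}_i$ for ROI, evaluated at the current tie-breaking. The tie-breaking parameters are updated by a Nash-like improvement map, in the style of the FIXP construction for games. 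Each bidder is viewed as "preferring" the tie-break outcome implied by a limit perturbation. Alternatively, I would use the perturbation argument of~\cite{dutting2026autobidding} made explicit: replace ties by a continuous smoothing at scale $\epsilon$. However, $\epsilon\to 0$ is not allowed in $\FIXP$, so I prefer the direct map. The tie-breaking coordinates should move toward any allocation consistent with the current ranking of bids, via projection onto the feasible polytope. This projection is continuous because the face structure changes only at ties, where it is correctly enlarged.

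\textbf{Step 3 (fixed points are equilibria).} At a fixed point, the projected tie-breaking coordinates are consistent with the bid ranking, and $\alpha$ satisfies the complementarity conditions: slack is zero or $\alpha_i$ sits at a bound. Finally, one rules out a spurious lower bound $\underline{\alpha}$ binding. For budgets, this holds because a sufficiently small multiplier yields payment below $B_i$; for ROI, because payment is at most value when the multiplier is at most $1$. Membership in $\FIXP$ then follows, since the equilibrium is read off by a projection of the fixed point, which $\FIXP$ permits.
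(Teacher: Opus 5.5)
Your Step 3 matches the paper's argument: clipped updates $\alpha_i+B_i-P_i$ or $\alpha_i+V_i-P_i$, with the lower endpoint ruled out because a low multiplier forces low payments. Steps 1 and 2, however, contain genuine gaps.

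\textbf{Step 1: the allocation representation is never constructed.} Slot marginals $x_{(i,j),k}$ do not determine GSP payments without self-pricing. Take items $A_1,A_2$ of bidder A and $B_1,B_2$ of bidder B, all tied at bid $b$, a third bidder bidding $d<b$ below them, and unit CTRs. The uniform mixtures of $\{A_1A_2B_1B_2,\,B_1B_2A_1A_2\}$ and of $\{A_1B_2B_1A_2,\,B_1A_2A_1B_2\}$ have identical slot marginals. Yet bidder A pays $b+d$ in expectation under the first and $(3b+d)/2$ under the second.

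Your fallback, a random priority order over bidders within each tie class ``parameterised by polynomially many variables'', restates the difficulty rather than solving it. A distribution over orders is exponential-dimensional, and the tie classes themselves change with $\alpha$. Appealing to the polynomial-support result is circular, since in the paper that result is a \emph{consequence} of the \FIXP\ circuit. You also do not need a ``no equilibria lost'' lemma: Brouwer supplies a fixed point, and you only need every fixed point to be an equilibrium.

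The missing idea is the paper's construction. It groups each bidder's equal-value items and inserts the groups in a fixed, $\alpha$-independent order. Each insertion step gets a probability vector over positions, supported on legal positions at fixed points. These vectors are coupled by a common quantile: the sorted cumulative entries yield $O(T^2)$ intervals, each selecting one explicit permutation with weight equal to the interval's length.

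\textbf{Step 2: the map is not continuous, hence not a \FIXP\ function.} Projecting tie coordinates onto the polytope of tie-breaks consistent with the current ranking jumps in $\alpha$. For two items of different bidders, the feasible set for the probability that the first precedes the second is $\{1\}$, $[0,1]$ or $\{0\}$ according to the sign of $b_1-b_2$. Enlarging the set at ties gives only a closed-graph correspondence, which is Kakutani's setting, not Brouwer's.

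The same defect affects your payment step. The highest bid of another bidder below a given item is a discontinuous function of the bids: it drops from about $b$ to the next lower bid once a competing bid passes $b$. So sorting networks over bids cannot compute it; it must be read off the allocation itself.

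The paper's remedy uses the Heaviside pseudogates of Filos-Ratsikas et al. Tie variables are kept in the state and updated continuously, e.g.\ $y'=\min\{1,\max\{0,y+b_s-b_t\}\}$, so that $y\in H(b_s-b_t)$ is enforced only at fixed points. The insertion probabilities, position tracking, next-other-bidder price and CTR lookup are all built from such gates. They are exact at fixed points whenever the weight $w_h$ is positive. Your ``Nash-like improvement map'' is too vague to supply this.

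A minor point: for ROI the lower endpoint must be $1$, as in the definition, not a small positive number. Otherwise a bidder with $V_i=P_i=0$ can sit below $1$ at a fixed point.
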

We then exhibit instances of GSP sponsored shopping auctions, under budget constraints and ROI constraints respectively, in which the autobidding equilibria must be \emph{irrational}. This motivates the natural relaxation to approximate autobidding equilibria.

For our second result, we show that computing an \emph{approximate} autobidding equilibrium is $\PPAD$-complete~\cite{papadimitriou1994complexity}, which famously captures for example the complexity of computing two-player Nash equilibria~\cite{DGP09,CDT09}. Our hardness results for both constraint models follow from
reductions from the $\PPAD$-complete \PureCircuit\ problem
introduced by Deligkas et al.~\cite{deligkas2024pure}.
\begin{theorem}
\label{thm:ppad-results}
\label{thm:Budget-ppad}
\label{thm:ROI-ppad}
\label{thm:Budget-ppad-hard}
\label{thm:ROI-ppad-hard}
For GSP sponsored shopping auctions, the following membership and hardness results hold for both ABE-B and ABE-R. Computing a $\gamma$-approximate autobidding equilibrium is in \PPAD\ when $\gamma$ is inverse-exponential in the input size. Moreover, computing a $\gamma$-approximate autobidding equilibrium is \PPAD-hard when $\gamma$ is inverse-polynomial in the input size.
\end{theorem}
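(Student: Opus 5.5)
The plan has two independent parts: membership in \PPAD, and \PPAD-hardness.

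\emph{Membership.} I would build on the algebraic circuit behind \Cref{thm:fixp-membership}. I expect that circuit to be a map $F$ on a compact convex domain of multiplier profiles, or of multipliers together with a mixed allocation over orderings. Its fixed points are exact autobidding equilibria under uniform bidding with tie-breaking lotteries, and it uses only $+,-,\times,\max,\min$ and division by positive constants.

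The standard route of Etessami--Yannakakis then applies. If $F$ is Lipschitz with a polynomially bounded number of bits, an inverse-exponential \emph{weak} fixed point $\|F(x)-x\|\le\delta$ can be found in \PPAD; for instance, evaluate $F$ on a grid and run a Sperner/Brouwer reduction. It remains to show that a weak $\delta$-fixed point yields a $\gamma$-approximate equilibrium.

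For this I would write the equilibrium conditions (budget or ROI feasibility, and the complementary-slackness style optimality of the multiplier) as continuous residuals of $x$. I would then show each residual is bounded by $\mathrm{poly}(\text{input})\cdot\delta$, choosing $\delta=\gamma/2^{\mathrm{poly}}$.

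The delicate point is that GSP without self-pricing gives allocation and payment correspondences that are discontinuous in the bids. This is exactly why the \FIXP\ construction must encode the tie-breaking distribution explicitly, with polynomial support, as the existence result guarantees. Once that is done, the encoded map is continuous, and the approximation transfer is routine Lipschitz bookkeeping.

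\emph{Hardness.} I would reduce from \PureCircuit\ of Deligkas et al.\ with gates \NOT/\NOR/\PURIFY, or the \NPURIFY\ variant, over values $\{0,1,\bot\}$. Each circuit variable becomes a designated bidder whose multiplier $\alpha_v$ encodes the value: $\alpha_v$ below a threshold $a$ means $0$, above $b$ means $1$, and in between means $\bot$.

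Each gate is realized by a constant-size gadget inside a \emph{single} GSP auction. A gate bidder owns several items whose values are tuned so that, in the ranking, they compete against the input bidders' items. The key design lever is the no-self-pricing rule: an item's price is set by the next \emph{other} bidder's bid. This lets an input bidder's multiplier control the price paid by the output bidder, and hence whether the output bidder's budget or ROI constraint binds. That in turn forces the output multiplier high or low.

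For \NOR, the output item is priced by the larger of the two input bids; that is, the price is determined by whichever input ranks just below. For \PURIFY, I would use two output bidders with staggered thresholds.

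Because everything lives in one auction, gadgets must not interfere. I would separate them using slot position effects and widely different value scales, so that each gadget's items occupy a disjoint block of consecutive slots and only the intended neighbours ever set prices. To make pricing interactions local, I would add dummy bidders with fixed bids acting as separators between blocks.

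Finally I would verify that in any $\gamma$-approximate equilibrium, with $\gamma$ inverse-polynomial, every gadget satisfies its \PureCircuit\ constraint. This requires polynomial gaps between $a$, $b$ and the gadget thresholds, so that approximation slack cannot move a variable across the intended regions.

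\emph{Main obstacle.} I expect the main obstacle to be this gadget design in a single auction, especially ensuring that slot position discounts and cross-gadget price setting stay confined. The budget and ROI cases need separate but analogous gadgets.
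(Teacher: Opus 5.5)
\textbf{Membership.} Your transfer step, from a weak $\delta$-fixed point to a $\gamma$-approximate equilibrium, is where the argument breaks. In this paper a $\gamma$-approximate ABE relaxes \emph{only} maximal pacing: bid-consistency (a) and budget/ROI feasibility (b) must hold exactly. A weak fixed point of the \FIXP\ circuit gives at best approximate versions of (a) and (b), and bounding residuals by $\mathrm{poly}\cdot\delta$ cannot make them exact. Moreover, that circuit is continuous only because the comparisons are Heaviside pseudogates, whose input--output relation is guaranteed \emph{only at exact fixed points}. At an almost-fixed point three things can go wrong:
\begin{itemize}
\item the insertion marginals can put mass on positions where $b_t(\alpha)$ slightly exceeds $d_{t,r}(\alpha)$;
\item low-weight quantile intervals can decode to non-permutations;
\item the computed $P_i$ need not be the GSP payment of any bid-valid allocation.
\end{itemize}
Repairing this needs a genuine rounding argument: snap all near-ties to exact ties simultaneously under a single multiplier per bidder, then re-establish (b). In the ROI case absolute errors cannot certify $P_i\le V_i$ at all, because $V_i$ may be arbitrarily small. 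The paper never uses almost-fixed points. Everything except the products $w_hP_{h,i}$ (resp.\ $J_{h,t}S_{h,t}$) is already a PL pseudo-circuit. The paper replaces these products by downward discretizations through the linear-OPT framework of Filos-Ratsikas et al., giving $0\le P_i-\widehat P_i\le\gamma B_i$. For ROI it computes $S_{h,t}$ exactly and discretizes $J_{h,t}$ at scale $Av_{g_t}$, giving $0\le P_i-\widehat P_i\le\gamma V_i$. It then updates multipliers with $(1-\gamma)B_i-\widehat P_i$ or $(1-\gamma)V_i-\widehat P_i$. \emph{Exact} fixed points of PL pseudo-circuits are computable in \PPAD, and at them (a) and (b) hold exactly while (c') follows from these error bounds.

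\textbf{Hardness.} Your skeleton matches the paper's: one bidder per node, two multiplier thresholds, the \NOR\ output priced by the larger input bid, \NPURIFY\ via two staggered output blocks, and guard bids between blocks. However, two essential ingredients are missing, and one design choice would fail.

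First, no bidder has ``fixed bids''; guards and real bidders alike have endogenous multipliers. The paper pins the budget guard at $\alpha=1$ with a budget it can never exhaust. In the ROI case, slack pushes a multiplier to the instance-determined cap $A$, so the paper needs the multiplicity lemma with $128n+129$ identical auxiliary bidders to keep guard multipliers in $[1,c^{1/128}]$. It also proves global bounds on real multipliers before locality of prices can be claimed: $\ge c^{-1}$ via maximal pacing, resp.\ $\le c^{1/2+1/128}$ via ROI feasibility.

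Second, each node bidder is the output of one gate and the input of another, and a single budget/ROI constraint aggregates both gadgets. Its input item's payment is determined only up to an amount proportional to that gadget's scale, since it depends on other bidders' multipliers. Under your ``widely different value scales'' this uncertainty swamps the output-block signal whenever the input gadget sits at a higher scale than the output gadget. That must happen for some bidder along any cycle of the circuit. The paper instead keeps all scales within a constant factor ($c=1+\Theta(1/n)$, so $c^{O(n)}=O(1)$) and sets all CTRs to $1$, so slot positions play no role. It also replicates each output item $k=\Theta(n)$ times. This amplification lets the payment gap between the two regimes, of order $kc^i/n$, dominate both the $O(c^j)$ contribution of the bidder's own input items and the $\gamma$-slack for $\gamma=\Theta(1/n)$. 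With a single output item the signal is no larger than that uncertainty, so constant-weight gadgets cannot be made robust.
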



Our results are supported by three methodological contributions.
First, we use quantile coupling to construct explicit probability
mixtures over bid-consistent permutations within polynomial-size
piecewise-linear pseudo circuits. We represent an allocation by a
sequence of group insertions and construct a legal marginal
distribution for each insertion. A common quantile then couples these
marginals into a joint distribution, represented by polynomially many
permutations and their probabilities. At every fixed point, each
permutation receiving positive probability is bid-consistent.
To the best of our knowledge, this is the first use of quantile
coupling in establishing complexity upper bounds for equilibrium
computation. The construction avoids an explicit representation of
the exponentially large allocation space and yields \emph{equilibria with
polynomial-size support} in both constraint models. More broadly, it
provides a way to incorporate distributions over feasible
combinatorial objects into fixed-point constructions.

Second, we adapt the optimization-gate framework of
Filos-Ratsikas et al.~\cite{SICOMP:Filos-RatsikasH2023-FIXP,FilosRatsikas2024}
to establish $\PPAD$-membership for approximate equilibria.
Starting from our algebraic circuit for exact equilibria,
we replace the nonlinear payment computations by controlled
discretizations that admit piecewise-linear implementations.
Our construction preserves bid consistency and budget or
ROI feasibility exactly, while allowing approximation in the
maximal-pacing condition.
To the best of our knowledge, this is the first application of
the linear-OPT framework to establish $\PPAD$-membership for
approximate versions of equilibrium problems that may have only
irrational exact solutions. It illustrates how a framework developed
for problems admitting exact rational solutions can also be used
to obtain upper bounds for approximate equilibria in such settings.

Third, our hardness reductions address the global price interactions
inherent in sponsored shopping. In the separate second-price auctions
considered by Chen and Li~\cite{chen2025constant}, competition can be
specified independently for each auction. Here, all items enter a
single ranking, so items from unrelated gadgets may determine a
gadget's prices. Separating nominal value scales alone does not
prevent this interference, since bidders' multipliers are endogenous.
Our reductions overcome this obstacle by ensuring local pricing in
every approximate equilibrium and replicating output items to amplify
the effect of local price differences on a bidder's total payment.
This amplification makes the relevant payment differences dominate
both the contributions of items in other gadgets and the permitted
approximation error.
Both reductions establish hardness even when all items are displayed
and all click-through rates are equal. We believe that this approach of amplifying local price effects
may be a key ingredient in constructing hardness reductions for
other multi-item autobidding mechanisms with similar interactions.





\section{Preliminaries}
We recall the setting of a position auction where the bidders can have multiple items: 
There are $n$ bidders and $K$ slots. Each bidder $i$ has $m_i$ items (or ads), 
and the values are given in sorted order $v_{i,1}\geq\cdots\geq v_{i,m_i}> 0$. 
We let $M\coloneqq \sum_{i \in [n]} m_i$ denote the total number of items.
As usual, the slots in a given position auction have decreasing importance to bidders, 
which is modeled by the click-through rates $c_1\geq c_2\geq\cdots\geq c_K>0$. 
This means if item $j$ of bidder $i$ is allocated to slot $k$, then bidder $i$ obtains value 
$c_k v_{ij}$ for this item. Without loss of generality, we assume that $c_1\leq 1$ and 
extend the CTR sequence by setting $c_p \coloneqq 0$ for $K < p \leq M$. Thus, positions beyond $K$ 
are included in the ordering but receive no clicks.

In this work, we are mainly interested in the generalized second-price (GSP) auction. 
A key feature of GSP in our multi-item setting is that the payment rule should satisfy 
\emph{no self-pricing}. 

\subsection{GSP without self-pricing}
We start by describing the GSP allocation rule. 
Each agent $i$ will submit the bids $b_{i,1},\cdots,b_{i,m_i}$ for their items. 
Let $\mathbf{b}=(b_\ell)_{\ell \in [M]}$ denote the vector of bids from all bidders. 
The GSP mechanism will sort all items by their bids and allocate the item 
with the $k$-th highest bid to the $k$-th slot. In the event of tied bids, we are allowed to use a randomized tie-breaking rule. 

Formally, let $\Omega$ be the set of all permutations over the items $[M]$. Let $a\in\Omega$ be an allocation, where $a(k)=(i,j)$ 
represents the item at rank $k$ is bidder $i$'s $j$-th item. An allocation is \emph{valid under bids $\mathbf{b}$} 
if the items are ordered in non-decreasing order by their bids:
\[
b_{a(1)}\geq b_{a(2)}\geq\cdots \geq b_{a(M)}.
\]
Naturally, the value of bidder $i$ under an allocation is 
\[
V_i(a)\coloneqq \sum_{k\in[K]: a(k)=(i,j)} c_k\cdot v_{ij}. 
\]

Next, we describe the generalized second-price (GSP) payment rule \emph{without self-pricing} 
for multi-item, multi-slot position auctions. The basic intuition behind GSP is unchanged: 
each item pays according to the next-highest bid. In our setting, however, it is natural to come with a refinement: 
an item is priced only against items submitted by other bidders.

Fix bids $\mathbf b$ and a valid allocation $a\in\Omega$ under $\mathbf b$.
For each item $(i,j)$, let $k$ denote its position in the global order $a$, i.e.,
$a(k)=(i,j)$.
The GSP \emph{per-click price} for item $(i,j)$ is defined as
\[
\tau_{ij}(a,\mathbf b)
\coloneqq
\max\Bigl\{
b_{i'j'}
\mid
(i',j') = a(k'),\ k'>k,\ i'\neq i
\Bigr\},
\]
with the convention that the maximum of an empty set is zero. Namely, each allocated item should pay the next-highest bid that is \emph{not} among their bids, thereby ruling out self-pricing. Letting $k$ be the position of item $(i,j)$ in $a$, the GSP payment for item $(i,j)$ is defined as 
\[
p_{ij}(a,\mathbf b)\coloneqq c_k\cdot \tau_{ij}(a,\mathbf b).
\]
The total payment of bidder $i$ is then given by
\[
P_i(a,\mathbf b)
\coloneqq
\sum_{j=1}^{m_i} p_{ij}(a,\mathbf b).
\]

The existence of autobidding equilibria often hinges on tie-breaking rules~\cite{conitzer2022multiplicative,Li2024}. Our more complex sponsored-shopping setting is no exception: D\"utting, Li, Paes Leme, Spendlove, and Teng~\cite{dutting2026autobidding} therefore introduced tie-breaking rules for this setting, which we recall below.

Let $\pi$ be a randomized allocation, which is a distribution over $\Omega$. We say $\pi$ is \emph{valid under bids $\mathbf{b}$} if every allocation in its support is valid under $\mathbf{b}$. So a valid randomized allocation is essentially a \emph{tie-breaking rule}. Given $\mathbf{b}$ and a valid $\pi$, the expected value of bidder $i$ and the expected payment of bidder $i$ are respectively
\[
V_i(\pi)\coloneqq \mathbb{E}_{a\sim \pi}[V_i(a)]\quad\text{and}\quad P_i(\pi,\mathbf b)
\coloneqq
\mathbb{E}_{a\sim \pi}\left[ P_i(a,\mathbf b)\right]. 
\]

\subsection{Autobidding and equilibria}

In this section, we recall autobidding and the solution concepts of autobidding equilibria. 

In modern ad marketplaces, autobidding is the prevalent form to manage advertising campaigns. The autobidding behavior is modeled by the \emph{uniform bidding} strategy, defined as follows:
\begin{definition}[Uniform Bidding]
    We say an agent $i$ adopts a uniform bidding strategy if they select a multiplier and bid the true value of each item, scaled by this single multiplier. More formally, the agent $i$ selects $\alpha_i\geq 0$ as the multiplier, and bids $\alpha_iv_{ij}$ for their $j$-th item. 
\end{definition}

A growing body of work (see e.g. \cite{aggarwal2019autobidding,conitzer2022pacing,conitzer2022multiplicative,chen2024complexity,Li2024,chen2025constant,dutting2026autobidding}) studies equilibrium behavior in settings where all agents adopt autobidding strategies. In this model, each autobidder is subject to certain constraints, with budget constraints and return-on-investment (ROI) constraints being the most commonly studied. Correspondingly, the literature has proposed solution concepts of autobidding equilibria under budget constraints and ROI constraints. We discuss them for our setting below.

When the autobidders are under budget constraints, each agent $i$ has a budget $B_i>0$. The autobidding equilibria under budget constraint (ABE-B) and its relaxed variant are defined as follows.

\begin{definition}[Autobidding Equilibria under Budget Constraint]\label{definition: pacing equilibrium budget}
\sloppy    Given a position auction $G = (n, (m_i), K, (v_{ij}), (c_k), (B_i))$, a pair $(\alpha, \pi)$, where $\alpha=(\alpha_i)\in [0,1]^n$ and $\pi$ is a distribution over all allocations $a\in\Omega$, 
  is an \emph{autobidding equilibrium under budget constraint} of $G$ if the following conditions hold:
\begin{itemize}[leftmargin=2em]
    \item[(a)] \emph{Bid-consistent allocation.}
    Every allocation in the support of $\pi$ is valid under the induced bids
    $\mathbf b$, i.e., items are ordered in non-increasing order of bids.

    \item[(b)] \emph{Budget feasibility.}
    For every bidder $i\in[n]$, the expected payment should not exceed their budget: $P_i(\pi,\mathbf b)\leq B_i.$
    That is, each bidder satisfies their budget constraint.

    \item[(c)] \emph{Maximal pacing.}
    For every bidder $i\in[n]$, if the budget constraint is slack $P_i(\pi,\mathbf b)<B_i$,
    then bidder $i$ is maximally paced, i.e., $\alpha_i = 1$.
\end{itemize}
\end{definition}

For defining notions of approximate autobidding equilibria under budget constraint, one may in principle consider 
relaxing any subset of the conditions (a), (b), and~(c). For the sponsored search setting, \cite{chen2024complexity} 
relaxed the corresponding conditions~(a) and~(c). Arguably, relaxing condition~(c) yields the most reasonable notion of 
approximation.


\begin{definition}[Approximate Autobidding Equilibria under Budget Constraint]
    \label{definition: approximate pacing equilibrium budget}
In the setting of \Cref{definition: pacing equilibrium budget} we say $(\alpha,\pi)$ is an $\gamma$-approximate ABE-B,
 for given $0\leq \gamma < 1$, if conditions~(a) and~(b) of \Cref{definition: pacing equilibrium budget} 
 as well as the relaxed conditions hold:
\begin{itemize}[leftmargin=2.5em]
    \item[(c')] \emph{Approximately maximal pacing.}
    For every bidder $i\in[n]$, if the budget constraint is slack by a \(\gamma\)-margin $P_i(\pi,\mathbf b)<(1-\gamma)B_i$,
    then bidder $i$ is maximally paced, i.e., $\alpha_i = 1$.
\end{itemize}
\end{definition}

When the autobidders are under the ROI constraints, the autobidding equilibria under ROI constraints (ABE-R) and its relaxed variant are defined as follows.

\begin{definition}[Autobidding Equilibria under ROI Constraint]\label{definition: pacing equilibrium ROI}
\sloppy Given a position auction $G = (n, (m_i), K, (v_{ij}), (c_k))$, a pair $(\alpha, \pi)$, where $\alpha=(\alpha_i)\in [1,A]^n$ and $\pi$ is a distribution over all allocations $a\in\Omega$, 
  is an \emph{autobidding equilibrium under ROI constraint} of $G$ if the following conditions hold:
\begin{itemize}[leftmargin=2em]
    \item[(a)] \emph{Bid-consistent allocation.}
    Every allocation in the support of $\pi$ is valid under the induced bids
    $\mathbf b$, i.e., items are ordered in non-increasing order of bids.

    \item[(b)] \emph{ROI feasibility.}
    For every bidder $i\in[n]$, the expected value obtained under $\pi$
    is at least the expected payment: $
    V_i(\pi) \ge P_i(\pi,\mathbf b).$
    That is, each bidder satisfies their return-on-investment (ROI) constraint.

    \item[(c)] \emph{Maximal pacing.}
    For every bidder $i\in[n]$, if the ROI constraint is slack, $V_i(\pi) > P_i(\pi,\mathbf b)$,
    then bidder $i$ is maximally paced, i.e., $\alpha_i = A$.
\end{itemize}

\end{definition}

For defining notions of approximate autobidding equilibria under ROI constraint, one may consider relaxing any subset of 
the conditions (a), (b), and~(c). For the sponsored search setting, \cite{Li2024} relaxed all of the corresponding conditions. 
Again, relaxing condition~(c) only arguably yields the most reasonable notion of approximation.

\begin{definition}[Approximate Autobidding Equilibria under ROI Constraint]
    \label{definition: approximate pacing equilibrium ROI}
In the setting of \Cref{definition: pacing equilibrium ROI} we say $(\alpha,\pi)$ is an $\gamma$-approximate ABE-R, 
for given $0\leq \gamma< 1$, if conditions~(a) and~(b) of \Cref{definition: pacing equilibrium ROI} as well as the following relaxed condition hold:
\begin{itemize}[leftmargin=2.5em]   
    \item[(c')] \emph{Approximately maximal pacing.}
    For every bidder $i\in[n]$, if the ROI constraint is slack by a $\gamma$-margin, $(1 - \gamma)V_i(\pi) > P_i(\pi,\mathbf b)$,
    then bidder $i$ is maximally paced, i.e., $\alpha_i = A$.
\end{itemize}
\end{definition}

For the computational problem, we fix the multiplier cap at
$A=1+\frac{\sum_{i,j}v_{ij}}{\min_{i,j}v_{ij}}$,
which is computable in polynomial time and has polynomial
encoding length.\footnote{
Any exact or $\gamma$-approximate equilibrium at $A_0$
extends to every larger cap by raising only the multipliers
equal to $A_0$ to the new cap, while keeping the allocation
distribution and the approximation parameter unchanged.
The justification is given in \Cref{app:multiplier-cap}.}
\section{Technical overview}
For the $\FIXP$ construction, we first group together each bidder's
items with the same value and insert these groups one at a time.
At each step, a sorting network computes the numerical order of the
previous bids. Comparisons with the new group's bid determine its legal
insertion positions, including all positions permitted by ties.
Using Heaviside pseudogates from the framework
of~\cite{SICOMP:Filos-RatsikasH2023-FIXP,FilosRatsikas2024}, we compute
a probability distribution supported on these positions at every
fixed point. 

We combine these insertion marginals using quantile coupling.
The circuit sorts all entries of their cumulative probability vectors.
For each interval of positive length between consecutive entries, its
midpoint selects an insertion position from every marginal, producing
a deterministic group order. Assigning the probability equal
to the interval length recovers all prescribed insertion marginals.
If there are $T$ groups, there are only $O(T^2)$ cumulative entries,
so the resulting distribution has polynomial-size support.

The circuit then tracks
how later insertions shift earlier groups, and uses group
multiplicities and a fixed order within each group to recover
the individual item positions.
These positions determine the next bid from another bidder and the
CTR of each item, allowing us to compute each supported allocation's
GSP payments and the bidders' expected payments and valuations.
We complete the circuit by increasing multipliers when their
constraints are slack and decreasing them when their constraints
are violated.

For $\PPAD$-membership, we preserve the allocation construction and
replace the remaining nonlinear payment computations by
discretizations. The allocation construction already uses piecewise-linear
pseudo-circuits. Moreover, products of the form $yH(x)$ admit PL
pseudo-circuit implementations by
\cite[Lemma~3.1]{FilosRatsikas2024}. It therefore lets us compute each supported allocation's
payment exactly within this framework. The remaining obstacle is
multiplication between probability weights and variable payment
amounts. We use the piecewise-constant correspondence represented
by an identity Boolean circuit to round downward, applying
\cite[Proposition~3.3]{FilosRatsikas2024} to implement the entire
scaled correspondence. Its circuit size is polynomial in the number
of bits specifying the discretization precision.

The two constraint models require different error bounds. Under budget
constraints, we discretize each allocation's total payment and scale
its error by the allocation probability. Since these probabilities
sum to one, we can ensure
$0\leq P_i-\widehat P_i\leq\gamma B_i$.
Under ROI constraints, the relevant valuation may be arbitrarily
small. We first compute probability-weighted CTRs exactly, then
discretize each group's per-click price using the multiplier cap $A$
times the group's value as an upper bound.
The resulting errors sum to at most $\gamma V_i$, where
$V_i$ is also computed exactly by a PL pseudo-circuit.
Using $(1-\gamma)B_i-\widehat P_i$ or
$(1-\gamma)V_i-\widehat P_i$ in the multiplier updates then preserves
budget or ROI feasibility exactly and enforces approximate maximal
pacing at every fixed point. The required precision has polynomial
encoding length even for inverse-exponential $\gamma$.

For the hardness results, we reduce from the \PPAD-complete \PureCircuit\ problem introduced in \cite{deligkas2024pure}. Each circuit vertex $v$ is represented by a bidder, and its logical value is encoded by the bidder's pacing multiplier $\alpha_v$. In the budget-constrained setting in \Cref{sec:ABE-B}, we decode $\alpha_v\leq c^{-1/3}$ as $0$, $\alpha_v=1$ as $1$, and all intermediate values as $\bot$, where $c = 1 + \Theta(1/n)$. In the ROI-constrained setting in \Cref{sec:ABE-R}, two analogous thresholds are used, with intermediate multipliers again decoded as $\bot$. It therefore suffices to prove that the multipliers of every approximate autobidding equilibrium satisfy the constraints of the corresponding \PureCircuit\ gates. Each gate is constructed at a separate value scale $c^{\Theta(I)}$, where $I$ is the index of the gate. Auxiliary guard bidders submit bids at the boundaries of these scales. After establishing suitable bounds on the multipliers of all real bidders, we show that bids associated with different gates cannot interact.
Consequently, the GSP price of an output item is determined only by the input bids of the same gate and the adjacent guard bids. This locality property allows each gate to be analyzed independently. The construction assigns only one or two input items to each input bidder, but assigns a block of $k=\Theta(n)$ output items to each output bidder. Hence, when an input bid crosses a prescribed threshold, the payments of all $k$ output items change simultaneously. This amplification ensures that the difference between the two relevant payment regimes dominates the allowed approximation error, even when $\gamma=\Theta(1/n)$.

In the budget-constrained reduction, the budget of each output bidder is chosen between these two payment regimes. If the relevant input bids are low, then the output bidder spends strictly less than $1-\gamma$ times its budget, so
approximate maximal pacing forces its multiplier to be $1$. If an input multiplier is $1$, then an output bidder whose multiplier is above the low threshold would pay too much for its $k$ output items and violate budget feasibility. These two effects implement a \NOT\ gate. A \NOR\ gate uses the larger of two input bids: the output is high when both inputs are low, and it is low when at least one input is high. An \NPURIFY\ gate uses two output blocks at different value scales. If the input is boolean, both outputs are forced to encode its negation; if the input is $\bot$, the separation between the two scales guarantees that at least one output is boolean.

The ROI-constrained reduction follows the same high-level structure, but uses ROI feasibility and maximal pacing instead of budget feasibility. A set of identical auxiliary bidders first ensures that the guard multipliers remain
within a narrow interval over $1$, thereby producing stable gate boundaries. When the input bids are low, the output bidder's payment is strictly below its value, creating a slack ROI constraint and forcing the multiplier upward. When
an input bid is high, an output multiplier above the prescribed threshold would make the total payment of the $k$ output items exceed their total value, thereby violating ROI feasibility. 

For every gate type, local correctness property shows that the decoded multipliers satisfy the corresponding \PureCircuit\ constraint. Therefore, any $\gamma$-approximate autobidding equilibrium can be transformed in polynomial time into a feasible solution of the original \PureCircuit\ instance.


\section{\FIXP-membership}

For each bidder $i\in[n]$, partition its items into groups according
to their values. Let $G_i$ be the number of distinct values of
bidder $i$, and index these values in strictly decreasing order:
\[
v_{i,1}>v_{i,2}>\cdots>v_{i,G_i}>0.
\]
The pair $(i,j)$ identifies the group consisting of all items of
bidder $i$ whose value is $v_{i,j}$.

Let
\[
m_{i,j}\in\mathbb{Z}_{>0}
\]
denote the number of items in group $(i,k)$, which we call its
\emph{multiplicity}. For every bidder $i$,
\[
\sum_{j=1}^{G_i} m_{i,j}=m_i.
\]

Define the set of all groups by
\[
\mathscr G
:=
\bigl\{(i,j): i\in[n],\ j\in[G_i]\bigr\},
\]
and let
\[
T:=|\mathscr G|=\sum_{i=1}^{n}G_i
\]
be the total number of groups. 
For a group $g=(i,j)\in\mathcal{G}$, use the shorthand
\[
\iota(g):=i,
\qquad
v_g:=v_{i,j},
\qquad
m_g:=m_{i,j}.
\]
Here, $\iota(g)$ is the bidder owning group $g$, $v_g$ is the common
value of its items, and $m_g$ is its multiplicity. For each group $g\in\mathscr{G}$, 
fix an internal ordering of its $m_g$ items, for example by increasing original item index.

\subsection{Group Allocations}



We construct a randomized allocation by inserting the value
groups one at a time, treating each group as a single block.
Fix an order
\[
g_1,\ldots,g_T
\]
of the groups, for example, the lexicographic order of their
indices $(i,j)$. This insertion order is independent of the
multiplier profile and need not coincide with the final
allocation order. Once all groups have been inserted, expanding
each group into its constituent items gives a complete allocation.

For each group $g_t$, the circuit computes its bid through
a multiplication-by-constant gate whose output is
\[
b_t(\alpha)=v_{g_t}\alpha_{\iota(g_t)}.
\]
For each insertion step $t\geq2$, a sorting network, i.e., AKS~\cite{ajtai19830} takes
the previously computed bid values
\[
b_1(\alpha),\ldots,b_{t-1}(\alpha)
\]
as inputs and outputs
\[
d_{t,1}(\alpha)\geq\cdots\geq d_{t,t-1}(\alpha),
\]
their nonincreasing rearrangement.
Each comparator is implemented by
\[
(x,y)\longmapsto
(\max\{x,y\},\min\{x,y\}).
\]
Thus, both the group bids and their sorted values are
computed by a continuous piecewise-linear circuit.

At step $t\geq2$, an insertion position
$r\in\{0,\ldots,t-1\}$ means that exactly $r$ previously
inserted groups precede $g_t$.
Define
\[
L_t(\alpha)
:=
\bigl|\{s<t:b_s(\alpha)>b_t(\alpha)\}\bigr|,
\qquad
U_t(\alpha)
:=
\bigl|\{s<t:b_s(\alpha)\geq b_t(\alpha)\}\bigr|.
\]
The legal insertion positions are precisely
\[
\mathcal R_t(\alpha)
:=
\{L_t(\alpha),\ldots,U_t(\alpha)\}.
\]
For notational convenience, set $d_{t,0}(\alpha) := +\infty$ and $d_{t,t}(\alpha) := -\infty$.
Then an position $r$ is legal in step~$t$ iff the
condition hold:
\[
d_{t, r}(\alpha) \geq b_t(\alpha) \geq d_{t,r + 1}(\alpha).
\]
It directly applies to the outputs of sorting network in step~$t$, since
the quantities $L_t$ and $U_t$ are independent of how ties are broken.

Following the pseudogate framework of
\cite{SICOMP:Filos-RatsikasH2023-FIXP,FilosRatsikas2024},
we may use the Heaviside step function as a gate in circuit constructions.
Each occurrence of this gate is implemented by a
constant-size PL circuit an auxiliary variable,
and these are all included in the state of the full circuit.
Its specified input-output relation is guaranteed
at every fixed point of the full circuit.

For every $t\geq2$ and $j\in[t-1]$, let
$y_{t,j}$ denote the output of a Heaviside
gate with input $b_t(\alpha)-d_{t,j}(\alpha)$.
Thus, at every fixed point,
\[
y_{t,j}\in
\begin{cases}
\{0\},&b_t(\alpha)<d_{t,j}(\alpha),\\
[0,1],&b_t(\alpha)=d_{t,j}(\alpha),\\
\{1\},&b_t(\alpha)>d_{t,j}(\alpha).
\end{cases}
\]

The circuit converts the outputs $y_{t,j}$ into insertion
probabilities by successively assigning the remaining
probability mass.
Starting from a constant node $R_{t,0}=1$, for each
$j=1,\ldots,t-1$, a min gate computes
\[
\rho_{t,j-1}
=
\min\{R_{t,j-1},y_{t,j}\},
\]
and a subtraction gate then computes
\[
R_{t,j}
=
R_{t,j-1}-\rho_{t,j-1}.
\]
The final residual is used directly as the probability
of the last insertion position:
\[
\rho_{t,t-1}=R_{t,t-1}.
\]
Thus, all quantities $R_{t,j}$ and $\rho_{t,r}$ are
computed at nodes of the circuit.

\begin{lemma}
\label{lem:legal-insertion-marginals}
Each vector $\rho_t=(\rho_{t,0},\ldots,\rho_{t,t-1})$ satisfies
\[
\rho_{t,r}\geq0,
\qquad
\sum_{r=0}^{t-1}\rho_{t,r}=1.
\]
Moreover, at the fixed point of the circuit, 
\[
\rho_{t,r}>0
\quad\Longrightarrow\quad
r\in\mathcal R_t(\alpha).
\]
\end{lemma}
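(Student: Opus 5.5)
The plan has two parts: a structural invariant of the min/subtract chain, and a case analysis on where the Heaviside outputs are forced at a fixed point.

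\textbf{Nonnegativity and normalization.} We show by induction on $j$ that $0\le R_{t,j}\le 1$ and $\rho_{t,j-1}\ge 0$. The base case is $R_{t,0}=1$. For the step, take $y_{t,j}\in[0,1]$; this holds because the Heaviside pseudogate output is clamped to $[0,1]$ by construction, and can be enforced with an explicit clamp if needed. Then $\rho_{t,j-1}=\min\{R_{t,j-1},y_{t,j}\}$ lies in $[0,R_{t,j-1}]$, so $R_{t,j}=R_{t,j-1}-\rho_{t,j-1}\in[0,R_{t,j-1}]$. The final entry $\rho_{t,t-1}=R_{t,t-1}$ is therefore nonnegative. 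Summing the defining identities telescopes:
\[
\sum_{r=0}^{t-2}\rho_{t,r}=R_{t,0}-R_{t,t-1}=1-\rho_{t,t-1}.
\]
This gives total mass $1$. This part holds at every point, not only at fixed points.

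\textbf{Support at a fixed point.} Since $d_{t,1}\ge\cdots\ge d_{t,t-1}$, the Heaviside constraints force the following. We have $y_{t,j}=0$ whenever $j\le L_t$, because then $d_{t,j}>b_t$. The number of sorted entries strictly above $b_t$ is exactly $L_t$, and this count does not depend on how ties are broken. Likewise $y_{t,j}=1$ whenever $j>U_t$, because then $d_{t,j}<b_t$. For $L_t<j\le U_t$ the value $y_{t,j}\in[0,1]$ is unconstrained.

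We then track the residual in three ranges.
\begin{itemize}
\item For $j\le L_t$: $\rho_{t,j-1}=\min\{R,0\}=0$, so $R_{t,j}=1$ for all $j\le L_t$. Hence $\rho_{t,r}=0$ for $r<L_t$.
\item At the first index $j=U_t+1\le t-1$ (if it exists): $\rho_{t,U_t}=\min\{R_{t,U_t},1\}=R_{t,U_t}$, because $R\le 1$. So $R_{t,U_t+1}=0$, and all later $\rho_{t,r}$ with $r>U_t$ vanish. This includes the final residual $\rho_{t,t-1}$ when $U_t<t-1$.
\item If $U_t=t-1$, there is nothing to check above $U_t$.
\end{itemize}
Thus positive mass can appear only on indices $r\in\{L_t,\dots,U_t\}=\mathcal R_t(\alpha)$.

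\textbf{Main obstacle.} The delicate point is the index bookkeeping. Probability $\rho_{t,j-1}$ is governed by $y_{t,j}$, which compares $b_t$ with $d_{t,j}$. Position $r$ is legal iff $d_{t,r}\ge b_t\ge d_{t,r+1}$, so the indices must be matched so that ``$d_{t,r+1}>b_t$ forces $\rho_{t,r}=0$'' and ``$d_{t,r}<b_t$ makes the chain exhaust the remaining mass at step $r$.'' I would verify this carefully using the conventions $d_{t,0}=+\infty$ and $d_{t,t}=-\infty$. I would also check the degenerate case $t=1$ or $t-1=0$, where $\rho_t=(1)$ and $\mathcal R_t=\{0\}$ trivially.
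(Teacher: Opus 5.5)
Your proof is correct and follows essentially the same argument as the paper. Both use telescoping for normalization, and for the support claim both use two facts: a Heaviside output $0$ (when $d_{t,r+1}>b_t$) blocks mass at position $r$, and an output $1$ (when $d_{t,r}<b_t$) exhausts the residual. The paper phrases these as the direct implications $\rho_{t,r}>0\Rightarrow y_{t,r+1}>0$ and $R_{t,r}>0\Rightarrow y_{t,r}<1$, whereas you scan forward over $j\le L_t$ and $j=U_t+1$. Your version also handles the endpoints explicitly: $r=0$, and $r=t-1$, where $\rho_{t,t-1}=R_{t,t-1}$ is not a min. The paper leaves these implicit through the conventions $d_{t,0}=+\infty$ and $d_{t,t}=-\infty$.
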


\begin{proof}
The recursion gives $0\leq R_{t,j}\leq1$,
$\rho_{t,r}\geq0$, and
\[
\sum_{r=0}^{t-1}\rho_{t,r}
=
\sum_{j=1}^{t-1}(R_{t,j-1}-R_{t,j})
+R_{t,t-1}
=1.
\]
Consider now a fixed point of the circuit such that $\rho_{t,r}>0$.
Since
\[
0<\rho_{t,r}
=\min\{R_{t,r},y_{t,r+1}\}
\]
it follows that $y_{t,r+1}>0$ and $R_{t, r} > 0$.
By the definition of the Heaviside function, $y_{t,r+1} > 0$ implies
\[
b_t(\alpha)\geq d_{t,r+1}(\alpha).
\]
By the definition of the remaining probability mass, $R_{t, r} > 0$ 
implies $\rho_{t, r-1} < R_{t, r-1}$. As $\rho_{t,r-1} = \min \{R_{t,r-1}, y_{t,r}\}$,
we obatin $y_{t,r} < R_{t,r-1} < 1$.
Using the definition of the Heaviside function again,
\[
b_t(\alpha) \leq d_{t, r}(\alpha).
\]
Thus, $r \in \mathcal R_t(\alpha)$.
\end{proof}

For each step $t\geq2$, let $e_{t,i}\in \{0,1\} ^t$
denote the unit vector corresponding to insertion position
$i\in\{0,\ldots,t-1\}$.
Every deterministic group allocation $\sigma$ uniquely
determines a sequence of insertion choices as \cite[Definition 2]{lu2014effective}, and hence
an encoding
\[
(e_{2,\sigma(2)},\ldots,e_{T,\sigma(T)}),
\]
where $\sigma(t)$ is the position index in step~$t$. 
A randomized allocation $\pi$ is a probability mixture
of deterministic allocations, and therefore of their
insertion encodings.
Its marginal distribution over insertion positions
at step $t$ is
\[
p_t
=
\sum_{\sigma}\pi(\sigma)e_{t,\sigma(t)}.
\]

We will show that if these marginals satisfy the condition
\[
p_{t,r}>0
\quad\Longrightarrow\quad
r\in\mathcal R_t(\alpha)
\]
for every $2 \leq t \leq T$ and $0 \leq r \leq t-1$, then every
such joint distribution induces a bid-valid randomized
allocation.
Intuitively, an insertion sequence with positive probability
can only use positions having positive marginal probability.
All its insertion choices are therefore legal, and executing
them preserves the nonincreasing bid order at every step.

At every fixed point of the circuit,
\Cref{lem:legal-insertion-marginals} guarantees that
the vectors $\rho_2,\ldots,\rho_T$ are probability
distributions supported on legal insertion positions.
To obtain a randomized allocation, we combine these
marginals into a joint distribution over complete
insertion sequences.
Such a joint distribution, whose marginal at each step
$t$ is $\rho_t$, is called a \emph{coupling} of
$\rho_2,\ldots,\rho_T$.
Here, we use the \emph{quantile coupling} from
\cite{thorisson2000coupling} and \cite[Proof of Theorem 4.2.3]{roch_mdp_2024}.
Our construction replies only on the elegant and remakably simple idea underlying
the coupling method.

\begin{lemma}
Fix a multiplier profile $\alpha$.
Let $\pi$ be any probability distribution over deterministic
group allocations with insertion marginals $p_2,\ldots,p_T$.
If
\[
p_{t,r}>0
\quad\Longrightarrow\quad
r\in\mathcal R_t(\alpha)
\]
for every $2\leq t\leq T$ and $0\leq r\leq t-1$,
then $\pi$ induces a bid-valid randomized allocation.
\end{lemma}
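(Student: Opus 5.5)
Fix a deterministic group allocation $\sigma$ in the support of $\pi$; we show it is bid-valid, which suffices because a randomized allocation is valid exactly when every allocation in its support is. Since $\pi(\sigma)>0$ and $p_t=\sum_{\sigma'}\pi(\sigma')e_{t,\sigma'(t)}$ has nonnegative summands, we get $p_{t,\sigma(t)}\ge \pi(\sigma)>0$. The hypothesis then gives $\sigma(t)\in\mathcal R_t(\alpha)$ for every $2\le t\le T$. In words, every insertion choice made by $\sigma$ is legal at its step.

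Next we prove by induction on $t$ that the partial order obtained after inserting $g_1,\dots,g_t$ lists the groups in nonincreasing order of bids. The base case $t=1$ is trivial. For the inductive step, assume the first $t-1$ groups are in nonincreasing bid order. Their bid sequence is then exactly the sorted sequence $d_{t,1}(\alpha)\ge\cdots\ge d_{t,t-1}(\alpha)$, because the multiset of values is the same and the sorted order is unique. The group $g_t$ is placed after the first $r=\sigma(t)$ of them. Legality of $r$ is equivalent to $d_{t,r}(\alpha)\ge b_t(\alpha)\ge d_{t,r+1}(\alpha)$, with the conventions $d_{t,0}=+\infty$ and $d_{t,t}=-\infty$. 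So the new sequence is still nonincreasing. The one point needing care is this equivalence between $r\in\{L_t,\dots,U_t\}$ and the sandwich condition. It holds because $L_t$ counts the sorted entries strictly above $b_t$ and $U_t$ counts those at least $b_t$, so $d_{t,r}\ge b_t\ge d_{t,r+1}$ holds exactly when $L_t\le r\le U_t$. Even when ties make the identity of the preceding groups ambiguous, only the bid values matter here.

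Finally, we expand each group into its $m_g$ items in the fixed internal order. All items in group $g$ share the bid $\alpha_{\iota(g)}v_g$. Hence consecutive items inside a block have equal bids, and the comparisons across block boundaries are exactly the group comparisons verified above. The resulting item permutation therefore satisfies $b_{a(1)}\ge\cdots\ge b_{a(M)}$, so it is valid under $\mathbf b$. Applying this to every $\sigma$ in the support shows that $\pi$ is bid-valid. The main obstacle is only the bookkeeping in the inductive step: identifying the partial insertion order with the sorted sequence $d_{t,\cdot}$ so that the legality condition translates directly into monotonicity. Once that identification is made, the rest is routine.
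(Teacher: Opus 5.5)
Your proof is correct and follows the paper's argument essentially step for step. The marginal identity gives $p_{t,\sigma(t)}\ge\pi(\sigma)>0$, so every insertion choice of $\sigma$ is legal. An induction then uses the fact that any bid-valid partial order has numerical bid sequence $d_{t,1}(\alpha),\ldots,d_{t,t-1}(\alpha)$ to show that each insertion preserves the nonincreasing order. Your explicit check that $r\in\{L_t,\ldots,U_t\}$ is equivalent to $d_{t,r}\ge b_t\ge d_{t,r+1}$, and your expansion of groups into items, are small additions that the paper either asserts before the lemma or leaves implicit.
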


\begin{proof}
Take any deterministic group allocation $\sigma$
with $\pi(\sigma)>0$.
For each $t=2,\ldots,T$, the marginal identity gives
\[
p_{t,\sigma(t)}
=
\sum_{\tau:\,\tau(t)=\sigma(t)}\pi(\tau)
\geq \pi(\sigma)>0.
\]
Hence $\sigma(t)\in\mathcal R_t(\alpha)$ at every step.

We show by induction that executing the insertion sequence
of $\sigma$ preserves the nonincreasing bid order.
The ordering consisting only of $g_1$ is trivially bid-valid.
Suppose that the ordering after step $t-1$ is bid-valid.
Its numerical bid sequence must then be
\[
d_{t,1}(\alpha),\ldots,d_{t,t-1}(\alpha),
\]
regardless of how groups with equal bids are ordered.
Since $\sigma(t)\in\mathcal R_t(\alpha)$,
\[
d_{t,\sigma(t)}(\alpha)
\geq b_t(\alpha)
\geq d_{t,\sigma(t)+1}(\alpha).
\]
Inserting $g_t$ at position $\sigma(t)$ therefore preserves
the nonincreasing bid order, including at the endpoints
under our conventions for $d_{t,0}$ and $d_{t,t}$.
This completes the induction.
\end{proof}

We now describe the circuit computations used for the
quantile coupling.
The circuit has already computed the vectors
$\rho_2,\ldots,\rho_T$, which are supported on legal
insertion positions at every fixed point by
\Cref{lem:legal-insertion-marginals}.

For each $t$, starting from a constant node $F_{t,-1}=0$,
addition gates compute the cumulative probabilities
\[
F_{t,r}=F_{t,r-1}+\rho_{t,r},
\qquad r=0,\ldots,t-1.
\]
Thus,
\[
F_t=(F_{t,-1},F_{t,0},\ldots,F_{t,t-1})
\]
is the cumulative probability vector associated with
$\rho_t$, including both endpoints
$F_{t,-1}=0$ and $F_{t,t-1}=1$.

Set
\(
Q:=\sum_{t=2}^{T}(t+1)-1.
\)
A sorting network takes all entries of
$F_2,\ldots,F_T$ as inputs and outputs their
nondecreasing rearrangement
\[
0=E_0\leq E_1\leq\cdots\leq E_Q=1.
\]
All repeated values are retained.
For every $h\in[Q]$, subtraction gates compute the
interval lengths
\[
w_h=E_h-E_{h-1},
\]
while addition and multiplication-by-constant gates
compute the midpoints
\[
u_h=\frac12(E_{h-1}+E_h).
\]
These computations are performed for all intervals,
including those of zero length.

\begin{lemma}[Quantile Coupling]
\label{lem:quantile-coupling}
At each step $t$, select the first entry of $F_t$
strictly greater than $u_h$:
\[
\sigma_h(t)
:=
\min\{r\in\{0,\ldots,t-1\}:F_{t,r}\geq u_h\}.
\]
These choices specify a deterministic group allocation
$\sigma_h$ through its insertion encoding.
The probability mixture assigning weight $w_h$ to
$\sigma_h$ for every $h \in Q$ is a randomized
allocation $\pi$ whose insertion marginals are
$\rho_2,\ldots,\rho_T$.
Moreover, $\pi$ is supported on at most
$Q = O(T^2)$ deterministic allocations.
\end{lemma}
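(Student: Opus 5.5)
The plan is to verify three things in order: that each $\sigma_h$ is a well-defined insertion encoding; that the weights $w_h$ form a probability distribution; and that for every $t$ and $r$ the total weight of indices $h$ with $\sigma_h(t)=r$ equals $\rho_{t,r}$. Bid-validity of $\pi$ at a fixed point then follows by combining \Cref{lem:legal-insertion-marginals} with the preceding lemma on insertion marginals.

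\textbf{Well-definedness and the weights.} Since $F_{t,t-1}=1\geq u_h$, the set $\{r:F_{t,r}\geq u_h\}$ is nonempty, so $\sigma_h(t)\in\{0,\ldots,t-1\}$. Each sequence $(\sigma_h(2),\ldots,\sigma_h(T))$ is therefore a valid insertion sequence and determines a deterministic group allocation. The weights satisfy $w_h=E_h-E_{h-1}\geq0$ because the $E_h$ are sorted. They also telescope: $\sum_{h=1}^{Q}w_h=E_Q-E_0=1-0=1$. Here we use that $0$ is the minimum and $1$ the maximum of all entries, since the $F_t$ are cumulative sums of nonnegative $\rho_t$ and run from $F_{t,-1}=0$ to $F_{t,t-1}=1$.

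\textbf{Key step: the marginals.} Fix $t$ and $r$. I claim that for every $h$ with $w_h>0$,
\[
\sigma_h(t)=r \iff (E_{h-1},E_h)\subseteq (F_{t,r-1},F_{t,r}).
\]
Because the multiset $\{E_h\}$ contains every entry of $F_t$, no $F_{t,s}$ lies strictly inside $(E_{h-1},E_h)$. Hence each open interval $(E_{h-1},E_h)$ of positive length lies entirely within some $(F_{t,s-1},F_{t,s})$. The midpoint $u_h$ then lies strictly between $F_{t,s-1}$ and $F_{t,s}$, so the minimal $r$ with $F_{t,r}\geq u_h$ is exactly $s$, using monotonicity of $F_t$.

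Conversely, the positive-length intervals $(E_{h-1},E_h)$ contained in $(F_{t,r-1},F_{t,r})$ tile it up to endpoints, because both $F_{t,r-1}$ and $F_{t,r}$ appear among the $E$'s. Therefore
\[
\sum_{h:\,\sigma_h(t)=r} w_h=F_{t,r}-F_{t,r-1}=\rho_{t,r}.
\]
Zero-length intervals contribute $w_h=0$, so their (possibly arbitrary) choice of $\sigma_h$ does not affect any marginal. The one subtlety is the case $\rho_{t,r}=0$ with a degenerate interval landing on it. This is harmless because that index carries zero weight and so is not in the support of $\pi$.

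\textbf{Support bound and validity.} The support has size at most the number of indices $h$, namely $Q=\sum_{t=2}^T(t+1)-1=O(T^2)$. Finally, at a fixed point, \Cref{lem:legal-insertion-marginals} gives that $\rho_{t,r}>0$ implies $r\in\mathcal R_t(\alpha)$. The previous lemma then shows that $\pi$ is bid-valid.

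The main obstacle is the tiling argument in the key step, in particular keeping careful track of ties and repeated values in the sorted list. Working with open intervals and positive-length $w_h$ only should handle this.
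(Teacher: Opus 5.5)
Your proof is correct and follows the paper's argument. Both rest on the fact that the merged sorted list $E_0\le\cdots\le E_Q$ refines every cumulative vector $F_t$: the paper writes the marginal as a difference of telescoping sums $\sum_h w_h\mathbf{1}[u_h\le E_u]-\sum_h w_h\mathbf{1}[u_h\le E_\ell]$, while you tile each cell $(F_{t,r-1},F_{t,r})$ by positive-length intervals $(E_{h-1},E_h)$, and your explicit treatment of zero-length intervals makes precise a point the paper's telescoping step uses only implicitly.
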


\begin{proof}
The weights are nonnegative and satisfy
\[
\sum_{h=1}^{Q}w_h
=
\sum_{h=1}^{Q}(E_h-E_{h-1})
=E_Q-E_0
=1.
\]
Moreover, $u_h\in[0,1]$ and $F_{t,t-1}=1$, so
$\sigma_h(t)$ is well-defined for every $h$ and $t$.
Each insertion sequence specifies a deterministic group
allocation, and the weights therefore yield a randomized
allocation $\pi$.

We verify that its insertion marginals are
$\rho_2,\ldots,\rho_T$.
Fix a step $t$ and a position $r\in\{0,\ldots,t-1\}$.
By definition, $\sigma_h(t) = r$ if and only if $F_{t, r-1} < u_h \leq F_{t,r}$.
Suppose $F_{t,r}$ appears as $E_u$ and $F_{t, r-1}$ appears as $E_\ell$ after sorting.
Thus, 
\[
\begin{aligned}
\Pr_{\sigma\sim\pi}[\sigma(t)=r]
& =
\sum_{h\in[Q]}w_h \cdot \mathbf{1}[ F_{t, r-1} < u_h \leq F_{t,r}] \\
& =
\sum_{h\in[Q]}w_h \cdot \mathbf{1}[u_h \leq E_u] - \sum_{h\in[Q]}w_h \cdot \mathbf{1}[u_h \leq E_\ell]\\
& = \sum_{h=1}^{u}(E_h-E_{h-1}) - \sum_{h=1}^{\ell}(E_h-E_{h-1}) \\
& = E_u - E_\ell = F_{t, r} - F_{t, r-1} = \rho_{t,r}.
\end{aligned}
\]
Hence every prescribed insertion marginal is preserved.
\end{proof}

We now implement the insertion encodings using Heaviside
gates.
For every $h\in[Q]$, $t=2,\ldots,T$, and
$r=0,\ldots,t-1$, subtraction gates compute
$u_h-F_{t,r-1}$ and $F_{t,r}-u_h$.
A min gate takes these two values as inputs, and its
output is fed into a Heaviside gate.
Denote the resulting output by $z_{h,t,r}\in[0,1]$.
At every fixed point, these outputs satisfy
\[
z_{h,t,r}
\in
H
\left(
\min\{u_h-F_{t,r-1},\,F_{t,r}-u_h\}
\right).
\]

Suppose that the circuit is at a fixed point and $w_h>0$.
The midpoint $u_h$ then differs from every entry of $F_t$.
Thus,
\[
z_{h,t,r}
=
\begin{cases}
1,&F_{t,r-1}<u_h<F_{t,r},\\
0,&\text{otherwise}.
\end{cases}
\]
Exactly one position satisfies these inequalities,
namely $\sigma_h(t)$ from
\Cref{lem:quantile-coupling}.
Thus the circuit outputs the insertion encoding
\[
z_{h,t}
=
(z_{h,t,0},\ldots,z_{h,t,t-1})
=
e_{t,\sigma_h(t)}.
\]
For zero-weight intervals, the outputs $z_{h,t}$
need not belong to $\{0,1\}$, but such intervals do not contribute
to the allocation distribution.
The construction uses $O(QT^2)=O(T^4)$ gates.

Next is to recover the group positions from insertion encodings.
For each $h\in[Q]$ and $w_h > 0$, arithmetic gates can compute the
$\sigma_h(t) = \sum_{r=0}^{t-1} r z_{h,t,r}$.
For each group $g_t$, we track its position through
the subsequent insertions using $T-t$ circuit layers.
The initial node is
\[
\mathrm{Idx}_{h,t}^{(0)}
=
\sigma_h(t).
\]
For each $k=0,\ldots,T-t-1$, the next layer outputs
\[
\mathrm{Idx}_{h,t}^{(k+1)}
=
\mathrm{Idx}_{h,t}^{(k)}
+
H
\left(
\mathrm{Idx}_{h,t}^{(k)}
+\frac12
-\sigma_h(t+k+1)
\right).
\]
This construction uses $O(T^2)$ gates for each $h$.

\begin{lemma}
The final output
$\mathrm{Idx}_{h,t}=\mathrm{Idx}_{h,t}^{(T-t)}$
records the position of $g_t$ in the deterministic allocation $\sigma_h$
at the fixed point and $w_h > 0$.
\end{lemma}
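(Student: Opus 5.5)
The proof is an induction on the layer index $k$. The invariant is that $\mathrm{Idx}_{h,t}^{(k)}$ equals the $0$-indexed position of $g_t$ in the partial ordering obtained after executing the insertions of steps $2,\ldots,t+k$ of $\sigma_h$. Here position $p$ means exactly $p$ groups precede $g_t$. This matches the convention that an insertion position $r$ means exactly $r$ earlier groups precede the inserted one.

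\textbf{Base case ($k=0$).} Right after step $t$, $g_t$ has been inserted at position $\sigma_h(t)$, so $\mathrm{Idx}_{h,t}^{(0)}=\sigma_h(t)$ is correct. For $t=1$ one sets $\mathrm{Idx}^{(0)}_{h,1}=0$.

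\textbf{Preliminary fact.} Since we are at a fixed point with $w_h>0$, the earlier discussion shows $z_{h,t'}=e_{t',\sigma_h(t')}$ for every $t'$. Hence the computed $\sigma_h(t')=\sum_r r\,z_{h,t',r}$ is an exact integer in $\{0,\ldots,t'-1\}$, and all arguments fed to the Heaviside gates are exact.

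\textbf{Inductive step.} Suppose $\mathrm{Idx}_{h,t}^{(k)}=p$ is the integer position of $g_t$ before step $t+k+1$. Group $g_{t+k+1}$ is then inserted so that exactly $q:=\sigma_h(t+k+1)$ previously present groups precede it.
\begin{itemize}
\item If $q\le p$, the new group lands among the groups in front of $g_t$, so $g_t$ is shifted to $p+1$.
\item If $q\ge p+1$, then $g_t$ is among the $q$ groups preceding the new one, and $g_t$ stays at $p$.
\end{itemize}
The Heaviside argument is $p+\tfrac12-q$, a half-integer and therefore never $0$. It is positive exactly when $q\le p$. So the gate output is forced to be $1$ or $0$ correctly, not merely some value in $[0,1]$. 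The recursion therefore returns $p+1$ or $p$ as required, and the result is again an integer, so the invariant is maintained.

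After $T-t$ layers all insertions through step $T$ have been executed, which gives the claim.

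\textbf{Main obstacle.} The only delicate point is that Heaviside pseudogates are only guaranteed to behave as the step function at a fixed point, and only when their input is nonzero. The argument must therefore use two facts: integrality of the $\sigma_h(\cdot)$, which relies on $w_h>0$, and the $\tfrac12$ offset, which keeps every Heaviside input away from $0$. With these, every layer's output is exact.
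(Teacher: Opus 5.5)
Your proposal is correct and follows the paper's argument: induction over the $T-t$ layers. The key facts are that $g_t$ is shifted exactly when $\sigma_h(t+k+1)\le \mathrm{Idx}_{h,t}^{(k)}$, and that integrality together with the $\tfrac12$ offset keeps every Heaviside input nonzero, so each gate acts as the exact indicator. You also make explicit two points the paper leaves implicit: that $w_h>0$ forces $z_{h,t'}=e_{t',\sigma_h(t')}$, so the computed $\sigma_h(t')$ is an exact integer, and the convention $\mathrm{Idx}_{h,1}^{(0)}=0$ for the first group.
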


\begin{proof}
Consider a fixed point
and an index $h$ with $w_h>0$.
Immediately after its insertion, $g_t$ occupies
position $\sigma_h(t)$.
Suppose that $\mathrm{Idx}_{h,t}^{(k)}$ correctly
records its position after step $t+k$.
Inserting $g_{t+k+1}$ shifts $g_t$ one position to
the right exactly when
\[
\sigma_h(t+k+1)
\leq
\mathrm{Idx}_{h,t}^{(k)}.
\]
Both quantities are integers, so the input to the
Heaviside gate \emph{cannot} be zero.
Thus, Heaviside gate is exactly the
indicator of this event, and the addition gate
correctly updates the position.
Induction therefore shows that $\mathrm{Idx}_{h,t}$
is the final position of $g_t$ in $\sigma_h$.
\end{proof}

Finally, expand group allocation into item allocation.
For each $h\in[Q]$ and distinct $s,t\in[T]$, a
Heaviside gate produces an output $\beta_{h,s,t}$
satisfying, at every fixed point,
\[
\beta_{h,s,t}
\in H
\left(
    \mathrm{Idx}_{h,t}-\mathrm{Idx}_{h,s}-\frac12
\right).
\]
Using the fixed internal ordering of each group,
arithmetic gates compute the position of its
$\ell$-th item as
\[
\mathrm{ItemIdx}_{h,t,\ell}
=
\sum_{s\in[T], s\neq t}
m_{g_s}\,\beta_{h,s,t}
+\ell.
\]
At every fixed point with $w_h>0$,
$\beta_{h,s,t}$ equals $1$ exactly when $g_s$
precedes $g_t$ in $\sigma_h$.
Thus the sum counts all items in preceding groups,
and adding $\ell$ gives the item's position,
indexed starting from $1$.

\subsection{GSP Payments}

We first compute the next bid belonging to another
bidder using the group positions.
For every $h\in[Q]$ and $t\in[T]$, compute
\[
N_{h,t}
=
\min\left(
\{T\}\cup
\left\{
\beta_{h,t,s}\mathrm{Idx}_{h,s}
+(1-\beta_{h,t,s})T
:
s\in[T],\ \iota(g_s)\neq\iota(g_t)
\right\}
\right).
\]
At every fixed point with $w_h>0$,
$\beta_{h,t,s}=1$ exactly when $g_s$ follows $g_t$.
Thus, a candidate group contributes its position
if it follows $g_t$, and contributes $T$ otherwise.
Since group positions range from $0$ to $T-1$,
$N_{h,t}$ is the position of the first subsequent
group belonging to another bidder, with $T$
indicating that no such group exists.

Here and below, absolute values are implemented
using max gates.
Then compute
\[
J_{h,t}
=
\sum_{s\in[T]}
b_s(\alpha)
H\left(\frac12-\left|\mathrm{Idx}_{h,s}-N_{h,t}\right| \right).
\]
At every fixed point with $w_h>0$, the group
positions are distinct integers, so these selectors
identify precisely the group at position $N_{h,t}$.
If $N_{h,t}=T$, all selectors are zero.
Consequently, $J_{h,t}$ is the next bid belonging
to another bidder or zero if none exists, as required by the GSP rule.
All items in $g_t$ share this per-click price,
since each group forms a contiguous block.

We next retrieve each item's CTR using its item
position. Using the identical idea,
compute
\[
C_{h,t,\ell}
=
\sum_{k=1}^{K}c_k H\left(
\frac12-
\left|\mathrm{ItemIdx}_{h,t,\ell}-k\right|
\right).
\]
At every fixed point with $w_h>0$, item positions
are integers, so Heaviside function outputs $1$ exactly when
the item occupies position $k$.
Hence $C_{h,t,\ell}$ equals the CTR of that
position, and equals zero for positions beyond $K$.

A multiplication gate computes each item's payment:
\[
p_{h,t,\ell}
=
J_{h,t}\,C_{h,t,\ell}.
\]
For every $h\in[Q]$ and $i\in[n]$, addition gates
aggregate these payments over bidder $i$'s items:
\[
P_{h,i}
=
\sum_{\substack{t\in[T]\\\iota(g_t)=i}}
\sum_{\ell=1}^{m_{g_t}}p_{h,t,\ell}.
\]
Finally, multiplication and addition gates compute
\begin{equation}
\label{eq:mutiplication between variables}
P_i
=
\sum_{h=1}^{Q}w_h P_{h,i}.
\end{equation}

All these computations are performed for every
$h\in[Q]$, including indices with $w_h=0$.
At every fixed point, each positive-weight term
$P_{h,i}$ is bidder $i$'s total GSP payment under
$\sigma_h$ and the bids induced by $\alpha$.
Therefore, $P_i$ is exactly bidder $i$'s expected
GSP payment under the randomized allocation $\pi$.

\subsection{Fixed Point in Budget Case}
Set the updated multiplier profile $\tilde \alpha$ as the outputs of the circuit, 
together with all auxilary varibles $\tilde \zeta $ used in Heaviside function and 
the representation of randomized allocation $\tilde \pi = (w_h, \mathrm{ItemIdx}_h)_{h \in [Q]}$. 
For each $i \in [n]$, use
\[
\tilde \alpha_i = \max \left\{0, \min \left\{1,\alpha_i + B_i - P_i \right\} \right\}.
\]

\begin{lemma}
    When the circuit is at the fixed point, namely $F(\alpha, \zeta, \pi) = (\alpha, \zeta, \pi)$,
    $(\alpha, \pi)$ is alwasy an equilibrium for ABE-B.
\end{lemma}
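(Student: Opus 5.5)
The plan is to check the three conditions of \Cref{definition: pacing equilibrium budget} in turn. Each condition will be read off from the fixed-point equations of the circuit together with the lemmas already proved about its components. Throughout, we fix a fixed point $(\alpha,\zeta,\pi)$. All Heaviside pseudogates then satisfy their specified input--output relations, so every earlier lemma stated ``at the fixed point'' applies.

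\emph{Condition (a).} By \Cref{lem:legal-insertion-marginals}, each $\rho_t$ is a probability vector supported on $\mathcal R_t(\alpha)$. By \Cref{lem:quantile-coupling}, the mixture assigning weight $w_h$ to $\sigma_h$ is a randomized allocation whose insertion marginals are exactly $\rho_2,\dots,\rho_T$. The lemma on marginals supported on legal positions then shows that every $\sigma_h$ with $w_h>0$ is bid-valid at the group level.

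It remains to pass from groups to items. All items in a group share the bid $v_g\alpha_{\iota(g)}$, so expanding each group into a contiguous block, in its fixed internal order, keeps the item bids nonincreasing. The position-tracking lemma and the $\beta$-gates show that $\mathrm{ItemIdx}_{h,t,\ell}$ are precisely these expanded positions whenever $w_h>0$. Hence the support of $\pi$ consists of valid item allocations, and zero-weight indices $h$ are irrelevant.

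\emph{Payment correctness.} Next I will confirm that $P_i$ equals $P_i(\pi,\mathbf b)$. For $w_h>0$, the argument above shows that $J_{h,t}$ is the GSP per-click price without self-pricing. One point needs checking: the next bid from another bidder, taken at the group level, coincides with the item-level definition. This holds because the items strictly between an item of $g_t$ and the first subsequent group of another bidder all belong either to $g_t$ or to other groups of bidder $\iota(g_t)$, and none of these may set the price. The bids are sorted, so the first such group carries the maximum bid among later items of other bidders. In addition, $C_{h,t,\ell}$ is the correct CTR, including $0$ beyond $K$. Summing over items gives $P_{h,i}=P_i(\sigma_h,\mathbf b)$. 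Terms with $w_h=0$ contribute $0$ to \eqref{eq:mutiplication between variables} whatever their value, so $P_i=\mathbb E_{a\sim\pi}[P_i(a,\mathbf b)]$.

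\emph{Conditions (b) and (c).} These come from the fixed-point equation $\alpha_i=\max\{0,\min\{1,\alpha_i+B_i-P_i\}\}$. We split into cases. If $0<\alpha_i<1$, then $\alpha_i=\alpha_i+B_i-P_i$, so $P_i=B_i$. If $\alpha_i=1$, then $\min\{1,1+B_i-P_i\}=1$ forces $B_i\ge P_i$, and maximal pacing holds trivially. If $\alpha_i=0$, then $\min\{1,B_i-P_i\}\le 0$, so $P_i\ge B_i>0$. However, all of bidder $i$'s bids are $0$, and every payment is a CTR times a bid of \emph{another} bidder, so this case still needs an argument.

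The main obstacle is this $\alpha_i=0$ case. I would rule it out by showing that $P_i$ is bounded by $B_i$, as follows. When $\alpha_i=0$, every item of bidder $i$ bids $0$, and all other bids are nonnegative. Any item priced above $0$ needs a lower-ranked item of another bidder with a positive bid, yet in a valid order the items below a zero bid all have zero bids. So every item of $i$ has per-click price $0$, giving $P_i=0<B_i$. This contradicts $P_i\ge B_i$, so $\alpha_i>0$. Combining the cases yields $P_i\le B_i$ for all $i$, and $P_i<B_i$ implies $\alpha_i=1$, which establishes (b) and (c).
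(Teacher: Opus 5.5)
Your proposal is correct and follows essentially the same route as the paper's proof. Both rely on the construction lemmas for bid-validity and payment correctness, and then read off (b) and (c) from the clipped update. The only difference is organization: you do a case split on $\alpha_i\in\{0\}\cup(0,1)\cup\{1\}$ and rule out $\alpha_i=0$ outright, while the paper argues by contradiction from $P_i>B_i$. The key observation is the same in both: zero bids, together with bid-validity, force all of bidder $i$'s GSP prices to be zero.
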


\begin{proof}
    Let $(\alpha,\zeta,\pi)$ be a fixed point of the
full circuit.
By the preceding construction, $\pi$ is bid-valid
under the bids induced by $\alpha$, and $P_i$
is bidder $i$'s expected GSP payment.
Moreover, for every $i\in[n]$, $\alpha_i\in[0,1]$.

We first establish budget feasibility.
Suppose, towards a contradiction, that $P_i>B_i$
for some bidder $i$.
If $\alpha_i>0$, the update strictly decreases
$\alpha_i$, contradicting the fixed-point identity.
Hence $\alpha_i=0$, and all items of bidder $i$
have bid zero.
For every $h$ with $w_h>0$ and every group $g_t$
owned by $i$, bid-validity and nonnegativity of bids
therefore imply
\[
0\leq J_{h,t}\leq b_t(\alpha)=0.
\]
Thus every item of bidder $i$ incurs zero payment
in every allocation in the support of $\pi$.
It follows that $P_i=0$, contradicting
$P_i>B_i>0$.
Consequently, $P_i\leq B_i$ for every bidder.

It remains to verify maximal pacing.
If $P_i<B_i$ and $\alpha_i<1$, then
$\tilde \alpha_i > \alpha_i$,
again contradicting the fixed-point identity.
Therefore, $P_i<B_i$ implies $\alpha_i=1$.
These conclude that $(\alpha,\pi)$ is an ABE-B.
\end{proof}

\subsection{Fixed Point in ROI case}
Additionally use $C_{h,t,\ell}$ to compute
bidder $i$'s total valuation $V_{h,i}$ under
allocation $\sigma_h$ and its expected valuation
$V_i$ under $\pi$.
These computations follow the same circuit
construction as the payments, with $v_{g_t}$
replacing $J_{h,t}$.
For each $i\in[n]$, the circuit computes the
updated multiplier as
\[
\tilde \alpha_i = \max \left\{1, \min \left\{A,\alpha_i + V_i - P_i \right\} \right\}.
\]

\begin{lemma}
    When the circuit is at the fixed point, namely $F(\alpha, \zeta, \pi) = (\alpha, \zeta, \pi)$,
    $(\alpha, \pi)$ is alwasy an equilibrium for ABE-B.
\end{lemma}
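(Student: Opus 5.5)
The statement is mislabeled "ABE-B" but sits in the ROI subsection, so I read it as: at a fixed point, $(\alpha,\pi)$ is an ABE-R. I would follow the budget-case proof closely. Let $(\alpha,\zeta,\pi)$ be a fixed point of the full circuit with the ROI update
\[
\tilde\alpha_i=\max\{1,\min\{A,\alpha_i+V_i-P_i\}\}.
\]

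First I would collect what the earlier lemmas give. By \Cref{lem:legal-insertion-marginals} and the lemma on legal marginals, the mixture of \Cref{lem:quantile-coupling} is bid-valid under the bids $v_{ij}\alpha_i$, so condition~(a) holds. At a fixed point and for every $h$ with $w_h>0$, the position-tracking and expansion lemmas show that $\mathrm{ItemIdx}$, $J_{h,t}$ and $C_{h,t,\ell}$ are the true item positions, GSP prices without self-pricing, and CTRs. Hence $P_i$ and $V_i$ are exactly bidder $i$'s expected payment and expected value under $\pi$. Since the update clamps to $[1,A]$, we also have $\alpha_i\in[1,A]$.

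Next, maximal pacing is immediate. If $V_i>P_i$ and $\alpha_i<A$, then $\tilde\alpha_i>\alpha_i$, which contradicts the fixed point. So a slack constraint forces $\alpha_i=A$.

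The main step is ROI feasibility. Suppose $P_i>V_i$. The fixed-point identity forces $\alpha_i=1$, since otherwise the update strictly decreases $\alpha_i$. In the budget case the analogous situation gave zero bids. Here the argument must instead be per item. For any supported $\sigma_h$ and any item of group $g_t$ owned by $i$, bid-validity puts every later item, including those of other bidders, at bid at most $b_t(\alpha)=\alpha_i v_{g_t}=v_{g_t}$. Hence $J_{h,t}\le v_{g_t}$. Multiplying by $C_{h,t,\ell}\ge0$ gives $p_{h,t,\ell}\le C_{h,t,\ell}v_{g_t}$, and summing over items and $h$ with weights $w_h$ gives $P_i\le V_i$, a contradiction. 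Therefore $V_i\ge P_i$ for all $i$.

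The one point needing care is that zero-weight intervals $h$ may carry meaningless values of $J$ and $C$. They contribute nothing to $P_i$ or $V_i$ because both are multiplied by $w_h=0$, so restricting the per-item argument to $w_h>0$ suffices. Combining the three conditions shows that $(\alpha,\pi)$ is an ABE-R.
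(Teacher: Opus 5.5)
Your proposal is correct and follows essentially the same route as the paper's proof. You read the statement as ABE-R, force $\alpha_i=1$ under a violated ROI constraint via the fixed-point identity, and then use the bound $J_{h,t}\le b_t(\alpha)=v_{g_t}$ on positive-weight allocations to get $P_i\le V_i$; maximal pacing follows from the update rule, and your treatment of zero-weight intervals matches the paper's restriction to $w_h>0$.
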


\begin{proof}
Let $(\alpha,\zeta,\pi)$ be a fixed point of the
full circuit.
By the preceding construction, $\pi$ is bid-valid
under the bids induced by $\alpha$, and $P_i$
and $V_i$ are bidder $i$'s expected payment and
valuation, respectively.
Moreover, $\alpha_i\in[1,A]$ for every $i\in[n]$.

We first establish ROI feasibility.
Suppose, towards a contradiction, that $P_i>V_i$
for some bidder $i$.
If $\alpha_i>1$, the update strictly decreases
$\alpha_i$, contradicting the fixed-point identity.
Hence $\alpha_i=1$.
For every $h$ with $w_h>0$ and every group $g_t$
owned by $i$, bid-validity implies
\[
0\leq J_{h,t}\leq b_t(\alpha)=v_{g_t}.
\]
Since $C_{h,t,\ell}\geq0$, each item of $g_t$
therefore satisfies
\[
p_{h,t,\ell}
=
J_{h,t}C_{h,t,\ell}
\leq
v_{g_t}C_{h,t,\ell}.
\]
Summing over bidder $i$'s items gives
$P_{h,i}\leq V_{h,i}$ for every $h$ with $w_h>0$.
Taking the probability-weighted sum yields
$P_i\leq V_i$, contradicting our assumption.
Thus every bidder satisfies the ROI constraint.

It remains to verify maximal pacing.
If $V_i>P_i$ and $\alpha_i<A$, then the update
gives $\tilde{\alpha}_i>\alpha_i$, again
contradicting the fixed-point identity.
Therefore, $V_i>P_i$ implies $\alpha_i=A$.
\end{proof}

Additionally, we want to point out a property of both ABE-B and ABE-R which
we believe is helpful for developing concrete algorithms for equlibrium computation
in sponsored shopping. It follows directly from our circuit for each allocation is obtained by first constructing a
permutation of the groups and then expanding each group into its constituent items in a fixed internal order.

\begin{corollary}
In both the budget-constrained and ROI-constrained settings,
there exists an exact equilibrium
in which all items belonging to the same bidder and having the
same value occupy consecutive positions in every deterministic
allocation in its support.
\end{corollary}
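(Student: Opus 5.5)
The plan is to read the corollary off the circuit constructed above, combined with the existence of a fixed point. The full map $F(\alpha,\zeta,\pi)$ built in this section is a continuous (piecewise-)algebraic map from a compact convex domain to itself. The multipliers are clamped to $[0,1]$ or $[1,A]$, the Heaviside pseudogate auxiliaries live in bounded intervals, and the representation $(w_h,\mathrm{ItemIdx}_h)$ consists of bounded quantities. Brouwer's theorem therefore gives a fixed point. By the two fixed-point lemmas for the budget and ROI circuits, the pair $(\alpha,\pi)$ read off from this fixed point is an exact ABE-B, respectively ABE-R. Here $\pi$ assigns weight $w_h$ to the deterministic allocation determined by $\mathrm{ItemIdx}_{h}$.

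The remaining step is structural. Fix any $h$ with $w_h>0$, so $\sigma_h$ lies in the support of $\pi$.
\begin{itemize}
\item By \Cref{lem:quantile-coupling} and the lemma on $\mathrm{Idx}_{h,t}$, the indicators $\beta_{h,s,t}$ are exactly $\mathbf 1[g_s \text{ precedes } g_t \text{ in } \sigma_h]$, and the group positions $\mathrm{Idx}_{h,t}$ form a permutation of $\{0,\ldots,T-1\}$.
\item Hence, for the $m_{g_t}$ items of $g_t$, the positions $\mathrm{ItemIdx}_{h,t,\ell}=S_{h,t}+\ell$ with $\ell=1,\ldots,m_{g_t}$ form an interval of consecutive integers. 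Here $S_{h,t}=\sum_{s\ne t} m_{g_s}\beta_{h,s,t}$ counts the items of all preceding groups.
\item These intervals for distinct groups are disjoint and tile $\{1,\ldots,M\}$. Ordered by $\mathrm{Idx}_{h,t}$, each interval starts right after the total multiplicity of the groups before it.
\end{itemize}
So each item allocation in the support is a genuine permutation of $[M]$ in which every group, meaning all items of one bidder sharing one value, occupies a consecutive block.

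Finally, I would check that this expansion is the allocation whose payments the circuit computes. Within a group all bids are equal, so bid-validity of the group order implies bid-validity of the expanded item order. The GSP price $J_{h,t}$ is shared by the whole block, because the next bid from another bidder is the same for every item of a contiguous block. The CTRs $C_{h,t,\ell}$ are read from these same positions. Thus the equilibrium conditions verified in the lemmas refer to exactly these block-structured allocations.

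The main obstacle is making sure that zero-weight intervals, where the $z$, $\beta$ and $\mathrm{ItemIdx}$ values may be fractional, do not enter the support. This is handled by defining $\pi$ only over indices $h$ with $w_h>0$, and by noting that $P_i$ and $V_i$ weight the other terms by $w_h=0$.
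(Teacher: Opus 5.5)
Your proposal is correct and takes the same route as the paper, which simply observes that every supported allocation is obtained by first forming a permutation $\sigma_h$ of the groups and then expanding each group into its items in a fixed internal order. You make this explicit through the formula for $\mathrm{ItemIdx}_{h,t,\ell}$ at positive-weight indices, and you add the Brouwer step guaranteeing that the circuit has a fixed point, which the paper leaves implicit in its \FIXP\ construction.
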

\section{\PPAD-membership}
A natural question is whether our \FIXP\ upper bound for computing exact autobidding equilibria can be
improved to \PPAD. In this section, we construct rational instances in both the budget-constrained and ROI-constrained
settings for which every exact equilibrium has an irrational pacing multiplier.
Thus, exact equilibria need not admit rational representations, presenting an obstruction to \PPAD-membership. This 
motivates considering approximate equilibria, for which we establish \PPAD-membership in the following subsections.

\subsection{Instances with Only Irrational Equilibria}

\begin{theorem}
    There exists a GSP shopping auction with rational bidder values, rational finite budgets, and rational 
    click-through rates such that every autobidding equilibrium under budget constraints has an irrational pacing 
    multiplier. The corresponding problem of computing an exact autobidding equilibrium is therefore not in \PPAD\
    under standard rational representation.
\end{theorem}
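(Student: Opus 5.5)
I will construct a small explicit instance in which the equilibrium conditions force a multiplier to satisfy a quadratic equation with irrational roots. The natural source of nonlinearity in ABE-B is the GSP payment. When a bidder $i$ is budget-constrained, $P_i(\pi,\mathbf b)=B_i$. Here $P_i$ is a sum of terms $c_k\cdot\alpha_{i'}v_{i'j'}$, whose coefficients depend on which other bidder prices bidder $i$'s items. If bidder $1$'s payment is linear in $\alpha_2$ and bidder $2$'s payment is linear in $\alpha_1$, then the two tight budget equations are linear. So I will build in tie-breaking or an allocation probability depending on a multiplier, producing products such as $\alpha_1\alpha_2$.

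Concretely, I aim for the simplest mechanism of this kind: two bidders whose items interleave in the ranking. One item's price is set by the other bidder's bid, and the item's rank (hence its CTR) depends on a comparison. The plan is to force, in every equilibrium, a structure in which the ranking is deterministic and strict. Then both bidders are budget-constrained with $\alpha_1,\alpha_2\in(0,1)$, giving equations of the form $a\alpha_2+b\alpha_1=B_1$ and $c\alpha_1\alpha_2 = B_2$. I would arrange the second equation to be bilinear by having bidder~$2$'s item priced by a bid of bidder~$1$, say $\alpha_1 v_{11}$. Its CTR would be fixed, but bidder~$2$ would also pay, through a third auxiliary bidder or a multi-item block, an amount proportional to $\alpha_2$. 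If that yields only linear terms, I would instead use three bidders in a cycle, so that elimination gives $\alpha_1^2=2$-type relations, e.g.\ $\alpha_1\alpha_2=\tfrac12$ and $\alpha_2=2\alpha_1$ up to scaling, giving $\alpha_1=1/2$—rational, so coefficients must be chosen so the discriminant is a non-square. The first task is therefore a careful search for rational values, CTRs and budgets for which the solved system yields, for instance, $\alpha_1=\sqrt2-1$. A plain pacing system without products is always linear with rational solutions, so a genuine product is essential. It should come from an item's CTR or allocation depending on a multiplier—most cleanly from a tie that persists at equilibrium, with randomization weight $q$ as an extra unknown. Then $P_i$ contains $q\alpha_j$ while another bidder's payment contains $q$ linearly.

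The proof then has two parts. First, a case analysis over all possible orderings (including ties and arbitrary tie-breaking distributions) shows that any ABE-B must lie in the intended regime: each alternative ordering leads either to a budget violation or to a slack budget with $\alpha_i<1$. The analysis must use conditions (b) and (c) of \Cref{definition: pacing equilibrium budget}. Second, in the intended regime the tight budget equations reduce to a quadratic with rational coefficients and a non-square discriminant, and its unique root in $[0,1]$ is irrational. The final sentence about \PPAD\ then follows because any \PPAD\ problem has rational solutions of polynomial size, while here no rational output exists.

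I expect the main obstacle to be the first part: excluding every other equilibrium configuration, especially those involving ties with arbitrary randomized tie-breaking, where extra freedom might allow rational solutions. I would try to keep the instance minimal (two or three bidders, at most two slots) so that the enumeration of orderings is short. I would choose budgets with enough margin that the boundary cases $\alpha_i\in\{0,1\}$ are immediately infeasible.
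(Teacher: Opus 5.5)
\emph{There is a genuine gap.} Your proposal is a plan, not a proof. The statement is an existence claim, and its whole content is an explicit instance together with a complete classification of that instance's equilibria; you supply neither. The choice of values, CTRs and budgets is deferred to ``a careful search,'' and the exclusion of all other configurations is left as the ``main obstacle,'' so no part of the argument is actually carried out.

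Your eventual diagnosis is correct and matches the paper's mechanism: irrationality must come from a tie that persists in every equilibrium, with the tie-breaking probability as an extra unknown that multiplies the multipliers. The earlier routes you sketch cannot work, however. Under a fixed valid allocation, each item of bidder $i$ is priced at the bid $\alpha_{i'}v_{i'j'}$ of the first lower-ranked item of another bidder. So $P_i$ is a rational linear form in the \emph{other} bidders' multipliers. It never contains $\alpha_i$ itself and never contains a product such as $\alpha_1\alpha_2$. A strict deterministic ranking, whether with two bidders or three in a cycle, therefore yields only rational linear equations and strict or weak linear inequalities. Any nonempty set of that form contains a rational point, so equations like $c\alpha_1\alpha_2=B_2$ never arise.

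Concretely, the paper uses three bidders and six slots, all CTRs equal to $1$:
\begin{itemize}
\item bidder~$1$ has values $1000,1$ and budget $100$;
\item bidder~$2$ has values $2000,300,10$ and budget $30$;
\item bidder~$3$ has one item of value $10$ and a budget it can never exhaust (any finite budget above $10$ suffices), so $\alpha_3=1$.
\end{itemize}
A case analysis on which budgets bind eliminates every configuration with a slack budget. Comparing the two top bids then eliminates $2000\alpha_2\neq1000\alpha_1$:
\begin{itemize}
\item if $2000\alpha_2>1000\alpha_1$, bidder~$2$ pays at least $1000\alpha_1$, which exceeds $30$ because bidder~$1$'s tight budget forces $1001\alpha_1\geq100$;
\item if $2000\alpha_2<1000\alpha_1$, bidder~$2$ can reach $30$ only with all three items priced at $10$, which forces $\alpha_2=1$ and contradicts the order.
\end{itemize}
In the remaining tie, only the two orderings of the tied pair are valid. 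With weights $x$ and $1-x$, the tight budgets read $1000\alpha_1x+150\alpha_1(1-x)=100$ and $(20+\alpha_1)x+(10+1001\alpha_1)(1-x)=30$. Eliminating $\alpha_1$ gives $170x^2-2310x+1942=0$, whose discriminant $4015540$ is not a perfect square. Hence $x$, and with it $\alpha_1=2/(3+17x)$, is irrational.

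Note the counting you would also have to arrange. The tie equation removes one multiplier, leaving exactly the two unknowns $(\alpha_1,x)$ for the two tight budget equations. The system therefore has finitely many solutions, and a single quadratic decides rationality.
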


\begin{proof}
Suppose there are three agents and six items for sale with all click-through rates equal to $1$. The pacing
multiplier of each agent is denoted as $\alpha_i$. The budget of
each agent is denoted as $B_i$.
\begin{itemize}
\item Agent~1 has items $1000, 1$ and budget $B_1 = 100$.
\item Agent~2 has items $2000, 300, 10$ and budget $B_2 = 30$.
\item Agent~3 has items $10$ and an unbounded budget $B_3 = \infty$.
\end{itemize}

We discuss possible equilibria case by case:
\begin{itemize}
\item[(a)] Suppose, for contradiction, the payment of agent~1 is strictly less than $100$ and the payment of agent~2 strictly less than $30$. Each agent should be maximally paced, thus $\alpha_1 = \alpha_2 = 1$. All bids
can be sorted as $2000, 1000, 300, 10, 10, 1$. Obviously, payments of both
agent~1 and agent~2 exceed their budgets.
\item[(b)] Suppose the payment of agent~1 strictly less than $100$ and the payment of agent~2 equals $30$ for contradiction. Agent~1 should be maximally paced, thus $\alpha_1 = 1$. Agent~2 must spend under the budget, thus the bid $1000 \alpha_1 = 1000$ from agent~1 must be higher than any bid from agent~2. Additionally, the bid $10$ from agent~3 can't be higher than any bid from agent~2, otherwise the payment of agent~2 will be strictly less than $30$. Thus $\alpha_2 = 1$, which violates  $2000\alpha_2 > 1000\alpha_1$.
\item[(c)] Suppose that the payment of agent~1 equals to $100$ and the payment of agent~2 strictly less than $30$. Agent~2 must be maximally paced, thus $\alpha_2 = 1$. With the restriction of payment of agent~1, the only possible bid ranking is $2000, 1000\alpha_1=300, 10, 10, \alpha_1$. But it leads to the payment of agent~2 being greater than $30$. 
\item[(d)] Thus, the only possible equilibrium can be attained when the payment of agent~1 equals to $100$ and the payment of agent~2 equals to $30$. Compare the highest bid of Agent~1 and Agent~2:
  \begin{itemize}
  \item[(i)] When $2000\alpha_2 > 1000\alpha_1$, agent~1 wins the last slot for free and spends the whole budget on the bid $1000\alpha_1$. By the definition of second-price auction, $1000\alpha_1 \geq 100$ and then $2000\alpha_2 > 100$. Thus the payment of agent~2 would exceed $30$.
  \item[(ii)] When $2000\alpha_2 < 1000\alpha_1$, the ranking of bids from agent~1 and agent~3 is fixed as $1000\alpha_1, 10, \alpha_1$. No bid from agent~2 can lie between $10$ and $\alpha_1$, otherwise the payment of agent~2 would be strictly less than $30$. As $\alpha_2 \in [0,1]$, it implies $\alpha_2 = 1$ and a contradiction $2000\alpha_2 < 1000\alpha_1$.
  \item[(iii)] Finally, the only possible case is $2000\alpha_2 = 1000 \alpha_1$. Then the ranking of bids is now $2000\alpha_2 = 1000\alpha_1, 300\alpha_2, 10, \allowbreak 10\alpha_2, \alpha_1$ as $1000\alpha_1 \geq 100$. The randomized allocations are as follows:
  \begin{itemize}
  \item[(1)] $1,2,2,3,2,1$ with second-price $1000\alpha_1, 10, 10,
    5\alpha_1, \alpha_1, 0$.
  \item[(2)] $2,1,2,3,2,1$ with second-price $1000\alpha_1, 150\alpha_1,
    10, 5\alpha_1, \alpha_1, 0$.
  \end{itemize}
  Denote the probability of case~$(1)$ as $x$ and the probability of
  case~$(2)$ as $1 - x$. Calculate the payments of agents
  individually to get two equations. For agent~1, the payment is
  \[
  1000\alpha_1x + 150\alpha_1(1 - x) = 100.
  \]
  For agent~2, the payment is
  \[
  (20 + \alpha_1)x + (10 + 1000\alpha_1 + \alpha_1)(1 - x) = 30.
  \]
  Thus,
  $$
  170x^2 - 2310x + 1942 = 0  
  $$
  has a single valid root
  $$
  x = 231/34 - \sqrt{200777/5}/34.
  $$
\end{itemize}
\end{itemize}
The corresponding parameter values are approximately $\alpha_1 \approx 0.109254$ and $x \approx 0.900349$. 
\end{proof}

\begin{theorem}
    There is a GSP shopping auction with rational bidder values and rational click-through rates such that every 
    autobidding equilibrium under ROI constraints has an irrational pacing multiplier. The corresponding problem of 
    computing an exact autobidding equilibrium is therefore not in \PPAD\
    under standard rational representation..
\end{theorem}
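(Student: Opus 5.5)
I will mirror the budget construction: build a small rational instance, with all click-through rates equal to $1$ and multipliers in $[1,A]$, and run a case analysis on which ROI constraints are tight. The aim is to show that every equilibrium has a tie between the top bids of two bidders, randomized over a few bid-consistent orders, and that the tightness equations then force a quadratic with irrational roots.

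\textbf{The instance.} I would take three bidders.
\begin{itemize}
\item Bidder~3 has a single item, so it never pays more than its value while $\alpha_3\le A$ would let it bid high. To keep it a fixed benchmark I would rather give it a value small enough that its bid stays below the others' relevant items.
\item Bidders~1 and~2 each have two or three items whose values are chosen so that their top items compete.
\item The lower items act as price setters for the other bidder. Since self-pricing is excluded, bidder~1's low item prices bidder~2's high item and vice versa.
\end{itemize}

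\textbf{Case analysis.} The argument proceeds in the following order.
\begin{enumerate}
\item If both ROI constraints are slack, then $\alpha_1=\alpha_2=A$. Because $A=1+\sum v/\min v$ is huge, the low items priced at $A$ times their values exceed the top items' values, which violates ROI. I expect this to be an easy contradiction.
\item If exactly one constraint is slack, that bidder sits at $A$. Its scaled low item then prices the other bidder's top item above its value unless the orders shift, and I would rule this out as in cases (b) and (c) of the budget proof.
\item If both constraints are tight and the top bids are strictly ordered, the allocation is deterministic. Each tight equation is then linear in one multiplier, e.g.\ $v_{1,1}=\alpha_2 v_{2,2}+\dots$. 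I would choose the values so that the two resulting linear conditions are incompatible with the assumed strict order, in the spirit of cases (d)(i) and (d)(ii).
\item What remains is the tie $\alpha_1 v_{1,1}=\alpha_2 v_{2,1}$, with probability $x$ on the order in which bidder~1 is first and $1-x$ on the other. The two tight ROI equations become bilinear in $(x,\alpha_1)$, since prices are $\alpha$-scaled bids while values are constants. Eliminating $\alpha_1$ gives a quadratic in $x$.
\end{enumerate}

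\textbf{Choosing the values.} This is the main obstacle. In the budget case the right-hand sides were constants, but here the right-hand sides are the values themselves. The values must therefore make the quadratic's discriminant a non-square, keep exactly one root in $(0,1)$ with the corresponding $\alpha_i\in(1,A)$, and keep all assumed bid orders consistent. I would first parametrize the values symbolically, derive the quadratic in closed form, and then search small integers (e.g.\ values like $1000,300,10$ as before) until the discriminant is not a perfect square and the side conditions hold.

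Finally, I would check the non-square condition explicitly. Since exact equilibria then need not be rational, membership in \PPAD\ under standard rational output representation fails, as in the budget theorem.
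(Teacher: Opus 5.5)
\textbf{There is a genuine gap.} You never fix an instance, and the plan does not handle what makes the ROI case hard: multipliers have no natural upper anchor.

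A bidder whose ROI constraint is slack is pushed to the cap $A$. Since $A\min_{i,j}v_{ij}>\sum_{i,j}v_{ij}$, all its bids then exceed every value. At most one bidder can be in this position, and its items sit above everyone else's, so it is priced by the others rather than pricing them. Your step~2 therefore has the mechanism backwards.

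A tight bidder, on the other hand, has an endogenous multiplier, so your bidder~3 cannot serve as a fixed benchmark. If, as you intend, its bid lies below the other bidders' items, it pays nothing, its constraint is slack, and it is pushed to $A$. Otherwise it contributes its own unknown $\alpha_3$ and its own tight equation, which your elimination in step~4 ignores.

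These problems are not hypothetical. Take bidder~1 with items $(1000,1)$, bidder~2 with $(2000,10)$, a single bidder~3 of value $1.002$, and all CTRs equal to $1$. The profile $\alpha_2=A$, $\alpha_1=1.002$, $\alpha_3=1001/1.002$ gives the deterministic order $2000A,\,10A,\,1002,\,1001,\,1.002$. Then:
\begin{itemize}
\item bidder~2 pays $2\cdot 1002=2004<2010$ (slack, at the cap);
\item bidder~1 pays $1001$ (tight);
\item bidder~3 pays $1.002$ (tight).
\end{itemize}
This is a rational equilibrium in which bidder~2 has escaped to the top and the small-value bidder~3 bids $1001$.

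Every configuration with a deterministic allocation is a linear system with rational data, so it has a rational solution whenever it has any. The substance of the proof is therefore ruling out all such configurations, and \emph{searching small integers} gives no mechanism for doing so. Step~3's claim that a strict order of the top bids makes the allocation deterministic also overlooks ties among lower bids, including bidder~3's.

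The paper supplies the missing ingredient through groups of identical single-item agents, whose ties among themselves control their multipliers. Agents~1 and~2 have items $(1000,1)$ and $(2000,10)$.
\begin{itemize}
\item A group of $4999$ identical agents of value $4998$ must share one multiplier $\alpha_3\le 4999/4998$. ROI feasibility then yields the ceiling $1000\alpha_1\le 2000\alpha_2\le 4998\alpha_3$. This stops agents~1 and~2 from escaping to the cap and, via maximal pacing of agent~1, forces the tie $\alpha_1=2\alpha_2$.
\item A group of two identical agents of value $5$ tie with each other, and tightness of the pair gives $5\alpha_4+\alpha_1=10$. So the price setter for agent~2's items moves linearly with $\alpha_1$. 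A single tight bidder in a deterministic position would instead pin a multiplier, since its price must equal its value exactly; in the example above this pins $\alpha_1=1.002$.
\end{itemize}
Only after these reductions do the tight equations of agents~1 and~2, bilinear in $(\alpha_1,x)$, yield the quadratic $999010\alpha_1^2-2992051\alpha_1+10010=0$ with an irrational valid root. Your step~4 matches this last stage. Without auxiliary agents of this kind, or another argument that bounds every multiplier and controls the third bidder's equation, steps~2 and~3 cannot be carried out.
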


\begin{proof}
Set all click rates are $1$ and $A \geq 5$. The game is constructed as follows. The pacing
multiplier of each agent is denoted by $\alpha_i$.
\begin{itemize}
\item Agent~1 has items $1000, 1$ with multiplicity $1$.
\item Agent~2 has items $2000, 10$ with multiplicity $1$.
\item Group~3 has $4999$ agents and each agent has item $4998$.
\item Group~4 has $2$ agents and each agent has an item $5$.
\end{itemize}

We first show that all agents in group~3 must have the same multiplier. 

Let \(v=4998\), and let \(\bar\alpha\) be the largest multiplier among agents in group 3. 
Suppose, for contradiction, that there is another group-3 bid \(b'\) satisfying $v < b' < \bar\alpha v$.
Consider any group-3 agent whose bid is \(\bar\alpha v\). If this agent is not ranked last among the agents tied at bid \(\bar\alpha v\), then its GSP price is \(\bar\alpha v>v\). If it is ranked last among this tied class, then its GSP price is at least \(b'>v\). Hence its expected payment is strictly larger than \(v\), violating ROI feasibility. Therefore every group-3 bid below the maximum group-3 bid is at most \(v\). Since multipliers are at least \(1\), every such bid must be exactly \(v\), i.e., the corresponding multiplier is \(1\). Thus, if the group-3 multipliers are not all identical, they can only take the two values \(1\) and \(\alpha_3\). 
For any group-3 agent with multiplier \(1<A\), ROI cannot be slack, so its expected payment must equal \(v=4998\). This is possible only if some non-group-3 bid is also equal to \(4998\); otherwise at least one such group-3 agent is ranked last among the bids equal to \(4998\) with positive probability and pays strictly less than \(4998\), contradicting maximal pacing.

Moreover, only the bids on the highest item of agent~1 and the highest item of agent~2 can equal $4998$, and $1000\alpha_1 \leq 2000\alpha_2 \leq 4998$ always holds by the ROI feasibility of agent~1 and agent~2. If $1000\alpha_1 < 2000\alpha_2 = 4998$, then the maximal pacing rule for agent~1 is violated. Thus $1000\alpha_1 = 2000\alpha_2 = 4998$ in this case. Let $y_1$ denote the probability that the bid $1000\alpha_1$ is ranked lowest among the bids equal to $4998$, and let $y_2$ denote the corresponding probability for the bid $2000\alpha_2$. Notice that $y_1 + y_2 = 1$. However, the total payment of agent~1 and agent~2 is then at least $4998(1 - y_1) + 4998(1 - y_2) = 4998$, which exceeds their total valuation of $3011$. Thus all agents in group~3 share the same multiplier $\alpha_3$.

In our example, one can compute a bound on $\alpha_3$. Since there is a tie, let $x_i$ denote the probability that the bid of agent~$i$ in group~3 is ranked lowest under the randomized tie-breaking rule, and let $p$ denote the largest bid that is less than $4998\alpha_3$. Notice that $\sum_i x_i \leq 1$, since other agents may share the same valuation. By the ROI feasibility of group~3, we have 
\[
x_ip + (1 - x_i)\alpha v \leq v. 
\]
Thus,
\[
\sum_{i=1}^{n} (x_ip + (1 - x_i)\alpha v) \leq nv.
\]
Moreover, if $\alpha \neq A$, we obtain the equation
\begin{equation}
\label{eq:expression-of-multiplier}
\sum_{i=1}^{n} (x_ip + (1 - x_i)\alpha v) = nv.
\end{equation}
Finally,
\[
\alpha_3 \leq \frac{4998 \cdot 4999 - p\sum_{i=1}^{4999}x_i}{4998 \cdot 4999 - 4998\sum_{i=1}^{4999} x_i} \leq \frac{4999}{4998}. 
\]
Similarly, $1000\alpha_1 \leq 2000\alpha_2 \leq 4998\alpha_3$ always holds by the ROI feasibility of agent~1 and agent~2. We must have $\alpha_1 = 2\alpha_2$; otherwise, since $\alpha_2 \leq \frac{4999}{2000}$, the payment of agent~1 would be much smaller than its valuation, violating the maximal pacing rule for agent~1. If $1000\alpha_1 = 2000\alpha_2 = 4998\alpha_3$, then the total payment of agent~1 and agent~2 is at least $4998\alpha_3(1 - y_1) + 4998\alpha_3(1 - y_2) > 4998$, which again exceeds their total valuation of $301
1$. Thus we have $1000\alpha_1 = 2000\alpha_2 < 4998\alpha_3$.

Since $\alpha_1 < 5$, the agents in group~4 must share the same multiplier $\alpha_4$ by an identical argument. Clearly, $\alpha_4$ is bounded and strictly less than $A$.
One can obtain the expression for $\alpha_4$ using \Cref{eq:expression-of-multiplier}. 
\[
\alpha_4 = (2v_4 - \alpha_1)/v_4 = 2 - \alpha_1/5.
\]
The only relevant orderings in our example are as follows:
\begin{itemize}
    \item $4998\alpha_3 > 1000\alpha_1 = 2000\alpha_2 > 10\alpha_2 > 5\alpha_4 > \alpha_1$ with probability $x$,
    \item $4998\alpha_3 > 2000\alpha_2 = 1000\alpha_1 >  10\alpha_2 >  5\alpha_4 > \alpha_1$ with probability $1-x$.
\end{itemize}
By the ROI rule for agent~1, 
\[
1000\alpha_1x + 10\alpha_2(1-x) = 1000\alpha_1 x + 5\alpha_1(1-x) = 1001 
\]
and for agent~2,
\[
10\alpha_4x + 1000\alpha_1(1-x) + 5\alpha_4(1-x) = 1000\alpha_1(1-x) + 5(2 - \alpha_1/5)(1+x)= 2010.
\]
Thus, we obtain the quadratic equation
\[
999010 \alpha_1^2 - 2992051 \alpha_1 + 10010 = 0
\] whose only valid root is $\alpha_1 = \frac{2992051 + \sqrt{8912368826201}}{1998020}$. The corresponding parameter values are approximately $\alpha_1 \approx 2.99167$ and $x \approx 0.331252$.

\end{proof} 

We next identify the operations in our $\FIXP$-membership
construction that require approximation.
All computations preceding the GSP payment calculation already
admit PL pseudo-circuit implementations: they use only addition,
subtraction, multiplication by rational constants, min/max
gates, and Heaviside pseudogates.

Moreover, by \cite[Lemma~3.1]{FilosRatsikas2024}, the correspondence
$(x,y)\mapsto y \cdot H(x)$ can be implemented by a PL pseudo-circuit
for the purpose of proving $\PPAD$-membership.
This allows us to implement the products appearing
in the computations of $N_{h,t}$ and $J_{h,t}$.
For the item payments, we distribute the product
$J_{h,t}C_{h,t,\ell}$ over the two sums to obtain
\[
p_{h,t,\ell}
=
\sum_{s\in[T]}\sum_{k=1}^{K}
c_k b_s(\alpha)
H\left(
\frac12-\left|\mathrm{Idx}_{h,s}-N_{h,t}\right|
\right)\cdot
H\left(
\frac12-\left|\mathrm{ItemIdx}_{h,t,\ell}-k\right|
\right).
\]
Each summand can be computed by two successive applications
of the above pseudo-circuit, followed by multiplication by
the constant $c_k$. Addition gates then compute $p_{h,t,\ell}$
and $P_{h,i}$.

Consequently, the remaining products requiring approximation
are those between the probability weights and the
allocation-dependent payments in \Cref{eq:mutiplication between variables}, that is 
\[
P_i=\sum_{h=1}^{Q}w_hP_{h,i}.
\]

\subsection{Approximate Equilibria in Budget case}

Fix $0<\gamma<1$. For each bidder $i$, let
\[
U_i\coloneq \sum_{\substack{t\in[T]\\\iota(g_t)=i}}
m_{g_t}v_{g_t}.
\]
At every fixed point, bid-validity and $\alpha_i\leq1$
imply $0\leq P_{h,i}\leq U_i$ whenever $w_h>0$.
Choose the smallest power of two $\aleph =2^d$ satisfying
\[
\aleph \geq
\max\left\{1,\max_{i\in[n]}\frac{U_i}{\gamma B_i}\right\}.
\]

Let $C_{\mathrm{id}}$ be the identity Boolean circuit on
$d+1$ bits, representing the integer-valued function
$f(j)=j$ for $j\in\{0,\ldots,\aleph \}$.
Following \cite[Definition~3.6]{FilosRatsikas2024}, let
$\mathcal D_\aleph$ denote the piecewise-constant correspondence
represented by $(C_{\mathrm{id}},N)$.
In particular, 
\[
        \mathcal D_\aleph(x) = \begin{cases}
        0 & \text{for } x \leq 0 \\
        [x-1,x] & \text{for } x \in \{1,2, \dots, \aleph \}\\
        
        \lfloor x \rfloor & \text{for } x \in [0, \aleph ]\setminus \{1,2,\dots, \aleph \} \\
        \aleph & \text{for } \aleph < x 
        \end{cases}.        
\]

For each $h\in[Q]$ and $i\in[n]$, apply
\cite[Proposition~3.3]{FilosRatsikas2024} directly. Specifcally, the multiplication of variables
$w_h \cdot P_{h,i}$ in \Cref{eq:mutiplication between variables} can be replaced with the multiplication
of one varaible and another one, namely,
\[
\widehat q_{h,i}
\in
w_h \cdot 
\mathcal D_\aleph\left(\frac{P_{h,i}}{U_i/\aleph}\right) \cdot (U_i/ \aleph) .
\]
Here, $U_i / \aleph$ is a constant.
When $w_h>0$, the approximation satisfies
\[
0\leq w_hP_{h,i}-\widehat q_{h,i}
\leq\frac{U_iw_h}{\aleph}.
\]
When $w_h=0$, the output $\widehat q_{h,i}$ is zero,
so the same inequality holds.
Addition gates compute
\[
\widehat P_i=\sum_{h=1}^{Q}\widehat q_{h,i}.
\]
Since $\sum_h w_h=1$, we obtain
\[
0\leq P_i-\widehat P_i
\leq\frac{U_i}{\aleph}
\leq\gamma B_i.
\]

Set the updated multiplier profile $\tilde \alpha$ as the outputs of the circuit, 
together with all auxilary varibles $\tilde \zeta $ used in Heaviside function and 
the representation of randomized allocation $\tilde \pi = (w_h, \mathrm{ItemIdx}_h)_{h \in [Q]}$. 
For each $i \in [n]$, use
\[
\tilde \alpha_i = \max \left\{ 0, \min \left\{1,\alpha_i + (1-\gamma)B_i - \widehat P_i \right\} \right\}.
\]

\begin{lemma}
    When $0 < \gamma < 1$ has polynomial bit sizes, the circuit can be constructed in polynomial time. Moreover,
    when the circuit is at the fixed point, $(\alpha, \pi)$ is $\gamma$-approximate equilibrium for ABE-B.
\end{lemma}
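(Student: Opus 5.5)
The plan has two parts: the polynomial-time construction, and the fixed-point analysis, which mirrors the exact ABE-B lemma with $P_i$ replaced by $\widehat P_i$ and $B_i$ by $(1-\gamma)B_i$.

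\textbf{Construction.} Everything preceding the payment computation is a PL pseudo-circuit of size polynomial in $T\le M$. This covers the sorting networks, the Heaviside pseudogates, the quantile-coupling nodes, the position tracking, and the products $yH(x)$ handled via \cite[Lemma~3.1]{FilosRatsikas2024}. For the discretization, $d=\lceil\log_2\max\{1,\max_i U_i/(\gamma B_i)\}\rceil$ is polynomial in the bit sizes of the values, the budgets and $\gamma$. Hence the identity circuit $C_{\mathrm{id}}$ on $d+1$ bits has polynomial size. By \cite[Proposition~3.3]{FilosRatsikas2024}, each correspondence $w_h\cdot\mathcal D_\aleph(\cdot)$ is then implementable by a PL pseudo-circuit polynomial in $d$. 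There are $Qn=O(T^2n)$ such gadgets, so the whole circuit has polynomial size, and all constants ($U_i/\aleph$, $(1-\gamma)B_i$, $c_k$) are rationals of polynomial bit length.

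\textbf{Fixed point.} At a fixed point, the earlier lemmas show that $\pi$ is bid-valid (condition (a)) and that $P_i$ is the true expected payment. The discretization bound gives $0\le P_i-\widehat P_i\le \gamma B_i$. One caveat: this bound used $0\le P_{h,i}\le U_i$ for $w_h>0$, which follows from bid validity and $\alpha_i\le 1$ (per-click price at most own bid at most $v_g$, and $c_k\le1$). I will also check that the downward rounding lies in $[0,\aleph]$ there, so the correspondence is never clipped at $\aleph$.

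\emph{Feasibility.} Suppose $P_i>B_i$. Then $\widehat P_i\ge P_i-\gamma B_i>(1-\gamma)B_i$. If $\alpha_i>0$, the update strictly decreases $\alpha_i$, contradicting the fixed point, so $\alpha_i=0$. All bids of bidder $i$ are then $0$, so every $J_{h,t}=0$ on positive-weight allocations and $P_i=0<B_i$, a contradiction. Hence $P_i\le B_i$.

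\emph{Approximate maximal pacing.} If $P_i<(1-\gamma)B_i$, then $\widehat P_i\le P_i<(1-\gamma)B_i$, so the update term is positive. The fixed-point identity then forces $\alpha_i=1$, which is condition (c').

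\textbf{Main obstacle.} The fixed-point argument is routine. The delicate step is justifying that the discretization gadget's output behaves as claimed at fixed points even when $w_h$ is at or near zero and $P_{h,i}$ is computed from non-integral pseudo-outputs. The concern is that $\widehat q_{h,i}$ is nonnegative and at most $w_hP_{h,i}$ only when $w_h>0$ and the positions are valid. I would handle this by noting that for $w_h=0$ the product is $0$ regardless of the argument, since it is a scaled correspondence in $w_h$. For $w_h>0$, the argument $P_{h,i}\aleph/U_i$ lies in $[0,\aleph]$, and on that range $\mathcal D_\aleph$ returns a value in $[x-1,x]$.
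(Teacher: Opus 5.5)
Your proposal is correct and follows the paper's proof. Polynomial size comes from the polynomial encoding length of $U_i/(\gamma B_i)$ together with \cite[Proposition~3.3]{FilosRatsikas2024}. Budget feasibility follows because $\widehat P_i>(1-\gamma)B_i$ forces $\alpha_i=0$ and hence zero payment. Approximate maximal pacing follows from $\widehat P_i\le P_i$.

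The side checks you raise are the facts the paper records just before the lemma: $P_{h,i}\in[0,U_i]$ when $w_h>0$, so $\mathcal D_\aleph$ is not clipped, and $\widehat q_{h,i}=0$ when $w_h=0$.
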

In particular, the construction works with constant,
inverse-polynomial, and inverse-exponential approximation
parameters.

\begin{proof}
Since $\gamma$ has polynomial encoding length,
the quantities $U_i/(\gamma B_i)$ also have polynomial
encoding length.
Thus, $C_{\mathrm{id}}$ has polynomial size, and all rational
constants used in the construction have polynomial encoding
length. By \cite[Proposition~3.3]{FilosRatsikas2024}, it admits a PL pseudo-circuit
implementation constructible in polynomial time.

Now fix a fixed point $(\alpha,\zeta,\pi)$ of the circuit.
The allocation construction guarantees that $\pi$ is
bid-valid under the bids induced by $\alpha$, with
$\alpha\in[0,1]^n$.
Moreover, the preceding approximation bound gives
\[
\widehat P_i \geq P_i-\gamma B_i.
\]

We first verify budget feasibility.
Suppose, towards a contradiction, that $P_i>B_i$
for some bidder $i$. Then
\[
\widehat P_i
\geq P_i-\gamma B_i
>(1-\gamma)B_i.
\]
If $\alpha_i>0$, the multiplier update strictly decreases
$\alpha_i$, contradicting that $\alpha$ is a fixed-point.
Then, $\alpha_i=0$ must hold.
All items of bidder $i$ then have bid zero.
By bid-validity and nonnegativity of bids, every such item
has GSP price zero in every allocation in the support of
$\pi$. Hence $P_i=0$, contradicting $P_i>B_i>0$.
Therefore, $P_i\leq B_i$ for every bidder.

Finally, suppose that
$P_i<(1-\gamma)B_i$.
Since $\widehat P_i\leq P_i$, we also have
$\widehat P_i<(1-\gamma)B_i$.
If $\alpha_i<1$, the multiplier update strictly increases
$\alpha_i$, again contradicting the fixed-point identity.
Thus $\alpha_i=1$, establishing approximate maximal pacing.
Together with bid-validity and budget feasibility, this
shows that $(\alpha,\pi)$ is a $\gamma$-approximate ABE-B.
\end{proof}

\subsection{Approximate Equilibria in ROI case}

Fix $0<\gamma<1$. In the ROI setting, we require the
payment approximation error to be bounded relative to
each bidder's expected valuation. To achieve this, we
first compute the probability-weighted CTRs and then
apply the same discretization correspondence to the
per-click prices.

For every $h\in[Q]$ and $t\in[T]$, the circuit computes
\[
S_{h,t}
=
\sum_{\ell=1}^{m_{g_t}}\sum_{k=1}^{K}
c_k\left[
w_h H\left(
\frac12-
\left|\mathrm{ItemIdx}_{h,t,\ell}-k\right|
\right)
\right].
\]
Each product inside the brackets is implemented using
\cite[Lemma~3.1]{FilosRatsikas2024}, so this computation
admits a PL pseudo-circuit implementation.
At every fixed point with $w_h>0$, these quantities satisfy
\[
S_{h,t}
=
w_h\sum_{\ell=1}^{m_{g_t}}C_{h,t,\ell}.
\]
Rewrite the true expected
payments and valuations as
\[
P_i
=
\sum_{h=1}^{Q}
\sum_{\substack{t\in[T]\\\iota(g_t)=i}}
J_{h,t}S_{h,t},
\qquad
V_i
=
\sum_{h=1}^{Q}
\sum_{\substack{t\in[T]\\\iota(g_t)=i}}
v_{g_t}S_{h,t}.
\]
Since each $v_{g_t}$ is a constant, the circuit computes
$V_i$ exactly using addition and multiplication-by-constant
gates.

Choose the smallest power of two $\aleph=2^d$ satisfying
\[
\aleph\geq\frac{A}{\gamma},
\]
and use the correspondence $\mathcal D_\aleph$
from the preceding subsection.
For each $h\in[Q]$ and $t\in[T]$, apply
\cite[Proposition~3.3]{FilosRatsikas2024} to implement
the scaled correspondence whose output satisfies
\[
\widehat q_{h,t}
\in
\frac{Av_{g_t}}{\aleph}\,S_{h,t}\,
\mathcal D_\aleph\left(
\frac{J_{h,t}}{Av_{g_t}/\aleph}
\right)
\]
at every fixed point.
Here, $Av_{g_t}/\aleph$ is a positive rational constant.

The discretization gives
\[
0\leq J_{h,t}S_{h,t}-\widehat q_{h,t}
\leq
\frac{A}{\aleph}v_{g_t}S_{h,t}.
\]
If $S_{h,t}=0$, then $\widehat q_{h,t}=0$,
so the same inequality holds.
Addition gates compute
\[
\widehat P_i
=
\sum_{h=1}^{Q}
\sum_{\substack{t\in[T]\\\iota(g_t)=i}}
\widehat q_{h,t}.
\]
Summing the error bounds yields
\[
0\leq P_i-\widehat P_i
\leq
\frac{A}{\aleph}
\sum_{h=1}^{Q}
\sum_{\substack{t\in[T]\\\iota(g_t)=i}}
v_{g_t}S_{h,t}
=
\frac{A}{\aleph}V_i
\leq\gamma V_i.
\]
This bound also holds when $V_i=0$, in which case
$P_i=\widehat P_i=0$.

As in the budget case, the circuit outputs the
allocation representation, the auxiliary coordinates
of all pseudogates, and the updated multipliers
\[
\tilde\alpha_i
=
\max\left\{1,\min\left\{
A,\alpha_i+(1-\gamma)V_i-\widehat P_i
\right\}\right\}.
\]

\begin{lemma}
For every rational $0<\gamma<1$ whose encoding length
is polynomial in the instance size, the above circuit
can be constructed in polynomial time.
Every fixed point of this circuit induces a
$\gamma$-approximate ABE-R $(\alpha,\pi)$.
\end{lemma}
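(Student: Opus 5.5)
The plan mirrors the budget-case lemma, with the multiplicative error bound $0\le P_i-\widehat P_i\le\gamma V_i$ taking the place of the additive bound $\gamma B_i$.

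\textbf{Polynomial-time construction.} Since $\gamma$ and $A$ have polynomial encoding length, $d=\lceil\log_2(A/\gamma)\rceil$ is polynomial. Hence the identity Boolean circuit $C_{\mathrm{id}}$ on $d+1$ bits has polynomial size. Every constant $Av_{g_t}/\aleph$ is rational with polynomial encoding length. The allocation part, the computation of $S_{h,t}$, $J_{h,t}$ and $V_i$, uses only PL gates, Heaviside pseudogates and the $yH(x)$ construction of \cite[Lemma~3.1]{FilosRatsikas2024}. The discretized products are handled by \cite[Proposition~3.3]{FilosRatsikas2024}. The number of gates is polynomial in $T$, $Q=O(T^2)$, $K$ and $d$.

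\textbf{Fixed-point properties.} Fix a fixed point $(\alpha,\zeta,\pi)$. By the allocation lemmas (legal insertion marginals, quantile coupling, position recovery), $\pi$ is bid-valid under the bids induced by $\alpha\in[1,A]^n$. Moreover, $S_{h,t}$, $J_{h,t}$ and $V_i$ take their intended values on all indices with $w_h>0$. Consequently $P_i$ as rewritten is the true expected payment, and the error bound $0\le P_i-\widehat P_i\le\gamma V_i$ derived above holds.

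One point needs care: the discretization with range $[0,\aleph]$ requires $J_{h,t}\le Av_{g_t}$. This holds because bid-validity gives $J_{h,t}\le b_t(\alpha)=\alpha_{\iota(g_t)}v_{g_t}\le Av_{g_t}$ whenever $S_{h,t}>0$. For indices with $w_h=0$ we have $S_{h,t}=0$, so the output is zero regardless of the value of $\mathcal D_\aleph$.

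\textbf{ROI feasibility.} Suppose $P_i>V_i$. Then
\[
\widehat P_i\ge P_i-\gamma V_i>(1-\gamma)V_i,
\]
so the update strictly decreases $\alpha_i$ unless $\alpha_i=1$. If $\alpha_i=1$, then for every supported $h$ and every group $g_t$ owned by $i$, bid-validity gives $J_{h,t}\le b_t(\alpha)=v_{g_t}$. Hence
\[
P_i=\sum_h\sum_t J_{h,t}S_{h,t}\le\sum_h\sum_t v_{g_t}S_{h,t}=V_i,
\]
which is a contradiction.

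\textbf{Approximate maximal pacing.} Suppose $(1-\gamma)V_i>P_i$. Since $\widehat P_i\le P_i$, we get $(1-\gamma)V_i-\widehat P_i>0$. If $\alpha_i<A$, the update strictly increases $\alpha_i$, contradicting the fixed-point identity. Hence $\alpha_i=A$.

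\textbf{Main obstacle.} The delicate step is justifying the discretization at fixed points. The Heaviside and $\mathcal D_\aleph$ correspondences only constrain values on supported allocations, so I must check that zero-weight intervals contribute nothing to $\widehat P_i$ or $V_i$. This follows because $S_{h,t}$ carries the factor $w_h$ through the $yH(x)$ construction. I must also check the range condition on $J_{h,t}$ given above, so that rounding is always downward and within $Av_{g_t}/\aleph$.
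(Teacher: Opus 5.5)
Your proposal is correct and follows essentially the same route as the paper. The construction is polynomial because $A$ and $\gamma$ have polynomial encoding length. At a fixed point, $\widehat P_i\ge P_i-\gamma V_i$ forces $\alpha_i=1$ whenever $P_i>V_i$, and then bid-validity gives $J_{h,t}\le v_{g_t}$ and hence $P_i\le V_i$. Finally, $\widehat P_i\le P_i$ yields approximate maximal pacing. Your explicit check that $J_{h,t}\le\alpha_{\iota(g_t)}v_{g_t}\le Av_{g_t}$ on supported allocations, so that $\mathcal D_\aleph$ rounds downward within $Av_{g_t}/\aleph$, is a detail the paper leaves implicit in the error bound stated before the lemma.
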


\begin{proof}
The efficiency follows directly from the cap $A$ specified in the preliminaries has polynomial.

Now fix a fixed point $(\alpha,\zeta,\pi)$.
We first verify ROI feasibility.
Suppose, towards a contradiction, that $P_i>V_i$
for some bidder $i$. Then
\[
\widehat P_i
\geq P_i-\gamma V_i
>(1-\gamma)V_i.
\]
If $\alpha_i>1$, the multiplier update strictly
decreases $\alpha_i$, contradicting the fixed-point
identity. Hence $\alpha_i=1$.
For every $h$ with $w_h>0$ and every group $g_t$
owned by $i$, bid-validity now implies
\[
J_{h,t}\leq b_t(\alpha)=v_{g_t}.
\]
Since $S_{h,t}\geq0$ and $S_{h,t}=0$ whenever $w_h=0$,
we obtain
\[
P_i
=
\sum_{h=1}^{Q}
\sum_{\substack{t\in[T]\\\iota(g_t)=i}}
J_{h,t}S_{h,t}
\leq
\sum_{h=1}^{Q}
\sum_{\substack{t\in[T]\\\iota(g_t)=i}}
v_{g_t}S_{h,t}
=
V_i,
\]
contradicting $P_i>V_i$.
Therefore, every bidder satisfies ROI feasibility.

Finally, suppose that $P_i<(1-\gamma)V_i$.
Since $\widehat P_i\leq P_i$, we also have
$\widehat P_i<(1-\gamma)V_i$.
If $\alpha_i<A$, the multiplier update strictly
increases $\alpha_i$, again contradicting the
fixed-point identity.
Thus $\alpha_i=A$, establishing approximate maximal
pacing.
Together with bid-validity and ROI feasibility,
this shows that $(\alpha,\pi)$ is a
$\gamma$-approximate ABE-R.
\end{proof}

\section{PPAD-hardness}
\label{section:PPAD-hardness}
\subsection{The \PureCircuit\ problem}

The \PureCircuit\ problem, introduced by Deligkas, Fearnley, Hollender, and Melissourgos~\cite{deligkas2024pure}, will serve as the main intermediate problem for proving the PPAD-hardness results in \Cref{thm:ppad-results}.

\begin{definition}[\PureCircuit]
\label{def:pure-circuit}
An instance of the \PureCircuit\ problem is defined over a vertex set $V$ and a gate set $G$, where $|V|=n$ and $|G|=m\leq n$. The goal is to find an assignment $\mathbf{x}: V\rightarrow \{0,1,\bot\}^n$ that satisfies the constraints provided by the gates. In particular, each gate $(T,u,v,w)\in G$, where $u,v,w\in V$ are potential inputs and outputs, specifies a constraint defined as follows: 
\begin{itemize}
    \item If $T=\NOT$, then $u$ is viewed as the input and $w$ is viewed as the output, and $\mathbf{x}$ satisfies \[\mathbf{x}[u]\in\{0,1\}\ \Longrightarrow \ \mathbf{x}[w]=\neg\mathbf{x}[u].\]
    \item If $T=\NOR$, then $u$ and $v$ are viewed as inputs and $w$ is viewed as the output, and $\mathbf{x}$ satisfies 
\begin{align*}
\mathbf{x}[u]=1\vee\mathbf{x}[v]=1\ &\Longrightarrow\ \mathbf{x}[w]=0 \quad \text{ and }\\
 \mathbf{x}[u]=0\wedge\mathbf{x}[v]=0\ &\Longrightarrow\ \mathbf{x}[w]=1.
 \end{align*}
	\item If $T=\PURIFY$, then $u$ is viewed as the input and $v\neq w$ are viewed as the outputs, and $\mathbf{x}$ satisfies \begin{align*}
	\{\mathbf{x}[v],\mathbf{x}[w]\}\cap\{0,1\}\neq \emptyset \quad \text{ and } \\ \mathbf{x}[u]\in\{0,1\}\ \Longrightarrow\  \mathbf{x}[v]=\mathbf{x}[w]=\mathbf{x}[u].
	\end{align*}
\end{itemize}
Importantly, each node is required to be an output of exactly one gate.
\end{definition}

\begin{theorem}[\cite{deligkas2024pure}, Corollary 2.3]
	$\PureCircuit$ with gate set $\{\emph{\NOT,\NOR,\PURIFY}\}$ is $\PPAD$-complete. This holds even if every node is the input of exactly one gate. 
\end{theorem}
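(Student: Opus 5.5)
This is Corollary~2.3 of \cite{deligkas2024pure}, and in the paper we will simply invoke it. If I had to reprove it, I would split it into membership, hardness, and the degree normalization (every node is the input of exactly one gate).

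\textbf{Membership.} I would embed an instance into a continuous function on $[0,1]^n$. Encode $0$ as $[0,\tfrac13]$, $\bot$ as $(\tfrac13,\tfrac23)$, and $1$ as $[\tfrac23,1]$. Implement each gate as a piecewise-linear map:
\begin{itemize}
\item $\NOT$ as $x\mapsto 1-x$;
\item $\NOR$ as a steep clipped function of $\max\{x_u,x_v\}$;
\item $\PURIFY$ as two steep clipped ramps with disjoint transition regions, so that at most one output can be fractional.
\end{itemize}
Each node is the output of exactly one gate, so the gate maps together give a PL self-map of the cube. By Brouwer it has a fixed point, and finding an exact fixed point of a polynomial-size PL circuit is in $\PPAD$. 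Rounding a fixed point according to the three intervals then satisfies every gate constraint, because each map is exact on the Boolean regions.

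\textbf{Hardness.} I would reduce from a known $\PPAD$-complete fixed-point problem, for instance approximate generalized circuit or $\varepsilon$-\textsc{Gcircuit}. The plan is to discretize the simplicial (Sperner-type) domain into bit representations. Purification chains of $\PURIFY$ gates would extract, from possibly fractional values, a collection of copies in which most are Boolean. Boolean logic built from $\NOR$ and $\NOT$ evaluates the End-of-Line/Sperner labelling on each copy. Averaging over polynomially many copies makes the few $\bot$-corrupted evaluations negligible, and a feedback encoding closes the loop. Any satisfying assignment of \PureCircuit\ then yields a region where a fixed point, or a panchromatic simplex, must lie.

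\textbf{Degree normalization.} For fan-out, I would replace every node used as input by $d>1$ gates with a tree of $\PURIFY$ gates that produces $d$ copies. I would apply $\NOT\circ\NOT$ where needed to restore consistent semantics, since $\PURIFY$ copies its input when the input is Boolean. Nodes used by no gate would be fed into a dummy $\NOT$ gate whose output is discarded into a further sink.

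\textbf{Main obstacle.} The hard step is the robustness argument in hardness. I must show that the bounded number of $\bot$ values introduced by purification cannot mislead the averaged output by more than the tolerance of the source problem. This needs a careful count of how many copies can be $\bot$ along each purification chain. I would follow the counting argument of \cite{deligkas2024pure} here.
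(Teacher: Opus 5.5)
Your plan to invoke Corollary~2.3 of \cite{deligkas2024pure} as a black box is exactly what the paper does; it gives no proof of its own, and your optional membership sketch via a PL self-map of $[0,1]^n$ is sound. One small correction to your optional normalization sketch: an unused node $u$ cannot be absorbed by a chain ending in a ``discarded'' output, because every newly created node must itself be the input of exactly one gate, so the sink has to close into a cycle. For example, use a \NOR\ gate with inputs $u,y$ and output $z$, followed by $\NOT(z)=z'$ and $\NOT(z')=y$; also, no $\NOT\circ\NOT$ is needed in the \PURIFY\ copy trees, since \PURIFY\ already copies Boolean inputs.
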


In the context of pacing/autobidding equilibria, it is often more convenient to work with the $\NPURIFY$ gate instead of the $\PURIFY$ gate,   where $\NPURIFY$ is just the negation of $\PURIFY$, namely 
\begin{itemize}
	\item If $T=\NPURIFY$, then $u$ is viewed as the input and $v\neq w$ are viewed as the outputs, and $\mathbf{x}$ satisfies \begin{align*}
	\{\mathbf{x}[v],\mathbf{x}[w]\}\cap\{0,1\}\neq \emptyset \quad \text{ and }\\ \mathbf{x}[u]\in\{0,1\}\ \Longrightarrow \ \mathbf{x}[v]=\mathbf{x}[w]=\neg\mathbf{x}[u].
	\end{align*}
\end{itemize}

Note that it is also without loss of generality to assume that no two \NPURIFY\ gates are consecutive: whenever the output of one \NPURIFY\ gate is used as the input to another, we can subdivide that wire by inserting two \NOT\ gates. Similarly, without loss of generality, we assume that within a gate, the input nodes and output nodes are distinct: this can be handled by inserting two \NOT\ gates as well.

\begin{corollary}
	$\PureCircuit$ with gate set $\{\emph{\NOT,\NOR,\NPURIFY}\}$ is $\PPAD$-complete. This holds even if every node is the input of exactly one gate, no two \NPURIFY\ gates are consecutive, and within a gate the input nodes and output nodes are distinct.
\end{corollary}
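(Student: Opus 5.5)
The corollary follows from the preceding theorem of \cite{deligkas2024pure} by local gadget replacements. Each replacement must preserve \PPAD-hardness, so I will check two things for each one: it takes polynomial time, and any solution of the new instance maps back in polynomial time to a solution of the original instance. The map back is simply restriction to the original nodes.

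\textbf{Step 1: replace \PURIFY\ by \NPURIFY.} Take an instance with gate set $\{\NOT,\NOR,\PURIFY\}$ in which every node is the input of exactly one gate. Replace each gate $(\PURIFY,u,v,w)$ by $(\NPURIFY,u,v',w')$ followed by $(\NOT,v',v)$ and $(\NOT,w',w)$, where $v',w'$ are fresh nodes. I then verify the two defining constraints.
\begin{itemize}
\item If $\mathbf x[u]\in\{0,1\}$, then $\mathbf x[v']=\mathbf x[w']=\neg\mathbf x[u]$, and the two \NOT\ gates give $\mathbf x[v]=\mathbf x[w]=\mathbf x[u]$.
\item At least one of $v',w'$ is Boolean, and a \NOT\ gate with a Boolean input has a Boolean output. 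Hence at least one of $v,w$ is Boolean.
\end{itemize}
Every fresh node is the output of exactly one gate and the input of exactly one gate. The old nodes keep their unique producing gate and their unique consuming gate.

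\textbf{Step 2: break consecutive \NPURIFY\ gates.} Suppose an output $v$ of one \NPURIFY\ gate is the input of another \NPURIFY\ gate. Introduce fresh nodes $a,b$, keep $v$ as the output of the first gate, and add $(\NOT,v,a)$ and $(\NOT,a,b)$. Then feed $b$ into the second gate in place of $v$. A chain of two \NOT\ gates copies Boolean values exactly. It may also map $\bot$ to a Boolean value, but that is harmless, because the second gate's constraints remain satisfiable with any input. Restricting a solution back to the original nodes gives a valid assignment: for the downstream gate, a Boolean $\mathbf x[v]$ forces $\mathbf x[b]=\mathbf x[v]$, so its implication still holds, and its "one output is Boolean" condition does not depend on the input at all.

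\textbf{Step 3: make inputs and outputs of a gate distinct.} Apply the same double-\NOT\ subdivision to any input wire that coincides with an output of the same gate. The verification is as in Step 2, and the conditions on consumers are again preserved.

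The step I expect to need the most care is the soundness check when the double-\NOT\ chain turns $\bot$ into a Boolean value. For \NOR\ and \NOT\ consumers, the constraints become only stronger in a way consistent with the original semantics, so any solution still restricts correctly. I must make sure the Step 3 subdivisions never create new consecutive \NPURIFY\ gates. They cannot, because the inserted gates are \NOT\ gates, so I will perform Step 3 and then Step 2, or argue that the order does not matter.
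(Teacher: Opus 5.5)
Your proposal is correct and follows the paper's route: the paper also treats \NPURIFY\ as \PURIFY\ with negated outputs, and handles both consecutive \NPURIFY\ gates and input/output coincidences by subdividing wires with two \NOT\ gates. Your explicit Step~1 gadget and the restriction-based soundness check fill in details the paper leaves implicit; note that after Step~1 every \NPURIFY\ output feeds a \NOT\ gate, so Step~2 is in fact vacuous. The one omission is that the statement claims completeness, not just hardness. Membership follows from the reverse gadget: replace each $(\NPURIFY,u,v,w)$ by $(\PURIFY,u,v',w')$, $(\NOT,v',v)$, $(\NOT,w',w)$, then apply the membership half of the cited theorem.
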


\subsection{\PPAD-hardness for ABE-B}

\label{sec:ABE-B}
The following parameters and inequalities will be used in the reduction. Let 
\[
c = \bigg(1 + \frac{1}{36(n+1)}\bigg)^{36},  k = 100000n.
\]
All exponents of $c$ used in this reduction are integer multiples of $1/36$. Consequently, all values used in the reduction are rational and have polynomial length. We will use the following bounds repeatedly:
\begin{equation}
c^{1/12} \geq 1 + \frac{1}{12(n+1)} > 1 + \frac{1}{25n}, c < 2, k > 1000nc^{4n+2}.
\end{equation}
Indeed, the first inequality follows from Bernoulli's inequality. Moreover, $c \leq e^{{1/(n+1)}} < 2$ and $c^{4n+2} \leq e^{(4n+2)/(n+1)} < e^4 < 100$, which proves the other two bounds.

\paragraph{Construction of the auction.}

The construction has the property that the number of slots equals the total number of items and that all click-through rates are equal to 1. Thus, every item is allocated.

Suppose the pure-circuit $C$ is given by a vertex set $V$ and a gate set $G$ where $|V| = n$ and $|G| =m \leq n$. Each vertex $v\in[n]$ serves as an input to some gate $g_I$ and as an output of some gate $g_J$, where $I,J\in[m]$ are indices assigned to the gates in some order. Without loss of generality, we assume every vertex appears as the output of some gate $g_J$ for $J\in[m]$ and $n \geq 2$.

First of all, we introduce an auxiliary guard agent, called $n+1$, who has a sufficiently large budget, say $10(4n+3)c^{4n+3}$, and one item of value $c^{i}$ for every $i \in [4n+3]$.  The construction below guarantees that its payment is always strictly below $(1-\gamma)B_{n+1}$. Therefore, by the maximal pacing condition in \Cref{definition: pacing equilibrium budget}, its multiplier is $1$ in any approximate ABE-B. Thus the guard bids are exactly
\[
c, c^2, c^3, \dots, c^{4n+3}.
\]
Intuitively, these bids will serve as dividers for ``independent blocks'', where each gate operates independently within a block; see the details below.

Now we encode the variables and gates. We emphasize that the reduction below crucially utilizes the property that in a \PureCircuit\ instance, every node is the input of exactly one gate.

We consider each vertex as an agent participating in the auction. So the agents are numbered from 1 to $n$. For each gate, we add items to the corresponding agents (vertices) connected to it and update their budgets according to the following rules. \begin{itemize}
    \item $g_I$ is an \NOT\ gate. There are two corresponding vertices $(u,w)$, with $u$ being the input vertex and $w$ being the output vertex. For the input vertex $u$, add a new item with valuation $c^{4I + 1 + 1/3}$ and increase $u$'s budget by $c^{4I}$. For the output vertex $w$, add $k$ new items with valuation $c^{4I + 1 + 2/3}$ and increase $w$'s budget by $kc^{4I + 1 + 1/12}$.
    \item $g_I$ is an \NOR\ gate. There are three corresponding vertices $(u,v,w)$, with $u$ and $v$ being the input vertices and $w$ being the output vertex. For each of the input vertices $u$ and $v$, add a new item with valuation $c^{4I + 1 + 1/3}$ and increase their budgets by $c^{4I}$ individually. For output vertex $w$, add $k$ new items with valuation $c^{4I + 1 + 2/3}$ and increase its budget by $kc^{4I + 1 + 1/12}$.
    \item $g_I$ is an \NPURIFY\ gate. There are three corresponding vertices $(u,v,w)$, with $u$ being the input vertex and $v$ and $w$ being the output vertices.  For the input vertex $u$, add two new items with valuations $c^{4I + 1 + 1/3}$ and $c^{4I + 3 + 2/9}$, meanwhile, increase the budget of agent $u$ by $c^{4I} + c^{4I + 2}$. For output vertex $v$, add $k$ new items with valuation $c^{4I + 1 + 2/3}$ and increase its budget by $kc^{4I + 1 + 1/12}$. For output vertex $w$, add $k$ new items with valuation $c^{4I + 3 + 2/3}$ and increase its budget by $kc^{4I + 3 + 1/12}$.
\end{itemize}

To express the above items in a unified and concise manner, we organize them
by levels. A level is an index $r \in [4n + 3]$ specifying the local scale of
one block. At level $r$, we use the following notation:
\[
  v^r_{\mathrm{in}} = c^{r + 1/3}, \quad
  v^r_{\mathrm{out}} = c^{r + 2/3}, \quad
  v^r_{\mathrm{in},1} = c^{r + 1/3}, \quad
  v^r_{\mathrm{in},2} = c^{r + 2/9}.
\]
A \NOT\ or \NOR\ gate $g_I$ uses level $4I+1$. An \NPURIFY\ gate $g_I$ uses
two levels: level $4I+1$ for its first input item and first output block, and
level $4I+3$ for its second input item and second output block.

The proof below will repeatedly use two level indices $i$ and $j$. When we
analyze an agent~$u$, the index $i$ denotes the level of the output block of
$u$. Thus $i=4I+1$ if $u$ is the output of a \NOT\ or \NOR\ gate, or the
first output of an \NPURIFY\ gate that is indexed by $I$; and $i=4I+3$ if $u$ is the
second output of an \NPURIFY\ gate indexed by $I$. The index $j$ denotes the
level of the first input item of $u$; that is, $j=4J+1$ if $u$ is an input of
gate $g_J$. If $g_J$ is an \NPURIFY\ gate, then $u$ also has a second input
item at level $j+2=4J+3$. With this notation, the budget of agent~$u$ is:
\begin{itemize}
    \item[(a)] When agent~$u$ is the input of an \NPURIFY\ gate, we have
    \[
    B_u = kc^{i + 1/12} + c^{j - 1} + c^{j + 1}.
    \]
    \item[(b)] When agent~$u$ is not the input of an \NPURIFY\ gate, we have
    \[
    B_u = kc^{i + 1/12} + c^{j - 1}.
    \]
\end{itemize}
This finishes the construction of the instance of the auction.

Recall that the goal of \Cref{thm:Budget-ppad-hard} is to show $\PPAD$-hardness for $\gamma=\Omega(1/n)$. From now on, fix $\gamma\leq 32/(1000n + 42)$. In the constructed auction, we will show that every feasible $\gamma$-approximate autobidding equilibrium encodes a feasible assignment of the \PureCircuit\ instance. 
Specifically, we extract a feasible assignment of the \PureCircuit\ instance from a $\gamma$-approximate autobidding equilibrium as follows: 
\begin{equation}\label{equation: decoding ABE-B}
	\mathbf{x}[v]\coloneqq
	\left\{ \begin{array}{ll}
	0 & \text{if $\alpha_{v}\leq c^{-1/3}$};\\
	1 & \text{if $\alpha_{v}=1$};\\
	\bot & \text{otherwise}.
	\end{array}
	\right.
\end{equation}


Note that the construction of the auction and the decoding procedure above can be clearly done in polynomial time, so it remains to prove the correctness of the reduction, which we show below.

\paragraph{Proof of correctness.}
We first record a few important inequalities that will be frequently used in this proof.
\begin{claim}
\label{claim:inequailities in Budgets}
    When $u$ is the input of an \NPURIFY\ gate, we have 
    \begin{equation}
    \label{eq:tight-bound-NOT-c}
    \begin{split}
        kc^{i} + c^{j+1/3} + c^{j + 2 + 2/9}< (1 - \gamma)B_u = (1-\gamma)(kc^{i + 1/12} + c^{j - 1} + c^{j + 1})
    \end{split}
    \end{equation}
    and when $u$ is not the input of an \NPURIFY\ gate, we have 
    \begin{equation}
    \label{eq:tight-bound-NOT-b}
    \begin{split}kc^i + c^{j + 1/3} < (1 - \gamma)B_u = (1 - \gamma)(kc^{i+1/12} + c^{j-1}).
    \end{split}
    \end{equation}
    and
    \begin{equation}
    \label{eq:tight-bound-NOT-a}
    \begin{split}
        kc^{i + 1/9} + c^{j-1} > B_u = kc^{i+1/12} + c^{j-1}
    \end{split}
    \end{equation}
\end{claim}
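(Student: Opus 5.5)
All three inequalities come from the parameter bounds stated right after the definition of $c$ and $k$, namely $c^{1/12}>1+\frac{1}{25n}$, $c<2$ and $k>1000nc^{4n+2}$, together with $\gamma\le 32/(1000n+42)$. The idea is to compare the $k$ output items' term against the few input-item terms. Inequality \eqref{eq:tight-bound-NOT-a} is immediate: since $c>1$ we have $c^{i+1/9}>c^{i+1/12}$, and multiplying by $k>0$ and adding $c^{j-1}$ to both sides gives the claim.

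For \eqref{eq:tight-bound-NOT-b}, drop the nonnegative term $(1-\gamma)c^{j-1}$ on the right. It then suffices to show
\[
c^{j+1/3} < kc^i\bigl((1-\gamma)c^{1/12}-1\bigr).
\]
I lower-bound the bracket using $c^{1/12}>1+\frac1{25n}$:
\[
(1-\gamma)c^{1/12}-1 \;>\; \frac{1}{25n}-\gamma\cdot\frac{25n+1}{25n}.
\]
Plugging in $\gamma\le 32/(1000n+42)$ gives $\gamma(25n+1)\le (800n+32)/(1000n+42)$. Hence the bracket exceeds
\[
\frac{200n+10}{25n(1000n+42)}\;\ge\;\frac{1}{200n}
\]
for $n\ge 2$; this is a routine check, since $200n\cdot(200n+10)\ge 25n(1000n+42)$ holds for all $n\ge1$. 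It then remains to verify $k>200n\,c^{\,j-i+1/3}$.

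Here I use the level structure. Every level lies in $[4n+3]$ and every gate index satisfies $I\ge 1$, so $i\ge 5$. The first input level satisfies $j=4J+1\le 4n+1$. Thus $j-i+1/3\le 4n+2$, and $k>1000nc^{4n+2}\ge 200n\,c^{\,j-i+1/3}$.

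For \eqref{eq:tight-bound-NOT-c}, I argue the same way, dropping $(1-\gamma)(c^{j-1}+c^{j+1})$. It suffices to have
\[
c^{j+1/3}+c^{j+2+2/9}<\frac{kc^i}{200n}.
\]
Because the second input item of an \NPURIFY\ gate lives at level $j+2=4J+3\le 4n+3$, we get $j+2+2/9-i\le 4n+3+2/9-5<4n+2$. Therefore both exponents $j+1/3-i$ and $j+2+2/9-i$ are at most $4n+2$, and the left side is at most $2c^{4n+2}$. The claim follows from $k>1000nc^{4n+2}>400nc^{4n+2}$.

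The only step needing care is the arithmetic showing that $(1-\gamma)c^{1/12}-1$ stays of order $1/n$ for the stated range of $\gamma$. Everything else is exponent bookkeeping, using that output levels are at least $5$ and all levels are at most $4n+3$.
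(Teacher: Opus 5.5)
Your proof is correct. It reaches the paper's conclusion by a somewhat different decomposition.

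\textbf{The paper's route.} The paper bounds the ratio of the left-hand side to $B_u$. After dividing by $c^j$, it replaces the input terms by constants ($c^{1/3}+c^{2+2/9}<10$ and $c^{-1}+c\geq 2$). It then replaces $c^{i-j}$ by $c^{-4n-2}$ and $kc^{-4n-2}$ by $1000n$, tacitly using that $x\mapsto (x+10)/(xc^{1/12}+2)$ is decreasing. For \eqref{eq:tight-bound-NOT-c} this lands exactly on $\frac{1000n+10}{1000n+42}\leq 1-\gamma$, which is where the constant in $\gamma\leq 32/(1000n+42)$ comes from. That endpoint relies on the $c^{j-1}+c^{j+1}$ part of $B_u$ supplying the $+2$ in the denominator.

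\textbf{Your route.} You argue additively instead. You discard those budget terms and isolate the margin $(1-\gamma)c^{1/12}-1$, bounding it below by $\frac{200n+10}{25n(1000n+42)}\geq\frac{1}{200n}$. You then let $k>1000nc^{4n+2}$ absorb the input terms, using the level bounds $i\geq 5$, $j\leq 4n+1$ and $j+2\leq 4n+3$.

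\textbf{What each buys.} Your version avoids the unstated monotonicity step and makes the slack explicit: inside your argument, $k>400nc^{4n+2}$ already suffices. Your sharper level bookkeeping is genuinely load-bearing, though. With the paper's cruder estimate (input terms below $10c^{j}$ together with $c^{j-i}\leq c^{4n+2}$), your $1/(200n)$ margin would only give $k/(200n)>5c^{4n+2}$. That would not close the gap against $10c^{4n+2}$ using $k>1000nc^{4n+2}$ alone. The paper's version, for its part, explains the exact threshold on $\gamma$.

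\textbf{One small slip.} In your last step, the two input terms are at most $2c^{i+4n+2}$, not $2c^{4n+2}$; you evidently mean the bound after dividing by $c^i$.
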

\begin{proof}
    Recall that when $u$ is the input of an \NPURIFY\ gate, $B_u=(kc^{i + 1/12} + c^{j-1} + c^{j+1})$. \Cref{eq:tight-bound-NOT-c} holds since 
    \begin{align*}
        \frac{kc^{i} + c^{j+1/3} + c^{j+2+2/9}}{kc^{i+1/12}+c^{j-1}+c^{j+1}} &= \frac{kc^{i-j} + c^{1/3} + c^{2 + 2/9}}{kc^{i-j+1/12}+c^{-1} + c} < \frac{kc^{i-j} + 2 + 8}{kc^{i-j + 1/12} + 2} < \frac{kc^{-4n-2} + 10}{kc^{-4n-2+1/12}+2}\\
        &< \frac{1000n + 10}{1000n(1 + \frac{1}{25n}) + 2} = \frac{1000n + 10}{1000n + 42} \leq 1 - \gamma.
    \end{align*}

    On the other hand, when $u$ is not the input of an \NPURIFY\ gate, $B_u=kc^{i + 1/12} + c^{j - 1}$. \Cref{eq:tight-bound-NOT-b} holds since
    \begin{align*}
        \frac{kc^{i} + c^{j+1/3}}{kc^{i+1/12}+c^{j - 1}} &= \frac{kc^{i-j} + c^{1/3}}{kc^{i-j+1/12}+c^{-1}} < \frac{kc^{i-j} + 2}{kc^{i-j + 1/12}} < \frac{kc^{-4n-2} + 2}{kc^{-4n-2+1/12}}\\
        &< \frac{1000n + 2}{1000n(1 + \frac{1}{12(n+1)})} < \frac{1000n + 2}{1000n + 42} <1 - \gamma.
    \end{align*}
    
    Finally, given $c>1$, \Cref{eq:tight-bound-NOT-a} holds trivially.
    
    This finishes the proof of the inequalities.
\end{proof}

Now we give a global lower bound on the multipliers of real agents $w\in[n]$.

\begin{lemma}
    Let $\alpha$ be any $\gamma$-approximate autobidding equilibrium. For any real agent $w\in[n]$, we have $\alpha_w \in [c^{-1}, 1]$.
\end{lemma}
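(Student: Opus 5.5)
The upper bound $\alpha_w\le 1$ is immediate from the definition of ABE-B, where $\alpha\in[0,1]^n$. For the lower bound we argue by contradiction: suppose some real agent $w$ has $\alpha_w<c^{-1}$, and we show that $w$'s budget is slack by more than a $\gamma$-margin. Approximate maximal pacing (condition (c')) would then force $\alpha_w=1$, a contradiction. So the task reduces to upper-bounding $w$'s total GSP payment whenever $\alpha_w<c^{-1}$.

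First we bound the price of each item of $w$. Every item of $w$ sits at some level $r$ with value $c^{r+\theta}$, where $\theta\in\{1/3,2/9,2/3\}$, so its bid is $\alpha_w c^{r+\theta}<c^{r+\theta-1}<c^{r}$. Its per-click price is at most its own bid, since the price is the bid of a lower-ranked item from another bidder, and bid-consistency forces that bid to be no larger. Hence each price is strictly below $c^{r+\theta-1}$. No guard-multiplier fact is needed here, which is convenient because the guard's multiplier has not yet been pinned down at this point.

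Summing with CTRs equal to $1$ gives the payment bound. The $k$ output items at level $i$ pay together less than $kc^{i+2/3-1}=kc^{i-1/3}\le kc^{i}$. The first input item at level $j$ pays less than $c^{j+1/3-1}\le c^{j+1/3}$. In the \NPURIFY\ case, the second input item at level $j+2$ pays less than $c^{j+2+2/9-1}\le c^{j+2+2/9}$. Thus $P_w<kc^{i}+c^{j+1/3}$, or $P_w<kc^{i}+c^{j+1/3}+c^{j+2+2/9}$ in the \NPURIFY\ case. These are precisely the left-hand sides of \Cref{eq:tight-bound-NOT-b} and \Cref{eq:tight-bound-NOT-c} in \Cref{claim:inequailities in Budgets}, so $P_w<(1-\gamma)B_u$. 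Condition (c') then gives $\alpha_w=1$, contradicting $\alpha_w<c^{-1}$.

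Some bookkeeping is needed to confirm that every vertex owns exactly one output block and at most the listed input items, and that the budget formulas (a) and (b) match. This relies on each node being the output of exactly one gate and the input of exactly one gate; for an \NPURIFY\ gate, the second output $w$ lies at level $i=4I+3$, and its budget uses that $i$. The only potential obstacle is making sure the crude bound "price $\le$ own bid" is already strong enough. It is, since the claim's inequalities were designed with exactly the leftover slack of a factor $c^{-1/3}$ or better.
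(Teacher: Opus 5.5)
Your proposal is correct and follows the same route as the paper. The paper also assumes $\alpha_w<c^{-1}$, bounds $w$'s payment by $\alpha_w$ times its total value (so $P_w<kc^{i}+c^{j+1/3}$, plus $c^{j+2+2/9}$ in the \NPURIFY\ case), applies the inequalities of the budget claim to get $P_w<(1-\gamma)B_w$, and concludes from approximate maximal pacing. Your explicit justification that no item's GSP price exceeds its own bid, and hence that no fact about the guard's multiplier is needed, only spells out a step the paper leaves implicit.
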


\begin{proof}
    Assume for contradiction that there is an agent $w\in[n]$ whose multiplier $\alpha_w < c^{-1}$. We split the proof into two cases.
    
     When agent~$w$ is the input of an \NPURIFY\ gate, the payment of agent~$w$ is at most $\alpha_w(kc^{i+2/3} + c^{j+1/3} + c^{j + 2 + 2/9})$. Thus, following from \Cref{eq:tight-bound-NOT-c}, their payment is 
        \[
        kc^{i - 1/3} + c^{j - 2/3} + c^{j +1+ 2/9} < (1 - \gamma)B_w.
        \]
        By the maximal pacing rule, $\alpha_w$ should be $1$ which leads to a contradiction.
        
        When agent~$w$ is not the input of an \NPURIFY\ gate, the payment of agent~$w$ is at most $\alpha_w(kc^{i + 2/3} + c^{j + 1/3})$. Thus, following from \Cref{eq:tight-bound-NOT-b}, their expense 
        \[
        kc^{i - 1/3} + c^{j -2/3} < (1 - \gamma)B_w.
        \]
        By the maximal pacing rule, $\alpha_w$ should be $1$ which leads to a contradiction.
\end{proof}

We next establish the locality of the construction. By the preceding
lemma, every real bidder has multiplier in \([c^{-1},1]\). For each
block at level $i$, all bids belonging to that block lie strictly
between $c^{i-1}$ and $c^{i+1}$. Since the levels of distinct
blocks differ by at least $2$, no real bid from another block lies
in this interval. Consequently, the price of an output item can only
be determined by the corresponding input bid (or input bids in a
$\NOR$ gate) and the surrounding guard bids. In particular, if all
corresponding input bids are at most $c^i$, then every output item
pays at most $c^i$; and if an output bid is strictly above a
corresponding input bid, then it pays at least that input bid. 
Therefore, the analysis of every gate is local.

Moreover, since every real multiplier is at least $c^{-1}$, a first
input item at level $j$ bids strictly above the guard bid $c^{j-1}$
and therefore pays at least $c^{j-1}$. The second input item of an
 $\NPURIFY$ gate similarly bids strictly above $c^{j+1}$ and pays at
least $c^{j+1}$.

The following is the most crucial lemma for the correctness of the reduction.

\begin{lemma}
    Let $\alpha$ be any $\gamma$-approximate autobidding equilibrium. The decoding provided in \Cref{equation: decoding ABE-B} provides a feasible solution of the given \PureCircuit\ instance.
\end{lemma}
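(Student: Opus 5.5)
The plan is to check each gate type separately, using the locality property established just before the statement. Fix a $\gamma$-approximate ABE-B $(\alpha,\pi)$. By the preceding lemma, every real multiplier lies in $[c^{-1},1]$. Consider an output agent $w$ whose output block sits at level $i$. Its $k$ output items bid $\alpha_w c^{i+2/3}\in[c^{i-1/3},c^{i+2/3}]$.

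By locality, the only other-bidder bids in $(c^{i-1},c^{i+1})$ are the input bid(s) of the same gate, at most $c^{i+1/3}$, and the guard bid $c^i$. So each output item has per-click price at most $\max\{c^i,\text{input bid}\}$. If its bid is strictly above some input bid, the price is at least that input bid.

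The remaining items of $w$ are its own input items at level $j$ (and $j+2$ if it feeds an $\NPURIFY$). These pay at least $c^{j-1}$ (resp.\ $c^{j+1}$) and at most their bids, which are at most $c^{j+1/3}$ (resp.\ $c^{j+2+2/9}$). The whole proof then reduces to two templates, both taken from \Cref{claim:inequailities in Budgets}:
\begin{itemize}
\item \emph{Low price forces output $1$.} If every output item pays at most $c^i$, then $P_w\le kc^i+c^{j+1/3}\,(+c^{j+2+2/9})<(1-\gamma)B_w$ by \eqref{eq:tight-bound-NOT-c}/\eqref{eq:tight-bound-NOT-b}. Approximate maximal pacing then gives $\alpha_w=1$, i.e.\ $\mathbf x[w]=1$.
\item \emph{High price forces low multiplier.} If every output item pays more than $c^{i+1/9}$, then $P_w> kc^{i+1/9}+c^{j-1}\,(+c^{j+1})\ge B_w$ by \eqref{eq:tight-bound-NOT-a}. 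This violates budget feasibility.
\end{itemize}

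\textbf{$\NOT$ gate.} Let $u$ be the input and $w$ the output.
\begin{itemize}
\item If $\mathbf x[u]=0$, then $\alpha_u\le c^{-1/3}$, so the input bid is at most $c^i$. Hence every output item pays at most $c^i$, and the first template gives $\mathbf x[w]=1$.
\item If $\mathbf x[u]=1$, the input bid is exactly $c^{i+1/3}$. Suppose $\alpha_w>c^{-1/3}$. Then the output bid exceeds $c^{i+1/3}$, every output item pays at least $c^{i+1/3}>c^{i+1/9}$, and the second template gives a contradiction. Hence $\alpha_w\le c^{-1/3}$, i.e.\ $\mathbf x[w]=0$.
\end{itemize}

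\textbf{$\NOR$ gate.} The argument is identical with the larger of the two input bids in place of the single input bid. If both inputs decode to $0$, both input bids are at most $c^i$, so the output pays at most $c^i$ per item and $\mathbf x[w]=1$. If one input decodes to $1$, that input bid is $c^{i+1/3}$, and the second template forces $\alpha_w\le c^{-1/3}$, i.e.\ $\mathbf x[w]=0$.

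\textbf{$\NPURIFY$ gate.} Let $u$ be the input, $v$ the output at level $i$, and $w$ the output at level $i+2$, facing the input bid $\alpha_u c^{i+2+2/9}$.
\begin{itemize}
\item If $\mathbf x[u]=0$, then $\alpha_u\le c^{-1/3}$. Both input bids are at most $c^{i}$ and $c^{i+2}$ respectively, so the first template gives $\alpha_v=\alpha_w=1$.
\item If $\mathbf x[u]=1$, the input bids are $c^{i+1/3}$ and $c^{i+2+2/9}$. Any output multiplier above $c^{-1/3}$ puts that output block strictly above its input bid. It then pays at least $c^{i+1/3}$, resp.\ $c^{i+2+2/9}$, which exceeds $c^{i+1/9}$, resp.\ $c^{i+2+1/9}$. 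This contradicts feasibility, so both outputs decode to $0$.
\end{itemize}

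The purity requirement is the main obstacle. Take $\mathbf x[u]=\bot$, i.e.\ $\alpha_u\in(c^{-1/3},1)$, and split at $c^{-2/9}$:
\begin{itemize}
\item If $\alpha_u\le c^{-2/9}$, the second input bid is at most $c^{i+2}$. The first template gives $\alpha_w=1$.
\item If $\alpha_u>c^{-2/9}$, the first input bid exceeds $c^{i+1/9}$. If $\alpha_v>\alpha_u c^{-1/3}$, then $v$'s block bids strictly above this input bid and pays more than $c^{i+1/9}$ per item, contradicting the second template. So $\alpha_v\le\alpha_u c^{-1/3}<c^{-1/3}$, i.e.\ $\mathbf x[v]=0$.
\end{itemize}
Either way, one output is boolean.

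Care is needed throughout with strict versus weak inequalities, since ties are broken by $\pi$. Each high-price argument therefore uses a strict bid comparison, so that the price bound holds in every allocation in the support of $\pi$. The low-price bounds likewise hold in every supported allocation, so they pass to the expected payments.
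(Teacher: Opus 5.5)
Your proposal is correct and follows essentially the same gate-by-gate argument as the paper: the same locality bounds, the same inequalities from \Cref{claim:inequailities in Budgets}, and the same split of the $\bot$ case of $\NPURIFY$ at $c^{-2/9}$. The differences are cosmetic: you package the two recurring steps (a slack budget forcing $\alpha_w=1$, and a strict bid comparison forcing overspending) as reusable templates, and you put the boundary case $\alpha_u=c^{-2/9}$ on the other side of the split, which is immaterial.
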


\begin{proof}
    We show the correctness by discussing the local correctness of each gate. 
    
    \paragraph{\NOT\ gates.} Consider a \NOT gate $g(u,w)$ where $u$ is the input vertex and $w$ is the output vertex. By \Cref{def:pure-circuit}, we need to show $u = 0\Rightarrow w = 1$ and $u = 1 \Rightarrow w = 0$. We consider two cases based on whether vertex $w$ is the input of an \NPURIFY\ gate or not.
    
    When vertex $w$ is not the input of an \NPURIFY\ gate, we have:
    \begin{itemize}
        \item[(a)] $\alpha_u \leq c^{-1/3} \Rightarrow \alpha_w = 1$: In this case, the actual bid from agent~$u$ is $\alpha_uv_{\mathrm{in}}^i \leq  c^{i}$. The payment of agent~$w$ is at most $kc^{i} + c^{j+1/3}$. Therefore, it follows from \Cref{eq:tight-bound-NOT-b} that
        \begin{align*}
            kc^{i} + c^{j+1/3} < (1 - \gamma)B_w
        \end{align*}
        which implies that $\alpha_w = 1$.
        \item[(b)] $\alpha_u = 1 \Rightarrow \alpha_w \leq c^{-1/3}$: Suppose, for contradiction, $\alpha_w v_{\mathrm{out}}^i > v_{\mathrm{in}}^i$. The payment of agent~$w$ is at least $kv_{\mathrm{in}}^i + c^{j-1}$, however, it follows from \Cref{eq:tight-bound-NOT-a} that
        \begin{align*}
            kc^{i+1/3} + c^{j - 1} > B_w
        \end{align*}
        which leads to a contradiction. Thus, $\alpha_w \leq \frac{v_{\mathrm{in}}^i}{v_{\mathrm{out}}^i} \leq c^{-1/3}$.
    \end{itemize}
    When vertex $w$ is the input of an \NPURIFY\ gate, we have:
    \begin{itemize}
        \item[(a)]$\alpha_u \leq c^{-1/3} \Rightarrow \alpha_w = 1$: In this case, the actual bid from agent~$u$ is $\alpha_uv_{\mathrm{in}}^i \leq  c^{i}$. The payment of agent~$w$ is at most $kc^{i} + c^{j+1/3} + c^{j + 2 + 2/9}$. Therefore, by \Cref{eq:tight-bound-NOT-c} we know that
        \begin{equation*}
        \begin{split}
            kc^{i} + c^{j+1/3} + c^{j + 2 + 2/9}< (1 - \gamma)B_w = (1-\gamma)(kc^{i + 1/12} + c^{j-1} + c^{j+1})
        \end{split}
        \end{equation*}
        which implies that $\alpha_w = 1$. 
        \item[(b)]$\alpha_u = 1 \Rightarrow \alpha_w \leq c^{-1/3}$: Suppose, for contradiction, $\alpha_w v_{\mathrm{out}}^i > v_{\mathrm{in}}^i$. The payment of agent~$w$ is at least $kv_{\mathrm{in}}^i + c^{j - 1} + c^{j + 1}$. However,
        \begin{equation}
        \begin{split}
            kc^{i+1/3} + c^{j-1} + c^{j + 1} > B_w = kc^{i+1/12} + c^{j-1} + c^{j + 1}
        \end{split}
        \end{equation}
        which leads to a contradiction. Thus, $\alpha_w \leq \frac{v_{\mathrm{in}}^i}{v_{\mathrm{out}}^i} \leq c^{-1/3}$.
    \end{itemize}
    
    \paragraph{\NOR\ gates.} Consider a \NOR\ gate $g(u,v,w)$ where $w$ is the output vertex. Assume WLOG $\alpha_u \geq \alpha_v$. The following two cases are handled by the same analysis as for the \NOT\ gates, so we omit the redundant details.

    \begin{itemize}
    \item[(a)] $\alpha_u \leq c^{-1/3} \land \alpha_v \leq c^{-1/3} \Rightarrow \alpha_u \leq c^{-1/3} \Rightarrow
      \alpha_w =1;$ and  
    \item[(b)] $\alpha_u =1 \lor \alpha_v =1 \Rightarrow \alpha_u =1 \Rightarrow \alpha_w \leq c^{-1/3}$.
    \end{itemize}

    \paragraph{\NPURIFY\ gates.} Consider an \NPURIFY\ gate $g(u,v,w)$ where $u$ is the input vertex. Since we assume there is no consecutive \NPURIFY\ gate, neither $v$ nor $w$ can be the input of another \NPURIFY\ gate. It remains to show separately that $\alpha_v$ and $\alpha_w$ satisfy the desired conditions.
    \begin{itemize}
        \item[(a)]$\alpha_u \leq c^{-1/3} \Rightarrow \alpha_v = 1$: This is identical to the case~$(a)$ in \NOT\ gate given that $v$ is not the input of an \NPURIFY\ gate.
        \item[(b)]$\alpha_u \geq c^{-2/9} \Rightarrow \alpha_v \leq c^{-1/3}$: Suppose, for contradiction, $\alpha_v v_{\mathrm{out}}^i > \alpha_uv^i_{\mathrm{in},1}$. The payment of agent~$v$ is at least $k\alpha_uv^i_{\mathrm{in},1} + c^{j-1}$. However, by \Cref{eq:tight-bound-NOT-a}
        \begin{equation*}
        \begin{split}
            kc^{i + 1/9} + c^{j-1} > B_v = kc^{i+1/12} + c^{j-1}
        \end{split}
        \end{equation*}
        which leads to a contradiction. Thus, $\alpha_v \leq \frac{v^i_{\mathrm{in},1}}{v_{\mathrm{out}}^i} \leq c^{-1/3}$.
        \item[(c)]$\alpha_u \leq c^{-2/9} \Rightarrow \alpha_w = 1$: The actual bid from agent~$u$ is $\alpha_uv^i_{\mathrm{in},2} \leq c^i$. The payment of agent~$w$ is at most $kc^i + c^{j+1/3}$. Therefore, by \Cref{eq:tight-bound-NOT-b}
        \begin{equation*}
        \begin{split}
        kc^i + c^{j + 1/3} < (1 - \gamma)B_w = (1 - \gamma)(kc^{i+1/12} + c^{j-1})
        \end{split}
        \end{equation*}
        which implies that $\alpha_w = 1$.
        \item[(d)]$\alpha_u = 1 \Rightarrow \alpha_w \leq c^{-1/3}$: Suppose, for contradiction, $\alpha_wv_{\mathrm{out}}^i > v^i_{\mathrm{in},2}$. The payment of agent~$w$ is at least $kv^i_{\mathrm{in},2} + c^{j-1}$. However, it follows from \Cref{eq:tight-bound-NOT-a} that
        \begin{align*}
            kc^{i+2/9} + c^{j-1} &> B_w
        \end{align*}
        which leads to a contradiction. Thus, $\alpha_w \leq \frac{v^i_{\mathrm{in},2}}{v_{\mathrm{out}}^i} \leq c^{-1/3}$.
    \end{itemize}
    If $\alpha_u\leq c^{-1/3}$, then both outputs decode to $1$; if
    $\alpha_u=1$, then both outputs decode to $0$. Otherwise, either
    $\alpha_u\geq c^{-2/9}$, in which case $v$ decodes to $0$, or
    $\alpha_u\leq c^{-2/9}$, in which case $w$ decodes to $1$.
    Thus at least one output is boolean in all cases, and the decoded
    values satisfy the $\NPURIFY$ constraint.

\end{proof}

\subsection{\PPAD-Hardness of Computing ABE-R}

\label{sec:ABE-R}

We first show interesting observations of the autobidding equilibria under ROI constraints in the shopping model, namely, \Cref{lemma:multiplier-range-of-two} and \Cref{lemma:multiplicity}. The latter is a generalization of the former, and will be crucially used in our reduction later to upper bound the multipliers of the agents in any equilibrium. 




\begin{lemma}
\label{lemma:multiplier-range-of-two}
  Consider a GSP shopping auction with $n$ agents and $m\geq 2$ items per agent. Let
  $M=nm$, and suppose that $K\geq M$, so the number of slots is at least the
  total number of items. If two agents $i$ and $i'$ have identical valuations, i.e., $v_{i,j}=v_{i',j}$ for every $j\in[m],$
  and the last valuation satisfies $v_{i,m}=2v_{i,m-1}>0$, then in every $\gamma$-approximate ABE-R for any $\gamma>0$, their multipliers are at most $2$.
\end{lemma}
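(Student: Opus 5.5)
Throughout I read the hypothesis in the way consistent with the sorted order $v_{i,1}\ge\cdots\ge v_{i,m}$: the last two values satisfy $v_{i,m-1}=2v_{i,m}$. Write $x:=v_{i,m}$, so agents $i$ and $i'$ each own an item of value $2x$ and an item of value $x$. I argue by contradiction. Fix a $\gamma$-approximate ABE-R $(\alpha,\pi)$ and relabel so that $\alpha_i\ge\alpha_{i'}$. Suppose $\alpha_i>2$. I only need the two exact conditions, bid-consistency~(a) and ROI feasibility~(b). Condition~(c') is used only to rule out a strictly slack agent below the cap $A$. The plan is to show that agent $i$, or the pair $\{i,i'\}$ taken together, pays strictly more than it receives, which contradicts~(b).

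\textbf{Step 1 (tied case $\alpha_i=\alpha_{i'}=\alpha>2$).} Every bid level $\alpha v_{\cdot,j}$ is then shared by the two twin items $(i,j)$ and $(i',j)$. In any allocation in the support of $\pi$, the twin placed higher has the other twin below it, so its per-click price is at least $\alpha v_j$. The twin placed lower is priced by the highest bid of a different bidder below it. That bid is at least the bid of the other agent's item of the next value level, and for the last pair there is only the zero floor. I then sum the ROI constraints of $i$ and $i'$. With all CTRs equal to~$1$ (since $K\ge M$), the combined payment at value level $j$ is at least $\alpha v_j+\alpha v_{j+1}$ for $j<m$, and at least $\alpha x$ for the last level. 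The combined value is $2\sum_j v_j$. Here the ratio $2$ is used: the level-$(m-1)$ pair pays at least $\alpha(2x)+\alpha x=3\alpha x>6x$, while its value is $4x$. This surplus of at least $2x$ covers the deficit of the last pair, whose lower item may pay nothing. At every higher level $j<m-1$, the higher twin alone already pays $\alpha v_j>2v_j$, which covers the value $2v_j$ of both items at that level. Summing gives $P_i+P_{i'}>V_i+V_{i'}$, so one of the two agents violates~(b).

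\textbf{Step 2 (strict case $\alpha_i>\alpha_{i'}$).} Now every item $(i,j)$ strictly outbids its twin $(i',j)$, so its price is at least $\alpha_{i'}v_j\ge v_j$. This is not yet strict, so I exploit the factor~$2$ again. Item $(i,m)$ bids $\alpha_i x>2x$. The key sub-case is whether $i'$'s $2x$-item, with bid $2\alpha_{i'}x$, lies below $(i,m)$. If it does, then $(i,m)$ pays at least $2\alpha_{i'}x\ge 2x>x$, which gives a strict violation for agent $i$. If it does not, then $\alpha_{i'}>\alpha_i/2>1$. In that situation every item of $i$ pays at least $\alpha_{i'}v_j>v_j$, again a strict violation. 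Both sub-cases contradict~(b). Randomized tie-breaking across other bidders does not matter here, because each bound holds pointwise for every allocation in the support of $\pi$.

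\textbf{Main obstacle.} The delicate step is Step~1. There the lower twin's price is determined by whatever other-bidder bid happens to be immediately below it. Other bidders' items may be interleaved with the twins' items, and ties may be broken randomly, so I must verify that the price is never smaller than the bound used. The bound needs to be the bid of the other twin's next-lower item, or~$0$ for the last pair, and it must hold in every supported allocation. I would first prove the pointwise lower bound ``price of the lower twin $\ge$ bid of the partner agent's next item'', which holds since that item belongs to a different bidder and sits below. Only afterwards would I do the telescoping sum. The factor~$2$ exactly absorbs the unpriced last item.
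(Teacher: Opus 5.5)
Your proof is correct and follows essentially the paper's route. In the tied case the higher-placed twin at each value level already pays at least $\alpha v_j$. In the strict case the lower agent's $2x$-item overprices the higher agent's $x$-item; you split on $2\alpha_{i'}<\alpha_i$ versus $2\alpha_{i'}\ge\alpha_i$, whereas the paper first derives that the lower multiplier equals $1$.

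One correction: $K\ge M$ only makes all click-through rates positive, not equal to $1$. Step~1 still goes through, because with $h_j<\ell_j$ the twins' slots, $c_{h_j}\alpha v_j>2c_{h_j}v_j\ge(c_{h_j}+c_{\ell_j})v_j$ holds at every level, including the last. So there is in fact no deficit at level $m$. Neither the lower twin's price (your ``main obstacle'') nor the factor-$2$ hypothesis is needed in the tied case.
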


\begin{proof}
  Note that the arguments below only use ROI feasibility, but not maximal pacing. So the lemma holds even for $\gamma$-approximate ABE-R for any desired $\gamma>0$.

  We prove the lemma by considering two cases. Fix any autobidding equilibrium under the ROI constraint $(\alpha, \pi)$ throughout the proof.
  
  First suppose $\alpha_i=\alpha_{i'}=\alpha$. 
  Fix any allocation in the support of the tie-breaking rule $\pi$. For each
  $j\in[m]$, let $k_j$ and $k'_j$ be the slots assigned to items $(i,j)$ and
  $(i',j)$, respectively, and let
  \[
    h_j=\min\{k_j,k'_j\}.
  \]
  Note that the two bids $\alpha v_{i,j}$ and $\alpha v_{i',j}$ are identical,
  and the item in slot $h_j$ has the other tied item below it. By the
  GSP rule without self-pricing, the item in slot $h_j$ has per-click price at least
  $\alpha v_{i,j}$. This is true pointwise for this allocation, so we know that the aggregate payment for agent $i$ and $i'$ is at least
  \[
    P_i+P_{i'}
    \geq
    \alpha\sum_{j=1}^m c_{h_j}v_{i,j}.
  \]
  On the other hand, since click-through rates are nonincreasing, we have an upper bound on their aggregate value
  \[
    V_i+V_{i'}
    =
    \sum_{j=1}^m (c_{k_j}+c_{k'_j})v_{i,j}
    \leq
    2\sum_{j=1}^m c_{h_j}v_{i,j}.
  \]
  Taking expectations over the tie-breaking rule and using ROI feasibility for
  agents $i$ and $i'$ gives
  \[
    \alpha\cdot \mathbb{E}\!\left[\sum_{j=1}^m c_{h_j}v_{i,j}\right]
    \leq P_i(\pi,\mathbf b)+P_{i'}(\pi,\mathbf b)
    \leq V_i(\pi)+V_{i'}(\pi)
    \leq
    2\cdot \mathbb{E}\!\left[\sum_{j=1}^m c_{h_j}v_{i,j}\right].
  \]
  The expectation in the display is positive because we have enough slots in the
  auction, i.e., $K\geq M$, all click-through rates are positive, and
  $v_{i,m}>0$. Thus $\alpha\leq 2$.

  On the other hand, consider the second case where the multipliers are different. Assume without loss of generality that
  $\alpha_i<\alpha_{i'}$. We will show that $\alpha_{i'}\leq 2$.
  
  First, we argue that $\alpha_i=1$. Indeed, for each item $j$ of agent $i'$, the corresponding
  item of agent $i$ has bid $\alpha_i v_{i,j}$, which is below
  $\alpha_{i'}v_{i',j}$ and belongs to another agent. Thus each item of
  agent $i'$ has per-click price at least $\alpha_i v_{i,j}$. Since
  $v_{i,m}>0$ and all click-through rates are positive, ROI feasibility of
  agent $i'$ therefore implies $\alpha_i\leq 1$. Since $\alpha_i\geq 1$ by
  definition, we have $\alpha_i=1$.

  Suppose for contradiction that
  $\alpha_{i'}>2$. For every item $j$ of agent $i'$, the identical item of
  agent $i$ gives a lower bid $v_{i,j}$, so the per-click price of item $j$
  is at least its value $v_{i',j}$. Moreover, for item $m-1$ of agent $i'$,
  the last item of agent $i$ gives the lower bid
  $v_{i,m}=2v_{i,m-1}=2v_{i',m-1}$, which is strictly below
  $\alpha_{i'}v_{i',m-1}$ and strictly above $v_{i',m-1}$. Hence item $m-1$
  of agent $i'$ is priced strictly above its value, while all other items are
  priced at least at their values. Since $v_{i,m-1}>0$ and all click-through
  rates are positive, this violates ROI feasibility of agent $i'$ pointwise,
  and hence also in expectation. Therefore $\alpha_{i'}\leq 2$.
\end{proof}

We will use the following generalization of the two-agent lemma.
\begin{lemma}
  \label{lemma:multiplicity}
  Consider a GSP shopping auction with $n$ agents and $m\geq 2$ items per agent. Let
  $M=nm$, and suppose that $K\geq M$, so the number of slots is at least the
  total number of items. If a set $T$ of $q\geq 2$ agents have identical
  valuations, i.e., $v_{i,j}=v_{i',j}$ for every $i,i'\in T$ and $j\in[m]$,
  and the last valuation satisfies
  $v_{i,m}=\frac{q}{q-1}v_{i,m-1}>0$ for every $i\in T$, then in every $\gamma$-approximate ABE-R for any $\gamma>0$,
  their multipliers are at most $q/(q-1)$.
\end{lemma}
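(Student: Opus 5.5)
We follow the two-case structure of \Cref{lemma:multiplier-range-of-two}. We use only bid-consistency and ROI feasibility, never (approximate) maximal pacing, so the bound holds for every $\gamma$-approximate ABE-R. Fix such an equilibrium $(\alpha,\pi)$, and let $v_j$ denote the common value $v_{i,j}$ for $i\in T$. Let $\alpha^*=\max_{i\in T}\alpha_i$ and let $S\subseteq T$ be the set of agents attaining it. The cases are $S=T$ (all multipliers equal) and $S\neq T$.

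\emph{Case $S=T$.} Fix an allocation in the support of $\pi$ and an item index $j$. The $q$ copies of item $j$ carry the identical bid $\alpha^* v_j$. Order these copies by slot. Each copy except the lowest-ranked one has, below it, a tied copy owned by a different agent. Under GSP without self-pricing, each of these top $q-1$ copies therefore pays a per-click price of at least $\alpha^* v_j$. Let $H_j$ be the sum of the CTRs of these top $q-1$ slots, and let $c^{\mathrm{low}}_j$ be the CTR of the lowest copy's slot. Then
\[
\sum_{i\in T}P_i(a,\mathbf b)\ \ge\ \alpha^*\sum_{j} v_j H_j .
\]
Since CTRs are nonincreasing, $c^{\mathrm{low}}_j$ is at most the average of the other $q-1$ CTRs, so $c^{\mathrm{low}}_j\le H_j/(q-1)$. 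Hence
\[
\sum_{i\in T}V_i(a)\ \le\ \Bigl(1+\tfrac1{q-1}\Bigr)\sum_j v_jH_j=\tfrac{q}{q-1}\sum_j v_jH_j .
\]
Taking expectations over $\pi$ and summing the ROI constraints of the agents in $T$ gives $\alpha^*\,\mathbb E[\sum_j v_jH_j]\le \frac{q}{q-1}\mathbb E[\sum_j v_jH_j]$. This expectation is positive because $K\ge M$, all CTRs are positive, and all values are positive. Therefore $\alpha^*\le q/(q-1)$.

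\emph{Case $S\neq T$.} Let $\beta<\alpha^*$ be the largest multiplier among agents in $T\setminus S$, attained by some agent $i_0$, and fix any $i'\in S$. For every item $j$ of $i'$, the copy owned by $i_0$ bids $\beta v_j<\alpha^* v_j$, belongs to another bidder, and is ranked below $i'$'s copy. So every item of $i'$ has per-click price at least $\beta v_j$, pointwise in every allocation of the support. Since all CTRs and values are positive, ROI feasibility of $i'$ forces $\beta\le1$. Because multipliers are at least $1$, this gives $\beta=1$, and hence every agent of $T\setminus S$ has multiplier $1$.

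Now suppose, for contradiction, that $\alpha^*>q/(q-1)$. Item $j$ of $i'$ is still priced at least $v_j$, because $i_0$'s copy bids $v_j<\alpha^* v_j$. For item $m-1$, the last item of $i_0$ bids
\[
v_m=\tfrac{q}{q-1}v_{m-1},
\]
which lies strictly between $v_{m-1}$ and $\alpha^*v_{m-1}$. This bid belongs to another agent and is ranked below $i'$'s item $m-1$, so that item is priced strictly above its value. Its CTR is positive, so the total payment of $i'$ strictly exceeds its value in every supported allocation. This violates ROI feasibility, so $\alpha^*\le q/(q-1)$.

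Combining the two cases, every multiplier in $T$ is at most $\alpha^*\le q/(q-1)$. The main point needing care is the CTR averaging step in the first case. The two-agent proof used $\min\{k_j,k_j'\}$ there; here we must instead compare the lowest copy's CTR with the other $q-1$ CTRs for each $j$ separately, allowing arbitrary interleaving of the tied copies across different $j$. Per-$j$ monotonicity of CTRs is exactly what makes the factor $q/(q-1)$ appear.
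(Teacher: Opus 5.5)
Your proof is correct and follows the paper's argument. The equal-multiplier case is the same per-$j$ CTR averaging over the $q$ tied copies, and the unequal case is the paper's argument for a unique top agent. The one difference is a simplification: you apply that argument to an arbitrary $i'\in S$ regardless of $|S|$. This is valid because $i_0$'s copy bids strictly below each of $i'$'s items, and it makes the paper's separate aggregate argument for the sub-case $|H|\ge 2$ unnecessary.
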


\begin{proof}
  Note that the arguments below only use ROI feasibility, but not maximal pacing. So the lemma holds even for $\gamma$-approximate ABE-R for any desired $\gamma>0$.

  We prove the lemma by considering two cases. Fix any autobidding equilibrium
  under the ROI constraint $(\alpha,\pi)$ throughout the proof, and let
  $\beta=\frac{q}{q-1}$. Fix an arbitrary agent $i_0\in T$, and write
  $v_j=v_{i_0,j}$ for the common value of item $j$ among the agents in $T$.

  First suppose all agents in $T$ have the same multiplier $\alpha$. Fix any
  allocation in the support of the tie-breaking rule $\pi$. For each
  $j\in[m]$, let
  \[
    h_{j,1}<h_{j,2}<\cdots<h_{j,q}
  \]
  be the slots assigned to the items $(i,j)$ for agents in $T$. Since all
  these bids are identical, each item in one of the slots
  $h_{j,1},\ldots,h_{j,q-1}$ has another tied item from a different agent below
  it. By the GSP rule without self-pricing, each of these items has per-click
  price at least $\alpha v_j$. Thus, pointwise for this allocation, the
  aggregate payment of the agents in $T$ is at least
  \[
    \sum_{i\in T}P_i
    \geq
    \alpha\sum_{j=1}^m\sum_{\ell=1}^{q-1}c_{h_{j,\ell}}v_j.
  \]
  On the other hand, since click-through rates are nonincreasing,
  \[
    \sum_{i\in T}V_i
    =
    \sum_{j=1}^m\sum_{\ell=1}^q c_{h_{j,\ell}}v_j
    \leq
    \frac{q}{q-1}\sum_{j=1}^m\sum_{\ell=1}^{q-1}c_{h_{j,\ell}}v_j.
  \]
  Taking expectations over the tie-breaking rule and using ROI feasibility for
  all agents in $T$ gives
  \[
    \alpha\cdot
    \mathbb{E}\!\left[
      \sum_{j=1}^m\sum_{\ell=1}^{q-1}c_{h_{j,\ell}}v_j
    \right]
    \leq
    \sum_{i\in T}P_i(\pi,\mathbf b)
    \leq
    \sum_{i\in T}V_i(\pi)
    \leq
    \frac{q}{q-1}\cdot
    \mathbb{E}\!\left[
      \sum_{j=1}^m\sum_{\ell=1}^{q-1}c_{h_{j,\ell}}v_j
    \right].
  \]
  The expectation in the display is positive because we have enough slots in
  the auction, i.e., $K\geq M$. Therefore $\alpha\leq\beta$.

  On the other hand, consider the second case where the multipliers of agents in
  $T$ are not all identical. Let $\alpha^*=\max_{i\in T}\alpha_i$, and let
  $H=\{i\in T:\alpha_i=\alpha^*\}$ be the set of agents in $T$ with the largest
  multiplier.

  First suppose $|H|\geq 2$. Let
  $\widehat{\alpha}=\max_{i\in T\setminus H}\alpha_i$, which is well-defined
  because the multipliers in $T$ are not all identical. Fix any allocation in
  the support of $\pi$. For each $j\in[m]$, let
  \[
    h_{j,1}<h_{j,2}<\cdots<h_{j,|H|}
  \]
  be the slots assigned to the items $(i,j)$ for agents $i\in H$. Each item in
  one of the slots $h_{j,1},\ldots,h_{j,|H|-1}$ has another tied item from a
  different agent in $H$ below it, and hence has per-click price at least
  $\alpha^*v_j$. For the item in slot $h_{j,|H|}$, the identical item of an
  agent in $T\setminus H$ gives the lower bound
  $\widehat{\alpha}v_j$ on its per-click price. Thus
  \[
    \sum_{i\in H}P_i
    \geq
    \sum_{j=1}^m
    \left(
      \alpha^*\sum_{\ell=1}^{|H|-1}c_{h_{j,\ell}}
      +\widehat{\alpha}c_{h_{j,|H|}}
    \right)v_j.
  \]
  On the other hand,
  \[
    \sum_{i\in H}V_i
    =
    \sum_{j=1}^m
    \left(
      \sum_{\ell=1}^{|H|}c_{h_{j,\ell}}
    \right)v_j.
  \]
  Since $\alpha^*\geq 1$ and $\widehat{\alpha}\geq 1$, if $\alpha^*>1$ then
  the aggregate payment of the agents in $H$ is strictly larger than their
  aggregate value pointwise. This strict inequality holds because $|H|\geq 2$,
  all click-through rates are positive, and $v_m>0$. This violates ROI
  feasibility after taking expectations. Thus $\alpha^*=1$, and therefore
  $\alpha^*\leq\beta$.

  It remains to consider the case where $H=\{i^*\}$ contains a single agent.
  Let $\widehat{\alpha}=\max_{i\in T\setminus\{i^*\}}\alpha_i$. For each item
  $j$ of agent $i^*$, the corresponding item of an agent with multiplier
  $\widehat{\alpha}$ gives the lower bound
  $\widehat{\alpha}v_{i^*,j}$ on its per-click price. Thus each item of agent
  $i^*$ has per-click price at least $\widehat{\alpha}v_{i^*,j}$. Since
  $v_{i^*,m}>0$ and all click-through rates are positive, ROI feasibility of
  agent $i^*$ implies $\widehat{\alpha}\leq 1$. Since
  $\widehat{\alpha}\geq 1$ by definition, we have $\widehat{\alpha}=1$.

  Suppose for contradiction that $\alpha^*>\beta$. For every item $j$ of agent
  $i^*$, an identical item of an agent with multiplier $\widehat{\alpha}=1$
  gives the lower bound $v_{i^*,j}$ on the per-click price of item $j$; when
  $v_{i^*,j}=0$, this lower bound is trivial. Moreover, for item $m-1$ of agent
  $i^*$, the last item of that same agent with multiplier $1$ gives the lower bid
  \[
    v_{i^*,m}=\frac{q}{q-1}v_{i^*,m-1}=\beta v_{i^*,m-1},
  \]
  which is strictly below $\alpha^*v_{i^*,m-1}$ and strictly above
  $v_{i^*,m-1}$. Hence item $m-1$ of agent $i^*$ is priced strictly above its
  value, while all other items are priced at least at their values. Since
  $v_{i^*,m-1}>0$ and all click-through rates are positive, this violates ROI
  feasibility of agent $i^*$ pointwise, and hence also in expectation.
  Therefore $\alpha^*\leq\beta$.
\end{proof}

\paragraph{The reduction and variable encoding.}
Next, we describe the reduction and prove its correctness. Given a \PureCircuit\ instance, we construct an auction and then show that every feasible approximate ABE-R of this auction encodes a feasible solution of the \PureCircuit\ instance. 

Suppose the circuit $C$ is given by a vertex set $V$ and a gate set $G$ where
$|V| = n$ and $|G| = m \leq n$. Each vertex $v$ serves as an input to some gate
$g_I$ and as an output of some gate $g_J$, where $I$ and $J$ are indices assigned
to the gates in some order. We consider each vertex as an agent participating in
the auction, and call these agents real agents. For each gate, we add items to
the corresponding agents according to the following rules, where each gate has
a unique index $I \in[n]$. Similar to the previous reduction, all click rates are $1$, and the number of slots is greater than the number of items.

Set 
\[
c = \bigg(1 + \frac{1}{128(n+1)}\bigg)^{128}, \qquad k = 2560n.
\]
All exponents of \(c\) used in the construction are integer multiples of
\(1/128\). Hence all item values are rational and have polynomial encoding
length.

\begin{itemize}
\item \NOT\ gate. There are two corresponding vertices $(u,w)$, with $u$ being
  the input vertex and $w$ being the output vertex. For the input vertex~$u$,
  add a new item with valuation $c^{2I+1/8}$. For the output vertex~$w$, add
  $k$ new items with valuation $c^{2I + 1/2}$.
\item \NOR\ gate. There are three corresponding vertices $(u,v,w)$, with $u$
  and $v$ being the input vertices and $w$ being the output vertex. For each of
  the input vertices $u$ and $v$, add a new item with valuation $c^{2I + 1/8}$.
  For the output vertex~$w$, add $k$ new items with valuation $c^{2I + 1/2}$.
\item \NPURIFY\ gate. There are three corresponding vertices $(u, v,
  w)$, with $u$ being the input vertex and $v$ and $w$ being the output
  vertices. For the input vertex~$u$, add two new items with valuations $c^{2I + 1/8}$ and
  $c^{2I + 1 + 1/32}$. For the output vertex~$v$, add $k$ new items with valuation
  $c^{2I+ 1/2}$. For the output vertex~$w$, add $k$ new items with valuation
  $c^{2I + 1 + 1/2}$.
\end{itemize}

To express the above item values in a unified and concise manner, we use $i \in [2n+1]$ and the following notation:
\[
  v^i_{\mathrm{in}} = c^{i + 1/8}, v^i_{\mathrm{out}} = c^{i + 1/2}, v^i_{\mathrm{in},1} = c^{i +
  1/8}, v^i_{\mathrm{in},2} = c^{i + 1/32}.
\]
Additionally, we introduce a set of $128n + 129$ identical auxiliary agents with valuations 
\[
c^1, c^2, \cdots, c^{2n+1}, c^{2n + 2}, \left(1+\frac{1}{128(n+1)}\right)c^{2n+2}
\]
to ensure that bids associated with a single gate fall within the same interval, 
while bids from different gates remain in distinct intervals under any approximate
equilibrium. By \Cref{lemma:multiplicity}, for their largest multiplier is upper 
bounded by $\leq 1 + \frac{1}{128 (n + 1)} \leq c^{\frac{1}{128}}.$ We use the multiplier cap $A$ 
specified in the preliminaries. One can easily show that $A > 100$.

We will use the following elementary bounds. The first one is 
\begin{equation}
\label{eq:ROI-inequality-a}
c=\left(1+\frac{1}{128(n+1)}\right)^{128}
\le e^{1/(n+1)}
\le 1+\frac1n \le 2,
\end{equation}
and therefore 
\begin{equation}
\label{eq:ROI-inequality-b}
c^{2n}<e^2<10.
\end{equation}
Moreover, by Bernoulli inequality,
\begin{equation}
\label{eq:ROI-inequality-c}
c-c^{1/2}
=c^{1/2}(c^{1/2}-1)
\ge c^{1/2}-1
\ge \frac{1}{2(n+1)}
\ge\frac{1}{4n}.
\end{equation}
In particular,
\begin{equation}
\label{eq:ROI-inequality-d}
k=2560n>256n c^{2n}.
\end{equation}

Set $\ell = \frac{3}{16}, \mu = \frac{7}{16}$ and $r = \frac{1}{2}$.
Specifically, we extract a feasible assignment of the \PureCircuit\ instance from a $\gamma$-approximate autobidding equilibrium as follows: 
\begin{equation}\label{equation: decoding ABE-R}
	\mathbf{x}[v]\coloneqq
	\left\{ \begin{array}{ll}
	0 & \text{if $\alpha_{v}\leq c^{\ell}$};\\
	1 & \text{if $\alpha_{v} \geq c^{r}$};\\
	\bot & \text{otherwise}.
	\end{array}
	\right.
\end{equation}

\begin{lemma}
  \label{lemma:value-range}
  Let \(u\) be a real agent whose output block consists of \(k\) items of
  value
  $
  v_{\mathrm{out}}^i=c^{i+1/2}.
  $
  Then every feasible ABE-R satisfies
  $
  \alpha_u v_{\mathrm{out}}^i
  \le \lambda c^{i+1}.
  $
  Consequently,
  $
  1\le\alpha_u
  \le\lambda c^{1/2}
  \le c^{1/2+1/128}.
  $
\end{lemma}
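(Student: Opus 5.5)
The plan is a direct ROI-feasibility argument. Here $\lambda := 1+\frac{1}{128(n+1)}$ is the upper bound on the multipliers of the $q=128n+129$ identical auxiliary agents. That bound comes from \Cref{lemma:multiplicity}: their last two values are $c^{2n+2}$ and $\lambda c^{2n+2}$, and $q/(q-1)=1+\frac{1}{128n+128}=\lambda$. So every auxiliary bid on the item of value $c^{i+1}$ lies in $[c^{i+1},\lambda c^{i+1}]$, because auxiliary multipliers lie in $[1,\lambda]$. Since $i\le 2n+1$, such a level-$(i+1)$ auxiliary item exists. As in the paper's ABE-R lemmas, only ROI feasibility is used, not maximal pacing, so the statement holds for $\gamma$-approximate equilibria as well.

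Suppose for contradiction that $\alpha_u v^i_{\mathrm{out}}>\lambda c^{i+1}$. Fix any allocation in the support of $\pi$. Each of the $k$ output items of $u$ bids strictly above every auxiliary bid at level $i+1$. Those bids belong to other agents, so they are ranked below the output items. Under GSP without self-pricing, each output item therefore has per-click price at least $c^{i+1}$. Since all CTRs are $1$ and all items are allocated, the output block contributes at least $k\,c^{i+1}$ to $u$'s payment, against value $k\,c^{i+1/2}$. This gives a pointwise surplus of payment over value of at least
\[
k\,(c^{i+1}-c^{i+1/2}) = k\,c^{i}(c-c^{1/2}) \ge \frac{k}{4n} = 640,
\]
using \eqref{eq:ROI-inequality-c} and $c^i\ge 1$.

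Agent $u$'s remaining items are its one or two input items. Their payments are nonnegative, and their total value is at most $c^{j+1/8}+c^{j+1+1/32}\le 2c^{2n+2}$. By \eqref{eq:ROI-inequality-a} and \eqref{eq:ROI-inequality-b}, $c^{2n+2}<10\cdot 4=40$, so this value is below $80<640$. Hence in every allocation of the support, $u$'s payment strictly exceeds $u$'s value. Taking expectations over $\pi$ gives $P_u>V_u$, contradicting ROI feasibility. Therefore $\alpha_u v^i_{\mathrm{out}}\le\lambda c^{i+1}$.

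Dividing by $v^i_{\mathrm{out}}=c^{i+1/2}$ gives $\alpha_u\le\lambda c^{1/2}$. Bernoulli's inequality gives $\lambda\le c^{1/128}$, which yields the stated bound $\alpha_u\le c^{1/2+1/128}$. The lower bound $\alpha_u\ge 1$ holds by definition of ABE-R.

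The step I expect to need the most care is the choice of the level-$(i+1)$ auxiliary bid as the lower bound on the output items' price. This requires that this bid comes from another agent and is below $u$'s output bid, which is exactly where the assumed strict inequality and $\lambda$ enter. It also requires checking that the constant $k=2560n$ makes the overpayment dominate the input items' values uniformly over levels.
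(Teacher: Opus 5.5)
Your proof is correct and follows the paper's argument: an output bid above $\lambda c^{i+1}$ is priced at least $c^{i+1}$ by the level-$(i+1)$ auxiliary bid, and $k(c^{i+1}-c^{i+1/2})$ then dominates everything else. The only difference is in the input items: you bound their payments below by $0$ and compare absolute quantities ($\ge 640$ against $<80$), whereas the paper credits the guard payments $c^{j}$ (and $c^{j+1}$) and splits on whether $u$ is an \NPURIFY\ input; your version is a harmless simplification that also avoids locating the input bids relative to the guards.

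One remark: in the paper $\lambda$ denotes the realized largest auxiliary multiplier, not the constant $1+\frac{1}{128(n+1)}$ (which in fact equals $c^{1/128}$ exactly, so Bernoulli is not needed). The later locality argument uses $\alpha_u v^i_{\mathrm{out}}\le\lambda c^{i+1}$ for this realized $\lambda$. Your argument proves that version verbatim, since an output bid above the realized $\lambda c^{i+1}$ already exceeds every auxiliary bid at level $i+1$.
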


\begin{proof}
  Suppose, for contradiction, that
  \(
  \alpha_u v_{\mathrm{out}}^i
  >
  \lambda c^{i+1}.
  \)
  Each of the \(k\) output items of \(u\) then pays at least
  \(\lambda c^{i+1}\).

  \begin{itemize}
    \item[(a)] When agent~$u$ is the input of an \NPURIFY\ gate,
      agent~$u$ has total payment
      \[P_u \geq k\lambda c^{i+1} + \lambda c^{j} + \lambda c^{j+1}
      \geq kc^{i+1} + c^{j} + c^{j+1}
      \] as their bids lie strictly above the corresponding lower guard bids. On the other hand, 
      \[
      V_u = kc^{i+1/2} + c^{j+1/8} + c^{j+1+1/32}.
      \]
      It suffices to show
      \[
      kc^{i+1} + c^j + c^{j+1} > kv^i_{\mathrm{out}} + v^j_{\mathrm{in},1} + v^{j+1}_{\mathrm{in},2}.
      \]
      Using \Cref{eq:ROI-inequality-a} to \Cref{eq:ROI-inequality-d}, we have
      \[
          \frac{c^{j+1/8} + c^{j + 1 + 1/32} - c^{j} - c^{j+1}}{c^{i+1} - c^{i + 1/2}}  \leq \frac{c^{1/8} + 2c^{1/32}}{c - c^{1/2}}c^{2n} \leq 32nc^{2n} < k.
      \] Thus $P_u > V_u$, contradicting ROI feasibility.
        
    \item[(b)] Now suppose that  agent~$u$ is not the input of an \NPURIFY\ gate. As above
      \[P_u \geq k\lambda c^{i+1} + \lambda c^{j} \geq
      kc^{i+1} + c^{j},\]
      whereas
      \[V_u = kc^{i+1/2} + c^{j+1/8}.
      \]
      Furthermore,
      \[
        \frac{c^{j + 1/8} - c^{j}}{c^{i+1} - c^{i + 1/2}} \leq \frac{c^{1/8}}{c - c^{1/2}}c^{2n}
        \leq 8nc^{2n} < k.
      \]
    Again $P_u > V_u$, contradicting ROI feasibility.
    \end{itemize}
\end{proof}

By \Cref{lemma:value-range}, $\alpha_v$ is strictly less than $A$ for each real agent $v$. Hence, by the approximate maximal pacing condition,
\begin{equation}
(1 - \gamma)V_v \leq P_v
\label{eq:ROI-not-maximal-pacing}
\end{equation}

We next establish the locality of the construction. Recall that the auxiliary
agent with the largest multiplier \(\lambda\) places guard bids
\(\lambda c^i\) at every level \(i\), where
\(1\leq\lambda\leq c^{1/128}\). By \Cref{lemma:value-range}, every
real bidder has multiplier at most \(c^{1/2+1/128}\). Consequently, every
real bid associated with level \(i\) lies between the two adjacent guard bids
\(\lambda c^i\) and \(\lambda c^{i+1}\). Indeed, an input bid at that level
is at least \(c^{i+1/32}>\lambda c^i\), and an output bid is at least
\(c^{i+1/2}>\lambda c^i\). On the other hand, even an input bid of value
\(c^{i+1/8}\) is at most \( c^{i+1/8+1/2+1/128}<c^{i+1}\leq\lambda c^{i+1}\).
The same upper bound for output bids follows directly from
\Cref{lemma:value-range}. Thus bids belonging to different levels are
separated by the guard bids, and the price of an output item can only be
determined by the corresponding input bid (or the larger of the two input
bids for a \NOR\ gate) and the surrounding guards.

In particular, if an output multiplier is smaller than \(c^r=c^{1/2}\),
then its output bid at level \(i\) is strictly below \(c^{i+1}\), whereas
every bid from a higher level lies at or above \(c^{i+1}\). Since each corresponding
input bid is strictly above the lower guard bid \(\lambda c^i\), every output
item pays at most the largest corresponding input bid. Conversely, whenever
an output bid is strictly above a corresponding input bid, every output item
pays at least that input bid. Finally, a first input item at level \(j\) bids
strictly above \(\lambda c^j\), and the second input item of an \NPURIFY\ gate
bids strictly above \(\lambda c^{j+1}\). These observations justify all
payment upper and lower bounds used in the gate analysis below.

We record the numerical inequalities used in the reduction.

\begin{claim}
\label{claim:inequalities in ROI}
Suppose $\gamma \leq \frac1{256n}$. When the output bidder is not the input of an \(\NPURIFY\) gate,
\begin{equation}
\label{eq:inequality-ROI-e}
kc^{i+1/8+\ell} + c^{r+j+1/8} < (1-\gamma) \left( kc^{i+1/2}+c^{j+1/8} \right).
\end{equation}

When the output bidder is also the input of an \(\NPURIFY\) gate,
\begin{equation}
\label{eq:inequality-ROI-f}
\begin{aligned}
kc^{i+1/8+\ell} + c^r\left( c^{j+1/8}+c^{j+1+1/32} \right) < (1-\gamma)
\left( kc^{i+1/2} +c^{j+1/8} +c^{j+1+1/32} \right).
\end{aligned}
\end{equation}

For the second output of an \(\NPURIFY\) gate,
\begin{equation}
\label{eq:inequality-ROI-g}
kc^{i+1+1/32+\mu} + c^{r+j+1/8} < (1-\gamma) \left( kc^{i+1+1/2}+c^{j+1/8} \right).
\end{equation}
\end{claim}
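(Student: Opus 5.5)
Each of the three inequalities is a purely numerical statement. I would prove it the same way as \Cref{claim:inequailities in Budgets}: move everything to one side and compare the $k$-term gap with the input-term gap, normalising by $c^{j}$. The parameters satisfy $\ell=3/16$, $\mu=7/16$, $r=1/2$, so $1/8+\ell=5/16<1/2$ and $1/32+\mu=15/32<1/2$. Hence in every case the $k$ output items enjoy a multiplicative slack of $c^{1/2}/c^{5/16}=c^{3/16}$ (respectively $c^{1/32}$ for \Cref{eq:inequality-ROI-g}). The input terms on the left are inflated by $c^{r}=c^{1/2}$, which costs a bounded amount. The plan is to show that $k$ times the first slack dominates both the input-term loss and the $\gamma$-loss.

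\textbf{Step 1 (rewrite).} For \Cref{eq:inequality-ROI-e} the goal is equivalent to
\[
k\bigl((1-\gamma)c^{i+1/2}-c^{i+5/16}\bigr) > c^{j+1/8}\bigl(c^{1/2}-(1-\gamma)\bigr).
\]
Bound the right side by $c^{j+1/8}\cdot c^{1/2}\le 2c^{j}$, or more crudely by $4c^{j}$ using \Cref{eq:ROI-inequality-a}. For the left side, write $(1-\gamma)c^{1/2}-c^{5/16}=c^{5/16}(c^{3/16}-1)-\gamma c^{1/2}$. Bernoulli's inequality on $c=(1+\frac{1}{128(n+1)})^{128}$ gives $c^{3/16}=(1+\frac1{128(n+1)})^{24}\ge 1+\frac{24}{128(n+1)}\ge 1+\frac{3}{32n}$. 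Since $\gamma\le\frac1{256n}$ and $c^{1/2}\le 2$, we have $\gamma c^{1/2}\le \frac{1}{128n}$. The left bracket is therefore at least $c^{i}\bigl(\frac{3}{32n}-\frac1{128n}\bigr)=c^{i}\cdot\frac{11}{128n}$.

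\textbf{Step 2 (compare scales).} The levels satisfy $i,j\in[2n+1]$, so $c^{j-i}\le c^{2n}<10$ by \Cref{eq:ROI-inequality-b}. The inequality then reduces to $k\cdot\frac{11}{128n}>4c^{2n}$, i.e.\ $k>\frac{512}{11}nc^{2n}$. This holds since $k=2560n>256nc^{2n}$ by \Cref{eq:ROI-inequality-d}.

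\textbf{Step 3 (the other two).} \Cref{eq:inequality-ROI-f} is identical except that the right side of the rewritten inequality gains $c^{j+1+1/32}(c^{1/2}-1+\gamma)\le 4c^{j+1}\le 8c^{j}$. The requirement becomes roughly $k\cdot\frac{11}{128n}>12c^{2n}$, still implied by \Cref{eq:ROI-inequality-d}.

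\textbf{Main obstacle.} \Cref{eq:inequality-ROI-g} is the tight case, since its slack is only $c^{1/32}$. Here the output level is $i+1$, and Bernoulli gives $c^{1/32}-1\ge\frac{4}{128(n+1)}\ge\frac{1}{64n}$. After subtracting $\gamma c^{1/2}\le\frac{1}{128n}$, the left side is at least $kc^{i+1+15/32}\cdot\frac{1}{128n}\ge kc^{i}\cdot\frac1{128n}$. It must beat $4c^{j}$, so we need $k>512nc^{j-i}$. This is exactly where the constants are tight. I would sharpen the bounds to get through: keep the factor $c^{i+1+15/32}\ge c^{i}$ instead of dropping it, bound the right side by $c^{j+1/8}(c^{1/2}-1+\gamma)\le c^{j+1/8}\cdot\frac{1}{n}$ using $c^{1/2}-1\le \frac{1}{2n}$ (from $c\le 1+\frac1n$) rather than $4c^{j}$, and note that $j-i\le 2n$. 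This yields the condition $k\cdot\frac{1}{128n}>\frac{2}{n}c^{2n}$, i.e.\ $k>256c^{2n}$, which holds comfortably. The same sharper right-hand estimate also makes Steps 1–3 slack.
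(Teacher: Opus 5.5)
Your proposal is correct and takes essentially the same route as the paper. Both write each gap as $k$ times the output slack minus $\Delta_r=c^{1/2}-(1-\gamma)$ times the input values, lower-bound the slack by Bernoulli minus the $\gamma$-loss, and compare scales using $c^{j-i}\le c^{2n}<10$. The only difference is that the paper uses the sharp bound $\Delta_r<1/n$ for all three inequalities, whereas you use the cruder $\Delta_r\le c^{1/2}$ for \eqref{eq:inequality-ROI-e} and \eqref{eq:inequality-ROI-f}. You switch to the sharp bound only for \eqref{eq:inequality-ROI-g}, where the $c^{1/32}$ slack makes it necessary.
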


\begin{proof}
Let $\Delta_r=c^r-(1-\gamma)$. Since $r=1/2$, \Cref{eq:ROI-inequality-a} and
\(\gamma\leq 1/(256n)\) imply
\[
  \Delta_r
  \leq \frac{1}{2n}+\frac{1}{256n}
  <\frac1n.
\]
Moreover, Bernoulli's inequality gives
\begin{align}
  (1-\gamma)c^{3/16}-1
  &\geq \frac{3}{16(n+1)}-\frac{1}{128n}
  \geq \frac{1}{16n},
  \label{eq:ROI-first-block-gap}\\
  (1-\gamma)c^{1/32}-1
  &\geq \frac{1}{32(n+1)}-\frac{1}{128n}
  \geq \frac{1}{128n}.
  \label{eq:ROI-second-block-gap}
\end{align}
We now verify the three inequalities in the claim. Recall that the levels in
the construction satisfy \(\lvert i-j\rvert\leq 2n\).

For \Cref{eq:inequality-ROI-e}, subtracting the left-hand side from the
right-hand side gives
\[
k c^{i+5/16}\bigl((1-\gamma)c^{3/16}-1\bigr)
  -c^{j+1/8}\Delta_r 
\geq
160c^{i+5/16}-\frac1n c^{j+5/8}
>0.
\]
The last inequality follows from \Cref{eq:ROI-inequality-b} and
\(\lvert i-j\rvert\leq 2n\).

For \Cref{eq:inequality-ROI-f}, the same subtraction gives
\[
k c^{i+5/16}\bigl((1-\gamma)c^{3/16}-1\bigr) -\Delta_r\left(c^{j+1/8}+c^{j+1+1/32}\right).
\]
The first term is at least \(160c^{i+5/16}\). On the other hand,
\[
\Delta_r\left(c^{j+1/8}+c^{j+1+1/32}\right)
<\frac4n c^{j+1/8}
<40c^{i+5/16}.
\]
Thus \Cref{eq:inequality-ROI-f} follows.

Finally, for \Cref{eq:inequality-ROI-g}, the difference between the
right-hand side and the left-hand side is
\[
k c^{i+1+15/32}\bigl((1-\gamma)c^{1/32}-1\bigr)
  -c^{j+1/8}\Delta_r \geq 20c^{i+1+15/32}-\frac1n c^{j+5/8} >0.
\]
This proves all three inequalities.
\end{proof}

\begin{lemma}
  For any $\gamma\leq \frac{1}{256n}$, the reduction above is correct.
\end{lemma}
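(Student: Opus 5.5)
We will show that the decoding in \Cref{equation: decoding ABE-R} satisfies every gate constraint, checking one gate at a time as in the ABE-B reduction. Three earlier facts do most of the work. \Cref{lemma:value-range} gives $1\le\alpha_u\le c^{1/2+1/128}<A$ for every real agent. Hence \Cref{eq:ROI-not-maximal-pacing}, $(1-\gamma)V_u\le P_u$, holds for every real agent, and ROI feasibility gives $P_u\le V_u$. Finally, the locality discussion says that an output item's per-click price is determined only by the corresponding input bid(s) and the guard bids $\lambda c^i,\lambda c^{i+1}$ with $1\le\lambda\le c^{1/128}$. For each output agent $w$ we write $i$ for the level of its output block and $j$ for the level of its first input item. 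Then $V_w$ is exactly $kv^i_{\mathrm{out}}$ plus its input values, and we bound $P_w$ from above or below according to where the input bids sit.

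\textbf{``Input low $\Rightarrow$ output $1$'' (NOT, NOR, first output of NPURIFY).} Suppose $\alpha_u\le c^{\ell}$, and for NOR suppose both inputs satisfy this. Then the input bid is at most $c^{i+1/8+\ell}$. Assume for contradiction that $\alpha_w<c^{r}$. Then the output bids lie strictly below $c^{i+1}$, so each of the $k$ output items pays at most $c^{i+1/8+\ell}$. The input item(s) of $w$ pay at most their bids, which are at most $c^{r}$ times their values. Together with \Cref{eq:inequality-ROI-e}, or \Cref{eq:inequality-ROI-f} when $w$ is itself an NPURIFY input, this gives $P_w<(1-\gamma)V_w$, contradicting \Cref{eq:ROI-not-maximal-pacing}. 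Hence $\alpha_w\ge c^r$, so $w$ decodes to $1$.

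The second NPURIFY output is handled the same way, using the threshold $\alpha_u\le c^{\mu}$. The input bid is then $\alpha_u c^{i+1+1/32}\le c^{i+1+1/32+\mu}$, and \Cref{eq:inequality-ROI-g} closes the argument.

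\textbf{``Input high $\Rightarrow$ output $0$''.} Suppose $\alpha_u\ge c^{r}$, and for NOR suppose at least one input satisfies this. Assume $\alpha_w>c^{\ell}$. For the first block, the output bid $\alpha_w c^{i+1/2}$ exceeds $c^{i+1/2+\ell}=c^{i+11/16}$. We compare it with the input bid $\alpha_u c^{i+1/8}\ge c^{i+5/8}$. The plan is to show the output bid lies strictly above the input bid. Then all $k$ output items pay at least $c^{i+5/8}>v^i_{\mathrm{out}}$, and the excess $k(c^{i+5/8}-c^{i+1/2})$ outweighs the at most $O(c^{j+1+1/32})$ slack from the input items, using the estimates behind \Cref{eq:ROI-inequality-d}. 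That violates ROI feasibility.

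For NPURIFY we need two thresholds. If $\alpha_u\ge c^{\mu}$, then the second input bid $\alpha_u c^{i+1+1/32}\ge c^{i+1+15/32}$ forces any output $w$ bidding above it to pay more than its value, giving $\alpha_w\le c^{\ell}$. If instead $\alpha_u\le c^{\mu}$, the previous case makes $w$ decode to $1$. If $\alpha_u\ge c^{r}$, the first output decodes to $0$, and if $\alpha_u\le c^{\ell}$ it decodes to $1$. Because $\ell<\mu<r$, every value of $\alpha_u$ makes at least one output boolean. When the input is boolean, both outputs decode to its negation.

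\textbf{Main obstacle.} The delicate step is the high-input case when $\alpha_w$ exceeds $c^{\ell}$ only slightly. The output bid need not exceed the input bid, so the output items may pay a guard or only part of the input bid, and a tie at equal bids lets the output items pay less. The fix is to argue per allocation. Either some output items rank above the input item, in which case they pay at least the input bid, which exceeds their value. Or all of them rank below it, in which case the input item pays at least the output bid $>c^{i+11/16}$. We will verify that in either case $\alpha_w\le c^{\ell}$ follows, or, failing that, that the margin between $\ell$ and $r$ forces the contradiction through \Cref{eq:ROI-not-maximal-pacing}. If this cannot be closed cleanly, the fallback is to decode ``output high'' relative to $\alpha_u v_{\mathrm{in}}/v_{\mathrm{out}}$, exactly as in the ABE-B proof, and check that this ratio is at most $c^{\ell}$ under the chosen exponents.
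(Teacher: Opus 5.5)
Your ``input low'' arguments match the paper. There are two gaps: one \NPURIFY\ implication is false, and the high-input step for \NOT/\NOR\ is left unfinished.

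\textbf{The \NPURIFY\ pairing.} You claim that $\alpha_u\ge c^{\mu}$ forces the \emph{second} output $w$ to decode to $0$. Your reason is that output items bidding above the second input item pay at least $\alpha_u c^{i+1+1/32}\ge c^{i+1+15/32}$. But those items have value $v^{i+1}_{\mathrm{out}}=c^{i+1+16/32}$. That price is \emph{below} their value, so ROI feasibility gives no contradiction. As a sanity check, combine the claim with your correct implication $\alpha_u\le c^{\mu}\Rightarrow\alpha_w\ge c^{r}$. Then $w$ would be boolean for every $\alpha_u$, and at $\alpha_u=c^\mu$ you would need $c^r\le\alpha_w\le c^\ell$, which is impossible. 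Without the claim, your coverage argument has no support for $\bot$ inputs with $c^{\mu}<\alpha_u<c^{r}$, because for $v$ you only establish the thresholds $\ell$ and $r$.

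The middle threshold must be paired with the \emph{first} output. If $\alpha_u\ge c^{\mu}$, the first input bid is at least $c^{i+1/8+7/16}=c^{i+9/16}>v^i_{\mathrm{out}}$, which yields $\alpha_v\le c^{\ell}$. For the second output you only get $\alpha_u\ge c^{r}\Rightarrow\alpha_w\le c^{\ell}$, since then the second input bid is at least $c^{i+1+17/32}>v^{i+1}_{\mathrm{out}}$. With this pairing, $\alpha_u\ge c^{\mu}$ makes $v$ boolean and $\alpha_u\le c^{\mu}$ makes $w$ boolean. This is how the paper closes the gate.

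\textbf{The high-input step.} The ``main obstacle'' you describe does not exist, and you already have the tool to remove it. \Cref{lemma:value-range}, which you quote at the outset, gives $\alpha_u\le c^{1/2+1/128}$. So the level-$i$ input bid is at most $c^{i+81/128}$, while $\alpha_w>c^{\ell}$ makes the output bid exceed $c^{i+88/128}$. Hence the output bid is strictly above every corresponding input bid and no ties can arise. All $k$ output items then pay at least $\alpha_u v^i_{\mathrm{in}}\ge c^{i+5/8}$, and your excess estimate finishes the contradiction. The per-allocation case split is unnecessary.

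The fallback of redefining the decoding is not available, since the decoding is fixed by the reduction. The paper runs the same argument contrapositively. Assuming $\alpha_w v^i_{\mathrm{out}}>\alpha_u v^i_{\mathrm{in}}$ violates ROI feasibility, so $\alpha_w\le\alpha_u c^{-3/8}\le c^{1/8+1/128}\le c^{\ell}$ by the same upper bound on $\alpha_u$. The same bound gives strictness for the second \NPURIFY\ output, where the input bid is at most $c^{i+1+69/128}$ against an output bid above $c^{i+1+88/128}$.
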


\begin{proof}
  We prove correctness gate by gate; the lemma then follows from local
  correctness. 
  
  \paragraph{\NOT\ gate.} First, consider a \NOT\ gate $g(u, w)$ where $u$ is the input vertex and $w$ is the output vertex with $v^i_{\mathrm{in}} = c^{i + 1/8}$ and $v^i_{\mathrm{out}} = c^{i+1/2}$. When vertex~$w$ is not the input of
  an \NPURIFY\ gate elsewhere in the circuit, we show:
    \begin{itemize}
        \item[(a)] $\alpha_{u} \leq c^{\ell} \Rightarrow \alpha_{w} \geq c^r$: Suppose for contradiction that $\alpha_w < c^{r}$. For agent~$w$, its only other item has value \(c^{j+1/8}\). Then $\alpha_w v^i_{\mathrm{out}} < c^{i+1}$, which implies every bid at level $i+1$ lies above bid $\alpha_wv^i_{\mathrm{out}}$.
        Therefore $ P_w \le k\alpha_uv_{\mathrm{in}}^i +\alpha_wc^{j+1/8}$ and $V_w = kc^{i+1/2} + c^{j+1/8}$.
        Combining this with \Cref{eq:ROI-not-maximal-pacing} gives
        \[
        \alpha_w
        \ge \frac{(1-\gamma)(kc^{i+1/2}+c^{j+1/8}) -kc^{i+\ell+1/8} }{ c^{j+1/8}}
        > c^{r}
        \]
        where the last inequality holds because \Cref{eq:inequality-ROI-e}.
        This contradicts \(\alpha_w<c^{1/2}\). Thus, $\alpha_w \geq c^{r}$.
        \item[(b)] $\alpha_{u} \geq c^{r} \Rightarrow \alpha_{w} \leq c^\ell$: Suppose,
          for contradiction, that $\alpha_{w}v^i_{\mathrm{out}} >
          \alpha_{u}v^i_{\mathrm{in}}$. Every output item of $w$ then pays at least $\alpha_uv^i_{\mathrm{in}}$. Since agent~$w$'s total payment cannot
          exceed its total value,
          \[
          k\alpha_{u}v^i_{\mathrm{in}} + c^{j} \leq kv^i_{\mathrm{out}} + v^j_{in}
          \]
          and therefore
          \begin{align*}
          \alpha_u
          &\leq \frac{kc^{i + 1/2} + c^{j + 1/8}}{kc^{i + 1/8}}
          \leq c^{3/8} + \frac{c^{2n}}{k}
          \leq c^{3/8} + \frac{1}{256n} \leq c^{3/8} + \frac{1}{128(n+1)}\\
          &\leq c^{3/8}(1 + \frac{1}{128(n+1)}) \leq c^{3/8 + 1/128} <c^r
          \end{align*}
          which is a contradiction.
          It follows that $\alpha_wv^i_{\mathrm{out}} \leq \alpha_u v^i_{\mathrm{in}}$. Using \Cref{lemma:value-range},
          \[
            \alpha_w \leq \alpha_uc^{-3/8} \leq c^{1/2 + 1/128 - 3/8}\leq c^\ell.
          \]
    \end{itemize}
    When vertex $w$ is the input of an \NPURIFY\ gate, we show:
    \begin{itemize}
        \item[(a)] $\alpha_{u} \leq c^\ell \Rightarrow \alpha_{w} \geq
          c^r$: 
            Suppose for contradiction that $\alpha_w < c^{r}$. For agent~$w$, its other items have value $c^{j+1/8}$ and $c^{j+1+1/32}$. Then $\alpha_w v^i_{\mathrm{out}} < c^{i+1}$, which implies every bid at level $i+1$ lies above bid $\alpha_wv^i_{\mathrm{out}}$.
            Therefore $ P_w \le k\alpha_uv_{\mathrm{in}}^i +\alpha_wc^{j+1/8} + \alpha_wc^{j + 1 + 1/32}$ and $V_w = kc^{i+1/2} + c^{j+1/8} + c^{j+1+1/32}$.
            Combining this with \eqref{eq:ROI-not-maximal-pacing} gives
            \[
            \alpha_w \ge \frac{(1-\gamma)(kc^{i+1/2}+c^{j+1/8}+c^{j+1+1/32}) -kc^{i+\ell+1/8} }{ c^{j+1/8} + c^{j+1+1/32}} \geq c^{r},
            \]
            where the last inequality holds because \Cref{eq:inequality-ROI-f}. This contradicts \(\alpha_w<c^{1/2}\). Thus, $\alpha_w \geq c^{r}$.
        \item[(b)] $\alpha_{u} \geq c^r \Rightarrow \alpha_{w} \leq c^\ell$:
          Suppose, for contradiction, that $\alpha_{w}v^i_{\mathrm{out}} >
          \alpha_{u}v^i_{\mathrm{in}}$. Since agent~$w$'s total payment cannot
          exceed its total value,
          \[
          k\alpha_uv^i_{\mathrm{in}} + c^j + c^{j+1} \leq kv^i_{\mathrm{out}} + v^j_{\mathrm{in},1}  + v^{j+1}_{\mathrm{in},2}
          \]
          which implies
          \begin{align*}
          \alpha_u
          &\leq \frac{kc^{i + 1/2} + c^{j + 1/8} + c^{j + 1 + 1/32} - c^j - c^{j + 1}}{kc^{i + 1/8}} 
          \leq c^{3/8} + \frac{4c^{2n}}{k} 
          \leq c^{3/8} + \frac{1}{64n}\\
          &\leq c^{3/8}(1 + \frac{1}{64n}) \leq c^{3/8}(1 + \frac{1}{32(n+1)}) \leq c^{3/8 + 1/32} < c^r
          \end{align*}
          which is a contradiction.
          It follows that $\alpha_wv^i_{\mathrm{out}} \leq \alpha_u v^i_{\mathrm{in}}$. Using \Cref{lemma:value-range},
          \[
            \alpha_w \leq \alpha_uc^{-3/8} \leq c^{1/2 + 1/128 - 3/8}\leq c^\ell.
          \]
    \end{itemize}
    \paragraph{\NOR\ gate.} Consider a \NOR\ gate $g(u,v,w)$ where $w$ is the output vertex. Without loss of generality, assume $\alpha_u \geq \alpha_v$. The following two cases follow from the same analysis as for the \NOT\ gate, so we omit the redundant details.
    \begin{itemize}
    \item[(a)] $\alpha_u \leq c^\ell \land \alpha_v \leq c^\ell \Rightarrow \alpha_u \leq c^\ell \Rightarrow
      \alpha_w \geq c^r$  
    \item[(b)] $\alpha_u \geq c^r \lor \alpha_v \geq c^r \Rightarrow \alpha_u \geq c^r \Rightarrow \alpha_w \leq c^\ell$
    \end{itemize}

    \paragraph{\NPURIFY\ gate.}
    Consider an \NPURIFY\ gate $g(u,v,w)$ at level $i$ where $u$ is the input vertex. The corresponding values are $ v_{\mathrm{in},1}^i=c^{i+1/8}, v_{\mathrm{out}}^i=c^{i+1/2},$
    and
    $v_{\mathrm{in},2}^{i+1}=c^{i+1+1/32}, v_{\mathrm{out}}^{i+1}=c^{i+1+1/2}$.
    Since no two \NPURIFY\ gates are consecutive, neither \(v\) nor \(w\)
    is the input of another \(\NPURIFY\) gate.

    We first consider the output \(v\).

    \begin{itemize}
    \item[(a)] $\alpha_u \leq c^{\ell} \Rightarrow \alpha_{v} \geq c^r$: This follows from the same analysis as case~(a) of the \NOT\ gate.
    \item[(b)] $\alpha_{u} \geq c^{\mu} \Rightarrow \alpha_{v} \leq c^\ell$:
      If $\alpha_{v}v^i_{\mathrm{out}} > \alpha_{u}v^i_{\mathrm{in},1}$, ROI feasibility of $v$ gives $k\alpha_{u}v^i_{\mathrm{in},1} + c^{j} \leq kv^i_{\mathrm{out}} + v^j_{in}$
      which implies
      \begin{align*}
      \alpha_u 
      &\le \frac{kc^{i + 1/2} + c^{j + 1/8} - c^{j}}{kc^{i + 1/8}} < c^{3/8} + \frac{1}{256n} \\
      &\leq c^{\frac{3}{8} + \frac{1}{128}} < c^\mu,
      \end{align*}
      contradicting with $\alpha_u \geq c^\mu$. Consequently, $\alpha_v v^i_{\mathrm{out}} \leq \alpha_{u}v^i_{\mathrm{in},1}$.
      Therefore,
      \[
      \alpha_v \leq \alpha_uc^{-3/8} \leq c^{1/8 + 1/128} \leq c^\ell.
      \]
    \end{itemize}
    
    We now consider the second output \(w\).
    \begin{itemize}
    \item[(a)] $\alpha_{u} \leq c^{\mu} \Rightarrow \alpha_{w} \geq c^r$:
        Assume, for contradiction, that $\alpha_w < c^r = c^{1/2}$. Then $\alpha_w v^{i+1}_{\mathrm{out}} < c^{i+2}$, so all next level auxiliary bids lie above the output bid. Moreover,
        $\alpha_{u}v^{i+1}_{\mathrm{in},2} \leq c^{i + 1 + 1/32 + 7/16} <
        c^{i+1+1/2} = v^{i+1}_{out2}$. Every output item of $w$ therefore pays at most $\alpha_{u}v^{i+1}_{\mathrm{in},2}$. Since $w$ has one other input item, $P_w \leq k\alpha_{u}v^{i+1}_{\mathrm{in},2} + \alpha_{w}v^j_{\mathrm{in}}$. Combining this with \Cref{eq:ROI-not-maximal-pacing} gives a contradiction that
        \[
          \alpha_w \geq \frac{(1 - \gamma)(kc^{i+1+1/2} + c^{j+1/8}) - kc^{7/16+i+1+1/32}}{c^{j+1/8}}  \geq c^{1/2}
        \]
        where the last inequality holds because \Cref{eq:inequality-ROI-g}.
        Thus, $\alpha_w \geq c^r$.
        \item[(b)] $\alpha_{u} \geq c^{r} \Rightarrow \alpha_{w} \leq
          c^\ell$:
          Suppose, for contradiction, that $\alpha_{w}v^{i+1}_{\mathrm{out}}>
          \alpha_{u}v^{i+1}_{\mathrm{in},2}$. Since agent~$w$'s total payment
          cannot exceed its total value, we have
          \[
          k\alpha_uv^{i+1}_{\mathrm{in},2} + c^j \leq kv^{i+1}_{out} + v^j_{in},
          \]
          It follows that
          \begin{align*}
          \alpha_u &\leq \frac{kc^{i+1+1/2} + c^{j + 1/8} - c^{j}}{kc^{i+1+1/32}} < c^{15/32} + \frac{c^{3/32}}{256n} \\
          &< c^{15/32} + \frac{1}{128n} < c^{15/32}(1 + \frac{1}{64(n+1)}) \\
          &< c^{15/32 + 1/64} < c^{1/2} = c^{r}
          \end{align*}
          contradicting $\alpha_u \geq c^{r}$.
          Thus, with $\alpha_{u} \leq c^{1/128 + 1/2}$ from \Cref{lemma:value-range},
          \[
          \alpha_w \leq \frac{\alpha_uv^{i+1}_{\mathrm{in},2}}{v^{i+1}_{\mathrm{out}}} \leq c^{1/32 + 1/128}
          \leq c^\ell.
          \]
    \end{itemize}
    If \(\mathbf{x}[u]=0\), then \(\alpha_u\le c^\ell<c^\mu\) and \( \mathbf{x}[v]=\mathbf{x}[w]=1\). If \(\mathbf{x}[u]=1\), then \(\alpha_u\ge c^r>c^\mu\) and \(\mathbf{x}[v]=\mathbf{x}[w]=0\). If \(\mathbf{x}[u]=\bot\), then either \(\alpha_u\ge c^\mu\), in which case \(\mathbf{x}[v]=0\), or \(\alpha_u\le c^\mu\), in which case \(\mathbf{x}[w]=1\). Thus at least one output is boolean.
    
    Therefore all \NOT, \NOR, and \NPURIFY constraints are
    satisfied, and the reduction is correct.
    
\end{proof}

\section{Conclusion}
We studied the computational complexity of autobidding equilibria
in GSP sponsored shopping auctions under both budget and ROI
constraints. For both settings, we establish \FIXP-membership
for exact equilibrium computation and \PPAD-completeness for
computing approximate equilibria at inverse-polynomial precision.
Our construction also guarantees the existence of equilibria
supported on polynomially many deterministic allocations.
We obtain this support guarantee by applying quantile coupling
to legal insertion marginals within PL pseudo-circuits.
We expect this approach to be useful in other equilibrium
and allocation problems where local choice distributions
can be coupled into globally feasible outcomes. 
Several questions about equilibrium computation remain open
in both settings.

First, can the lower bound for exact equilibrium computation be
strengthened to \FIXP-hardness? Together with our membership
results, this would establish \FIXP-completeness and settle the
complexity of exact equilibrium computation.
A potentially more accessible first step is to establish
\textsc{Square-Root-Sum}-hardness for an associated exact
equilibrium comparison problem.
Our irrational examples in both settings produce equilibrium
multipliers involving square roots. However, these examples
alone do not provide such a reduction. We do not know how to
parameterize and combine the constructions so that every
equilibrium encodes the required comparison of a sum of square
roots.

Second, which structural restrictions permit efficient
equilibrium computation?
Our reductions use growing numbers of bidders, items, and
positions with positive click-through rates.
They therefore leave room for algorithms when, for example,
the number of bidders, the number of items per bidder, or the
number of positive-CTR positions is fixed.
We conjecture that suitable restrictions of this kind restore
polynomial-time tractability.
Recent results of Yan et al.~\cite{yan2026pacing} and
Huang et al.~\cite{huang2026pacing} establish polynomial-time
algorithms for exact pacing equilibria in second-price auction
markets with a constant number of buyers and goods, respectively.
Their approaches suggest promising directions for sponsored
shopping, however, such algorithms should accommodate
algebraic outputs for exact computation, 
since rational equilibria need not exist.

Finally, does computing a $\gamma$-approximate equilibrium remain
\PPAD-hard for some constant $\gamma>0$?
Our reductions place gates in separate bid intervals whose
multiplicative widths are $c=1+\Theta(1/n)$.
This keeps the overall range $c^{O(n)}$ bounded, but the resulting
local gaps are only $\Theta(1/n)$, limiting the approximation
error tolerated by the construction.
Simply replacing $c$ by a constant greater than one would create
an exponential range of values and, under our current
amplification argument, require exponentially many items.
Thus, obtaining hardness for constant approximation would
require ideas beyond widening the intervals in the existing
reductions.
On the algorithmic side,
our inverse-polynomial hardness rules out FPTAS unless all problems in
\PPAD\ admit polynomial-time algorithms. Whether either setting admits
a PTAS remains open.

\section*{AI Usage Disclosure}
An earlier version of this paper, including all results
and proofs presented in that version, was developed entirely
by the authors without AI assistance.
Its allocation construction used doubly stochastic matrices
to obtain the support of randomized allocations and required
each bidder's items to have pairwise distinct values.

During subsequent revisions, discussions with ChatGPT-6 Astra
helped us recognize that quantile coupling could be incorporated
into our algebraic-circuit construction, leading to an approach
that avoids this restriction.
We regard this connection as a nontrivial methodological
insight and explicitly acknowledge the tool's role in
bringing it to our attention.

The authors developed and verified the resulting proofs
and take full responsibility for all mathematical claims
in the paper.
During these revisions, the tool was also used to assist
with drafting and revising the exposition and the \LaTeX\  source.

\section{Acknowledgements}
Kristoffer Arnsfelt Hansen and Xinhao Nie were supported by the Independent Research Fund Denmark under grant no.~10.46540/4283-00272B, and were supported in part by a grant of access to OpenAI models through the ChatGPT for Academic Researchers program. Yuhao Li was supported by NSF grants 2106429 and 2107187, and a grant from the AI Security Institute.





\bibliographystyle{alpha}
\bibliography{bib}

\newpage
\appendix

\section{Remarks on Multiplier Cap}
\label[appendix]{app:multiplier-cap}

The multiplier cap $A$ is used to give a bounded representation of an equilibrium, rather than as an additional economic parameter of the auction. This interpretation is already reflected in earlier work on ROI-constrained autobidding. Li and Tang explain that the cap makes the strategy space compact and observe that, for sufficiently large $A$, capped equilibria admit an equivalent uncapped interpretation in \cite{Li2024}. Similarly, Dütting et al. \cite{dutting2026autobidding} take $A$ to be sufficiently large relative to the instance parameters. This qualitative convention is natural when studying equilibrium existence and efficiency. For the computational problem considered here, however, the phrase “sufficiently large” requires a more explicit formulation.

In particular, when the input consists of the auction itself, we must specify how a suitable $A$ is determined from the instance, bound its encoding length, and clarify how the output should be interpreted when $A$ is increased. Merely fixing an unspecified sufficiently large $A$ leaves implicit whether the task is to compute an equilibrium for one particular cap or to produce a single representation that remains valid for all sufficiently large caps. Our intended task is the latter, that is, the allocation distribution and all non-endpoint multipliers remain fixed, while multipliers at the upper endpoint are interpreted as the current cap.

Given an auction $G$, fix the sufficiently large $A$ as
\[
A_0 = 1 + \frac{\sum_{i,j} v_{i,j}}{\min_{i,j} v_{i,j}}.
\]
The computational task is to output an equilibrium $(\alpha^{(A_0)}, \pi)$ under $A_0$.
The solution $(\alpha^{(A_0)}, \pi)$ actually represents a family of equilibria $\{(\alpha^{(A)}, \pi)\}$ indexed by all $A \geq A_0$, where
\[
\alpha_i^{(A)} =
	\left\{ \begin{array}{ll}
	\alpha_i^{(A_0)} & \text{$\alpha_i^{(A_0)} < A_0$},\\
	A & \text{$\alpha_i^{(A_0)} = A_0$}.
	\end{array}
	\right.
\]

\begin{lemma}
    For each $A\in [A_0, \infty)$, $(\alpha^{(A)}, \pi)$ is an equilibrium with the same approximation parameters 
    if $(\alpha^{(A_0)}, \pi)$ is such an equilibrium. Thus, computing one equilibrium suffices to obtain 
    equilibria valid for all sufficiently large $A$.
\end{lemma}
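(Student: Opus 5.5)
The plan is to verify the three conditions of the (approximate) ABE-R definition for $(\alpha^{(A)},\pi)$, given that they hold for $(\alpha^{(A_0)},\pi)$. The key structural claim is that raising the multipliers equal to $A_0$ up to $A$ does not change the set of bid-consistent orderings that matter, nor any bidder's expected value or payment. Write $S=\{i:\alpha^{(A_0)}_i=A_0\}$ for the capped bidders.

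\textbf{Step 1 (separation of bids).} I will show that every item of a bidder in $S$ bids strictly above every item of a bidder outside $S$ under $\alpha^{(A_0)}$, and hence also under $\alpha^{(A)}$. A bidder $i\notin S$ has $\alpha_i<A_0$, so I first bound $\alpha_i$ more sharply. If $\alpha_i>1$, then bidder $i$ is not at the lower endpoint; I plan to use ROI feasibility to argue that each of its items must have a per-click price at most about its value. The natural route is to show that a non-capped bidder's highest bid is below $\sum v$. Otherwise some item would pay more than the bidder's total value, unless it is ranked below all other bidders' items. That exceptional case needs care: an item ranked last among other bidders pays $0$.

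An alternative I find cleaner is to avoid full separation and argue directly. A bidder in $S$ bids at least $A_0 v_{\min} > \sum_{i,j} v_{ij}$ on every item. Any non-$S$ bidder whose bid exceeds $\sum v$ on an item that is priced by an $S$ bid would violate ROI. I expect this case analysis to be the main obstacle. It requires showing that no allocation in the support of $\pi$ places a non-$S$ item above some $S$ item at a tie, or in an order that becomes invalid once $S$ bids are scaled up. Concretely, I must rule out a non-$S$ bid that is at least some $S$ bid.

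To handle this, I will suppose a non-$S$ bidder $i'$ has an item at or above an $S$ item of bidder $i$ in some supported allocation. Then $\alpha_{i'}v_{i'j'}\ge A_0 v_{ij}>\sum v$. Either that item, or another item of $i'$ positioned above $i$'s item, pays at least $A_0 v_{ij}$ with positive probability. This would exceed $V_{i'}\le\sum v$ in expectation, unless the probability is tiny. The probability issue is exactly where I expect difficulty, and I would try to use pointwise arguments (as in \Cref{lemma:multiplicity}) by considering the lowest such item.

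\textbf{Step 2 (invariance).} Given separation, multiplying all $S$ bids by the common factor $A/A_0\ge 1$ preserves the relative order within $S$, since a common scaling preserves all comparisons including ties. It also keeps $S$ bids above non-$S$ bids, so every allocation in the support of $\pi$ remains valid. Values are unchanged. For payments, consider first a non-$S$ item. Separation means the next lower other-bidder bid comes from a non-$S$ item, so its price is unchanged. For an $S$ item, the price is either an $S$ bid, which is scaled by $A/A_0$, or a non-$S$ bid, which is unchanged.

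\textbf{Step 3 (verify conditions).} Non-$S$ bidders keep identical $V_i$ and $P_i$, so (b) and (c') carry over. For $S$ bidders, maximal pacing holds trivially since $\alpha_i=A$. Feasibility, however, needs $P_i\le V_i$ after the scaling. Here I would show that, in the original equilibrium, an $S$ item priced by another $S$ bid pays more than $\sum v$ with positive probability, which contradicts ROI. Hence $S$ items are priced only by non-$S$ bids or $0$, and their prices are unchanged. This again follows from $A_0 v_{\min}>\sum v\ge V_i$.
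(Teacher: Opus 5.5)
Your proposal has a genuine gap, and it is the one you flag yourself. Two claims carry the argument. Step~1 needs that no non-$S$ item sits at or above an $S$ item, and Step~3 needs that no $S$ item is priced by another $S$ bid. You argue both from a \emph{single} bidder's ROI constraint, via the assertion that some item ``pays more than $\sum v$ with positive probability.'' That does not contradict $P_{i'}(\pi,\mathbf b)\le V_{i'}(\pi)$, for two reasons. First, the offending event may have small probability. Second, the payment is $c_k\tau$, not $\tau$. If the item sits in a low slot, or beyond $K$ where $c_k=0$, a per-click price above $\sum_{i,j}v_{ij}$ need not produce a large payment. Taking the ``lowest such item'' makes this worse, since that item is the likeliest to have small CTR. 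Without these claims Step~2 is unsupported: if $|S|\ge 2$, scaling the $S$ bids by $A/A_0$ raises the prices $S$ bidders pay one another and can break ROI feasibility.

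The missing idea is to \emph{sum} the ROI constraints over all bidders and look at the top slot. Suppose two distinct bidders each have some item bidding strictly more than $\sum_{i,j}v_{ij}$ under $\mathbf b^{(A_0)}$. Then in every allocation $a$ in the support of $\pi$, the item in slot~$1$ has such a bid from a different bidder below it. That item alone pays more than $c_1\sum_{i,j}v_{ij}$, which is at least $\sum_i V_i(a)$. Since all other payments are nonnegative, $\sum_i P_i(a,\mathbf b)>\sum_i V_i(a)$ pointwise, hence in expectation, contradicting the summed ROI constraints.

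Every capped bidder bids at least $A_0\min_{i,j}v_{ij}=\min_{i,j}v_{ij}+\sum_{i,j}v_{ij}>\sum_{i,j}v_{ij}$ on every item. So this one argument gives $|S|\le 1$, and when $S\neq\emptyset$, every other bidder bids at most $\sum_{i,j}v_{ij}$ on every item. That is the strict separation your Step~1 was aiming for. The unique capped bidder is then priced only by other bidders' unchanged bids, because there is no self-pricing, and it never prices anyone, because its items are above all others. So every value and payment is invariant when its multiplier is raised to $A$, and your Steps~2 and~3 then go through.
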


\begin{proof}
We first show that, under $\mathbf b^{(A_0)}$, at most
one bidder can bid strictly greater than $\sum_{i,j} v_{ij}$.
Otherwise, in every allocation in the support of $\pi$,
the item occupying the first slot would have an item
of another bidder with bid greater than $\sum_{i,j} v_{ij}$ below it.
Its GSP payment would therefore exceed $c_1 \sum_{i,j}v_{ij}$.
Since all payments are nonnegative, this would imply
\[
\sum_{i\in[n]}P_i(\pi,\mathbf b^{(A_0)})
>c_1 \sum_{i,j} v_{ij}
\geq \sum_{i\in[n]}V_i(\pi),
\]
contradicting ROI feasibility.

Now fix $A\geq A_0$.
If no bidder has multiplier $A_0$, then
$\mathbf b^{(A)}=\mathbf b^{(A_0)}$ and equilibria remains.
Otherwise, let $i^\star$ be a bidder whose multiplier
equals $A_0$. Every item of this bidder has bid at least
$A_0 v_{i^\star}> \sum_{i,j} v_{ij}$. By the preceding argument, $i^\star$ is the unique bidder whose
multiplier equals $A_0$, and all its items precede all
other bidders' items in every allocation in the support
of $\pi$.
Increasing the multiplier of $i^\star$ to $A$ preserves
these orders, including the order among its own items.
Thus, $\pi$ remains bid-consistent.
Moreover, the prices paid by $i^\star$ depend only on
the unchanged bids of other bidders and then approximately maximal pacing holds also.
Therefore, $(\alpha^{(A)}, \pi)$ still meets the equilibrium condition.    
\end{proof}

Our membership constructions apply directly 
with $A_0$ which is polynomial-time computable. It also imposes an additional requirement on our hardness reductions. 
In particular, we cannot rely on a prescribed
value of $A$ to calibrate the auction parameters or to separate bids associated with different gadgets. 
Instead, the multiplier bounds needed for such separation must be derived from the equilibrium conditions of the constructed 
auction itself. This rules out the convenient use of a fixed cap to control interactions between gadgets and makes the 
hardness proof technically more demanding.

\end{document}